\documentclass[reqno,oneside]{amsart}
\usepackage{graphicx} 

\usepackage[utf8]{inputenc}
\usepackage[super]{nth}
\usepackage{xcolor}
\usepackage{comment,fancyhdr,tikz,amsmath,amssymb,amsthm,mathtools,centernot,caption,float,graphicx,makecell,array,float,changepage,verbatim,textcomp,parallel,geometry,ragged2e,xcolor,lastpage,bbm,mathrsfs,pgfkeys,mathabx}
\newcommand\join\vee
\newcommand\meet\wedge

\usepackage[
    cal=boondoxo,
    scr=boondox,
    frak=euler,
    bb=ncmbbr]{mathalpha}
\usepackage{bm}

    \usepackage[shortlabels]{enumitem}

\usepackage[export]{adjustbox}
\usepackage[
    colorlinks=false,
    citebordercolor=green,
    linkbordercolor=red
    ]{hyperref}

    \usepackage[capitalise]{cleveref}
    \crefformat{equation}{(#2#1#3)}
    \crefformat{enumi}{(#2#1#3)}
    \crefformat{section}{\S#2#1#3}
    \crefformat{subsection}{\S#2#1#3}
    \crefformat{subsubsection}{\S#2#1#3}
    \numberwithin{equation}{section}
    \newcommand{\crefdefpart}[2]{%
        \hyperref[#2]{\namecref{#1}~\labelcref*{#1}~\ref*{#2}}}
    
    \let\etoolboxforlistloop\forlistloop 
    \usepackage{autonum}
    \let\forlistloop\etoolboxforlistloop 
    \makeatletter
    \newcommand{\blx@noerroretextools}{}
    \makeatother
    
    \usepackage[
        backend=biber,
        style=alphabetic,
        maxbibnames=50,
        maxalphanames=50,
        maxcitenames=50
        ]{biblatex}
    \newcommand{\R}{\mathbb{R}}
    \newcommand{\C}{\mathbb{C}}
    
    \def\sph^#1{\mathbb S^{#1}}
    \newcommand\p[1]{\left(#1\right)}
    \newcommand\abs[1]{\left|#1\right|}

    \newcommand\jp[1]{\left\langle#1\right\rangle}
    
    \newcommand\eps\varepsilon
    \newcommand\pji\varphi
    \newcommand\sig\varsigma
    
    \newcommand\ui{[0,1]}
    \newcommand\beq{\begin{equation}}
    \newcommand\eeq{\end{equation}}
    \newcommand\inv{^{-1}{}}

    \def\XXint#1#2#3{{\setbox0=\hbox{$#1{#2#3}{\int}$ }
    \vcenter{\hbox{$#2#3$ }}\kern-.6\wd0}}

    \newtheoremstyle{thmst}
        {5pt}
        {5pt}
        {}
        {}
        {\bfseries}
        {.}
        {.5em}
        {}%
    \newtheoremstyle{rmkst}
        {5pt}
        {5pt}
        {}
        {}
        {\bfseries}
        {.}
        {.5em}
        {}%
    \newtheoremstyle{lmst}
        {5pt}
        {5pt}
        {}
        {}
        {\bfseries}
        {.}
        {5pt plus 1pt minus 1pt}
        {}%
    \newtheoremstyle{prpst}
        {5pt}
        {5pt}
        {}
        {}
        {\bfseries}
        {.}
        {.5em}
        {}%
    \newtheoremstyle{defst}
        {5pt}
        {5pt}
        {}
        {}
        {\bfseries}
        {.}
        {.5em}
        {}%
    \newtheoremstyle{wlst}
        {5pt}
        {5pt}
        {}
        {}
        {\bfseries}
        {.}
        {.5em}
        {}%
    \newtheoremstyle{asmpst}
        {5pt}
        {5pt}
        {}
        {}
        {\bfseries}
        {.}
        {.5em}
        {}%
        
    \theoremstyle{thmst}
        \newtheorem{theorem}{Theorem}[section]
        \newtheorem{conjecture}[theorem]{Conjecture}

        \crefname{conjecture}{Conjecture}{Conjectures}
    \theoremstyle{rmkst}
        \newtheorem{remark}[theorem]{Remark}
    \theoremstyle{defst}
        \newtheorem{definition}[theorem]{Definition}
    \theoremstyle{defst}
        
    \theoremstyle{lmst}
        \newtheorem{lemma}[theorem]{Lemma}
        
    \theoremstyle{prpst}
        \newtheorem{proposition}[theorem]{Proposition}
    \theoremstyle{wlst}
        
    \theoremstyle{asmpst}
        
    \theoremstyle{thmst}
        \newtheorem{corollary}[theorem]{Corollary}

        \newlist{defenum}{enumerate}{1} 
        \setlist[defenum]{label=(\roman*),ref=\thedefinition\,(\roman*)}
        \crefname{defenumi}{Definition}{Definitions}
        
        \newlist{lemenum}{enumerate}{1} 
        \setlist[lemenum]{label=(\roman*),ref=\thelemma\,(\roman*)}
        \crefname{lemenumi}{Lemma}{Lemmas}
        
        \newlist{corenum}{enumerate}{1} 
        \setlist[corenum]{label=(\roman*),ref=\thecorollary\,(\roman*)}
        \crefname{corenumi}{Corollary}{Corollaries}
        
        \newlist{thmenum}{enumerate}{1} 
        \setlist[thmenum]{label=(\roman*),ref=\thetheorem\,(\roman*)}
        \crefname{thmenumi}{Theorem}{Theorems}
        
        \newlist{rmkenum}{enumerate}{1} 
        \setlist[rmkenum]{label=(\roman*),ref=\theremark\,(\roman*)}
        \crefname{rmkenumi}{Remark}{Remarks}
        
        \newlist{prpenum}{enumerate}{1} 
        \setlist[prpenum]{label=(\roman*),ref=\theproposition\,(\roman*)}
        \crefname{prpenumi}{Proposition}{Propositions}
        
        \newlist{axenum}{enumerate}{1} 
        \setlist[axenum]{label=(\roman*),ref=\theaxiom\,(\roman*)}
        \crefname{axenumi}{Axiom}{Axioms}
        
        \newlist{defcrit}{enumerate}{1} 
        \setlist[defcrit]{label=(\roman*),ref=\thedefinition\,(\roman*)}
        \crefname{defcriti}{Definition}{Definitions}
        
        \newlist{lemcrit}{enumerate}{1} 
        \setlist[lemcrit]{label=(\alph*),ref=\thelemma\,(\alph*)}
        \crefname{lemcriti}{Lemma}{Lemmas}
        
        \newlist{thmcrit}{enumerate}{1} 
        \setlist[thmcrit]{label=(\alph*),ref=\thetheorem\,(\alph*)}
        \crefname{thmcriti}{theorem}{theorems}
        
        \newlist{rmkcrit}{enumerate}{1} 
        \setlist[rmkcrit]{label=(\alph*),ref=\theremark\,(\alph*)}
        \crefname{rmkcriti}{remark}{remarks}
        
        \newlist{prpcrit}{enumerate}{1} 
        \setlist[prpcrit]{label=(\alph*),ref=\theproposition\,(\alph*)}
        \crefname{prpcriti}{proposition}{propositions}

        \makeatletter
        \let\ifnc\@ifnextchar
        \makeatother

            \reversemarginpar
            \def\redit {\marginpar{\raggedleft{See edit $\implies$}}\color{red}}
            \def\rs#1.{\redit #1.\color{black}}
            \def\rsm#1.{{\color{red} #1.}}
            
            \def\bedit {\marginpar{\raggedleft{See edit $\implies$}}\color{blue}}
            \def\bs#1.{\bedit #1.\color{black}}
            \def\bsm#1.{{\color{blue} #1.}}

            \def\ind#1_#2{\left\{#1_{#2}\right\}}
            \def\<#1>{\jp{#1}}
            \def\-#1/{{}_{#1}}
            \makeatletter
                \newcommand\pl@write[3]{%
                    \left\|#1\right\|_{L^{#2}#3}%
                    }
                \def\pl #1__#2{%
                    \def\pl@arg@i{#1}%
                    \def\pl@arg@ii{#2}%
                    \def\pl@arg@iii{\alpha}%
                    \futurelet\next\pl@eval%
                    }
                \def\pl@eval{%
                    \ifx\next\bgroup%
                            \expandafter\pl@eval@iii%
                        \else%
                            \expandafter\pl@eval@ii%
                        \fi%
                    }
                \def\pl@eval@iii#1{%
                    \pl@write\pl@arg@i\pl@arg@ii{\p{#1}}%
                    }
                \def\pl@eval@ii{%
                    \pl@write\pl@arg@i\pl@arg@ii{}%
                    }
            \makeatother
            \makeatletter
                \newcommand\ef@write[3]{%
                    ^{#1\frac{#2}{#3}}{}%
                    }
                \newcommand\Ef@write[3]{%
                    ^{#1#2/#3}{}%
                    }
                \def\ef #1{%
                    \def\ef@arg@i{}%
                    \def\ef@arg@ii{#1}%
                    \futurelet\next\ef@eval%
                    }
                \def\ief #1{%
                    \def\ef@arg@i{-}%
                    \def\ef@arg@ii{#1}%
                    \futurelet\next\ef@eval%
                    }
                \def\ef@eval{%
                    \ifx\next/%
                            \expandafter\ef@eval@iii%
                        \else%
                            \expandafter\ef@eval@ii%
                        \fi%
                    }
                \def\ef@eval@iii/{%
                    \expandafter\ef@eval@v%
                    }
                \def\ef@eval@v#1{%
                    \Ef@write\ef@arg@i\ef@arg@ii{#1}}
                \def\ef@eval@ii{%
                    \expandafter\ef@eval@iv%
                    }
                \def\ef@eval@iv#1{%
                    \ef@write\ef@arg@i\ef@arg@ii{#1}%
                    }
            \def\e@writep #1{%
                ^{+#1}{}%
                }
            \def\e@writen #1{%
                ^{-#1}{}%
                }
            \def\e #1{%
                \ifx#1-%
                        \expandafter\e@writen%
                    \else%
                        \ifx#1+%
                                \expandafter\e@writep%
                            \else%
                                ^{#1}{}%
                            \fi%
                    \fi%
                }
            \def\e@2 {%
                }
            \newcommand\ie[1]{^{-#1}{}}

            \makeatother
            \makeatletter
            \newcommand{\undersim}[1]{\mathrel{\mathpalette\@undersim{#1}}}
            \newcommand{\@undersim}[2]{%
              \vcenter{%
                \ialign{%
                  ##\cr
                  $\m@th#1#2$\cr
                  \noalign{\nointerlineskip\kern.2ex}
                  $\m@th#1\sim$\cr
                  \noalign{\kern-.4ex}
                }%
              }%
            }
            \makeatother

    \makeatletter
    
    \def\col #1{%
        \def\col@arg{#1}%
        \begin{bmatrix}\col@i\end{bmatrix}%
        }
    \def\col@i{%
        \expandafter\col@ii\col@arg,\col@end%
        }
    \def\col@ii #1,{%
        #1%
        \futurelet\next\col@iii%
        }
    \def\col@iii{%
        \ifx\next\col@end%
                {}%
            \else%
                \\\col@ii%
            \fi%
        }
    \def\col@end{%
        {}%
        }
    \makeatother
\makeatletter\@ifundefined{remark}{\theoremstyle{remark}\newtheorem*{remark}{Remark}}{}\makeatother

\newcommand{\Qsq}{\mathcal Q} 
\newcommand{\Hsp}{\mathfrak H}          
\newcommand{\Haut}{\Hsp_{\mathrm{aut}}} 
\newcommand{\Hinf}{\Hsp^{\infty}}       
\newcommand{\Csp}{\mathfrak C}          
\newcommand{\Fsp}{\mathfrak F}          
\newcommand{\dom}{\operatorname{dom}}
\newenvironment{restatedlemma}[1]{\renewcommand\thetheorem{\ref*{#1}}\renewcommand\theHtheorem{restated.#1}\begin{lemma}}{\end{lemma}\addtocounter{theorem}{-1}}
\title[A counterexample to the {Ehlers--Kundt} conjecture]{On the completeness of gravitational pp-wave spacetimes: A counterexample to the {Ehlers--Kundt} conjecture}

\author{Hannah Cairo}
\date{September 2026}
\begin{document}
    \let\startmathold\[
    \let\endmathold\]
    \def\[{\futurelet\next\testingforaline}
    \def\aline{hello world}
    \def\testingforaline{\ifx\next\aline\expandafter\alining\else\startmathold\fi}
    \def\alining\aline#1\]{\begin{align}#1\end{align}}
\begin{abstract}
    This paper concerns the geometry of gravitational pp-wave spacetimes. We give a counterexample to the Ehlers--Kundt conjecture. In fact, we show that geodesic completeness is generic, in the sense of Baire category, among gravitational pp-waves. We also define a concept of ``universal pp-wave'', i.e. a pp-wave whose metric approximates that of every other gravitational pp-wave arbitrarily closely on arbitrarily large compact sets, and we show that there are geodesically complete universal pp-waves. Our machinery also suffices to give a new proof of the polynomial case of the Ehlers--Kundt conjecture (previously proved by Flores and S\'anchez).
\end{abstract}
\maketitle
\setcounter{tocdepth}{1}
\tableofcontents
\setcounter{tocdepth}{2}
\section{Introduction}\label{s-introduction}
This paper concerns the Ehlers--Kundt conjecture, a hypothetical classification of the geodesic completeness of gravitational pp-wave spacetimes:
    \begin{conjecture}\label{ek-conjecture}
        Any gravitational pp-wave that is geodesically complete is isometric to a plane wave spacetime.
    \end{conjecture}
A pp-wave, i.e. a ``plane-fronted wave with parallel rays'', is a spacetime $M=(\R^4,g)$ 
so that in some coordinates $(u,v,x,y)$, we have
    \[g=2dudv-H(u,x,y)du^2+dx^2+dy^2\]
Such spacetimes were first studied by Brinkmann (\cite{Brinkmann_1925}) and model radiation traveling at the speed of light. When $H(u,x,y)$ is harmonic in $(x,y)$, the spacetime is Ricci-flat and satisfies the vacuum Einstein field equations; such spacetimes are known as \textit{gravitational pp-waves} since they model gravitational waves propagating through spacetime mostly devoid of matter. When $H(u,x,y)$ is quadratic in $(x,y)$, it is called a plane-wave spacetime and admits many symmetries.
 
The reader is encouraged to read the article \cite{Roche_2023} for a thorough discussion of the history behind pp-waves and the Ehlers--Kundt conjecture. The completeness of gravitational pp-waves has important physical applications. As Flores and S\'anchez put it (\cite[Section~1.1]{Flores_2020}, summarizing the discussion in \cite{Ehlers_Kundt_1962}), a counterexample to \cref{ek-conjecture} would represent a graviton field ``independent of any matter by which it would be generated''.
 This is because completeness prevents the existence of any singularities (which would lie outside the spacetime), and the Ricci-flat condition implies that it has no internal material source either.

\subsection{Main results}\label{ss-main-results}
The primary result of this paper is a family of counterexamples to \cref{ek-conjecture}. That is, there are geodesically complete gravitational pp-waves which are not plane waves:
    \begin{theorem}[Counterexample to the Ehlers--Kundt conjecture]\label{t-counterexample}~\\
        There exists a geodesically complete gravitational pp-wave which is not a plane wave. More precisely, there exists an entire function $\mathcal f:\C\to\C$ which is not a polynomial of degree $\leq2$, so that the gravitational pp-wave
            \[g=2dudv+2\Re\p{\mathcal f(x+iy)}du^2+dx^2+dy^2\]
        (i.e. $H(u,x,y)=-2\Re\p{\mathcal f(x+iy)}$) is geodesically complete. In particular, \cref{ek-conjecture} is false.
        \end{theorem}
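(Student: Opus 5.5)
The plan is to reduce completeness to a planar ODE and then obtain $\mathcal f$ by a Baire category argument in the space $\mathcal O(\C)$ of entire functions with the topology of locally uniform convergence. This space is a Fréchet space, so the Baire category theorem applies to it.

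\textbf{Step 1 (reduction).} For $g=2dudv+2\Re(\mathcal f(z))du^2+dx^2+dy^2$ with $z=x+iy$, the geodesic equations decouple as follows.
\begin{itemize}
\item The $u$-equation gives $u''=0$, so $u(s)=as+b$.
\item Writing $\partial_z=\tfrac12(\partial_x-i\partial_y)$, we have $\nabla\Re\mathcal f=\overline{\mathcal f'(z)}$. The transverse equation becomes $z''=a^2\,\overline{\mathcal f'(z)}$, up to a harmless constant factor that depends on sign conventions.
\item $v$ is then obtained from the linear equation $v''=-\tfrac{d}{ds}\big(\Re\mathcal f(z(s))\big)\,a$, or equivalently by integrating the null or normalization constraint. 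So $v$ exists as long as $z$ does.
\end{itemize}
If $a=0$ the geodesic is affine, hence complete. If $a\neq0$, rescaling the parameter reduces to $a=1$. So $g$ is geodesically complete if and only if every solution of the Newtonian system
\[
z''=\overline{\mathcal f'(z)}
\]
(potential $-\Re\mathcal f$, conserved energy $\tfrac12|z'|^2-\Re\mathcal f(z)$) exists for all $s\in\R$.

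\textbf{Step 2 (Baire setup).} For $R,T\in\mathbb N$, let $\mathcal U_{R,T}$ be the set of $\mathcal f\in\mathcal O(\C)$ such that every solution with $|z(0)|,|z'(0)|\le R$ exists on $[-T,T]$. I expect $\mathcal U_{R,T}$ to be open, by the following argument.
\begin{itemize}
\item If $\mathcal f\in\mathcal U_{R,T}$, then compactness of the set of initial data and continuous dependence give a disk $D_\rho$ containing all these trajectories for $|s|\le T$.
\item Continuous dependence on the vector field, which is controlled by $\mathcal f'$ on $D_{\rho+1}$, then shows the same holds for every $\tilde{\mathcal f}$ uniformly close to $\mathcal f$ on $D_{\rho+2}$. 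Closeness of $\mathcal f'$ follows from closeness of $\mathcal f$ by Cauchy estimates.
\end{itemize}
The intersection $\bigcap_{R,T}\mathcal U_{R,T}$ is exactly the set of $\mathcal f$ giving complete pp-waves. The set $P_2$ of polynomials of degree $\le2$ is a finite-dimensional, hence closed and nowhere dense, subspace. So once each $\mathcal U_{R,T}$ is shown dense, the Baire category theorem yields a residual set of complete $\mathcal f$, and it meets $\mathcal O(\C)\setminus P_2$. This gives \cref{t-counterexample}.

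\textbf{Step 3 (density, the main obstacle).} Fix $\mathcal f$, a compact set $K=\overline{D_r}$, $\varepsilon>0$, and $R,T$. We must find $\tilde{\mathcal f}$ with $|\tilde{\mathcal f}-\mathcal f|<\varepsilon$ on $K$ and $\tilde{\mathcal f}\in\mathcal U_{R,T}$. Since $\Re\tilde{\mathcal f}$ is harmonic, the maximum principle rules out a confining potential. The mechanism must therefore be dynamical: trajectories that leave a large disk $D_\rho\supset K$ must be prevented from reaching infinity within the remaining time.

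The first thing I would try is to take $\tilde{\mathcal f}=\mathcal f+h$, where $h$ is tiny on $D_\rho$. The function $h$ would be built by Runge/Arakelian approximation on a closed set with connected complement, namely $D_\rho$ together with a family of thin sectors or strips. On that set $h$ approximates model functions whose gradient fields deflect outgoing particles, for example saddle-type terms $-M e^{i\theta_k}z^2$ adapted to each exit direction, or rapidly oscillating terms whose averaged effect slows escape.

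The energy bound $\tfrac12|z'|^2\le E+\Re\tilde{\mathcal f}$ controls speeds wherever $\Re\tilde{\mathcal f}$ is controlled. So the quantitative goal is this: every trajectory entering the region $\{|z|\ge\rho\}$ with bounded data spends more than time $T$ before escaping any fixed larger disk. Finite data at the entry point come from compactness on $D_\rho$. The difficulty lies entirely in arranging this against the unavoidable regions where $\Re\tilde{\mathcal f}$ is large. I expect this construction, making precise a barrier or slow-down built from harmonic functions, to be the heart of the proof.
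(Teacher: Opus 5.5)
Your Steps 1 and 2 are correct and are essentially the paper's own reduction (\cref{t-reduction}) and Baire argument (\cref{t-baire}, with openness from \cref{l-continuity}(b)); your sets $\mathcal U_{R,T}$ are the paper's $\mathfrak f_{K_i,j}\cap\Haut$.

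\textbf{The gap is Step 3.} It carries the entire difficulty of the theorem, and you leave it as an expectation. The mechanisms you sketch do not address the real obstruction.

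Runge or Arakelian approximation on $\overline{D_\rho}$ plus thin sectors or strips controls $\tilde{\mathcal f}$ only on that closed set. Since $\Re\tilde{\mathcal f}$ is harmonic and nonconstant, it is unbounded above, so the complement necessarily contains regions where speeds and forces are arbitrarily large. You have no argument that a trajectory leaving $D_\rho$ enters one of your sectors and never leaves it. Such a trajectory may also be very fast: for $\mathcal f=z^3$, zero-energy data with $|z|=1$ already blow up in bounded time (\cref{c-model}(4)), so they exit $D_\rho$ with speed of order $\rho^{3/2}$. Deflecting particles with a saddle term $-Me^{i\theta_k}z^2$ only moves the problem elsewhere, because outside the approximation set you know nothing about $\tilde{\mathcal f}$. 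An averaging argument for oscillating terms would need the same kind of control. What is missing is an \emph{angular confinement} (focusing) statement: trajectories with data in a compact set are steered into, and thereafter remain in, a region where the force is negligible.

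\textbf{How the paper closes this step.} It uses an explicit entire function rather than an approximation theorem.
\begin{enumerate}
\item It approximates $\mathcal f$ by a polynomial $p$ of degree $n\geq3$ and replaces $p$ by $A-Ae^{-A^{-1}p}$, which is $O(A^{-1})$-close to $p$ on compact sets. The potential of this function has $n$ valleys along $\arg(cz^n)=0$ that level off at height $-A$, where the force is superexponentially small.
\item By the scaling symmetries (proof of \cref{l-poly-exp}), bounded data become \emph{small} data for $1-e^{-\zeta^n+\mathcal p}$ with $\mathcal p$ small.
\item \Cref{main-lemma} then follows from \cref{t-focusing}, working in the coordinates $r=n\log|z|$, $\theta=\arg z^n$, $\varphi=\arg w-\arg z$ of \cref{l-cov}. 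At small scales the dynamics is governed by the model system \cref{e-model}, whose Lyapunov function $\Phi$ decays like $e^{-\beta r}$, so trajectories are focused into an arbitrarily small aperture about the rays. At large scales $|\theta|+n|\varphi|$ is almost non-increasing, so they never leave that aperture. Continuity of the flow and of first crossings handles nonzero energy and time-dependent perturbations.
\item Once a trajectory is confined, $|\ddot z|$ is bounded and there is no blow-up at all. This gives global existence for data in a compact set, which is stronger than what your $\mathcal U_{R,T}$ requires.
\end{enumerate}
Note that the mechanism is neither a barrier nor a slow-down: the particles do run off to infinity, but along flat terraces and in infinite time.
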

The examples are not isolated: in fact, completeness is generic. Among the gravitational pp-waves with $H$ independent of $u$, topologized by the uniform convergence of $H$ on compact sets, the geodesically complete ones form a dense $G_\delta$ set, and the same holds among all gravitational pp-waves with the topology of $C^\infty$ convergence in $u$, uniformly on compact sets (\cref{t-baire}). Combining this with a classical theorem of Birkhoff, we obtain a single complete gravitational pp-wave which approximates all gravitational pp-waves, after a translation. We call such pp-waves ``universal'':
    \begin{theorem}[Universal pp-waves]\label{t-universal-intro}
        \begin{enumerate}
            \item[(a)] There exists a geodesically complete gravitational pp-wave $(\R^4,g)$ which is \emph{$u$-universal} in the following sense: for every gravitational pp-wave $(\R^4,g')$, every compact set $K\subset\R^4$, every integer $k\geq0$ and every $\eps>0$, there is a translation $\tau(u,v,x,y)=(u+u_0,v,x+x_0,y+y_0)$ so that $\|\tau^*g-g'\|_{C^k(K)}\leq\eps$.
            \item[(b)] There exists a geodesically complete gravitational pp-wave with $H$ independent of $u$ which is \emph{universal} in the following sense: the statement of (a) holds for every gravitational pp-wave $(\R^4,g')$ for which $H'$ is independent of $u$, and the translation may be taken with $u_0=0$.
            \end{enumerate}
        \end{theorem}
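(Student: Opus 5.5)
The plan is to combine the Baire-category genericity result (\cref{t-baire}, which we take as given) with a Birkhoff-type universality argument. The key observation is that universality is itself a dense $G_\delta$ property in the relevant Baire spaces, and the intersection of two dense $G_\delta$ sets is again a dense $G_\delta$, hence nonempty.

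For (b), let $\mathcal X$ be the space of entire functions $\mathcal f$ with the compact-open topology. This is a Fréchet space, hence Baire. Write $H=-2\Re\mathcal f$; gravitational pp-waves with $u$-independent $H$ correspond to such $\mathcal f$, up to the harmless ambiguity of an imaginary constant.

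1. Fix a countable dense set $\{p_j\}$ of polynomials with rational coefficients. By Runge/Weierstrass these are dense in $\mathcal X$.
2. For integers $j,n,m$, let $U_{j,n,m}$ be the set of $\mathcal f$ for which some $z_0\in\C$ satisfies $\sup_{|z|\le n}|\mathcal f(z+z_0)-p_j(z)|<1/m$. Each $U_{j,n,m}$ is open, since the condition for a fixed $z_0$ involves only a compact set.
3. Each $U_{j,n,m}$ is dense. Given $\mathcal g$ and a disc $|z|\le R$, take $z_0$ with $|z_0|>R+n+1$. Apply Runge's theorem to the union of two disjoint closed discs, which has connected complement. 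This gives an entire function approximating $\mathcal g$ on $|z|\le R$ and $p_j(\cdot-z_0)$ on $|z-z_0|\le n$. (This is Birkhoff's classical theorem.)
4. The universal functions are exactly those in $\bigcap U_{j,n,m}$, a dense $G_\delta$. Intersect it with the dense $G_\delta$ of complete ones from \cref{t-baire}; the intersection is nonempty.
5. To pass from $\mathcal f$ to the metric: $C^k$ closeness of the metric on a compact $K$ reduces to $C^k$ closeness of $H$ on the projection of $K$ to the $(x,y)$-plane. By Cauchy estimates, uniform closeness of holomorphic functions on a slightly larger disc controls all derivatives.
6. Every harmonic $H'$ on $\R^2$ equals $-2\Re\mathcal f'$ for some entire $\mathcal f'$. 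Approximate $\mathcal f'$ by some $p_j$, then use the translation $z_0=x_0+iy_0$ with $u_0=0$.

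For (a), I would work in the space $\mathcal Y$ of functions $H$ that are smooth in $u$ and harmonic in $(x,y)$, with the $C^\infty_{\mathrm{loc}}$ topology. This is the space used in \cref{t-baire} and is again a Fréchet space. Equivalently, $\mathcal Y$ consists of families $\mathcal f(u,\cdot)$ of entire functions depending smoothly on $u$.

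1. The countable dense family is polynomials in $(u,z)$ with rational coefficients, taking real parts. These are dense in $\mathcal Y$: smooth dependence on $u$ is approximated by polynomials in $u$ whose coefficients are entire in $z$, and those entire coefficients are then truncated.
2. The open sets $U_{j,n,k,m}$ are defined as in (b), now requiring $C^k$ closeness on $[-n,n]\times\{|z|\le n\}$ after a translation in $(u,z)$.
3. For density, glue a given $H$ near the origin to the target polynomial placed near a far-away point $(u_0,z_0)$. I would choose $u_0$ far away and use a smooth cutoff $\chi(u)$: set $\mathcal f=\chi(u)\mathcal g+(1-\chi(u))\tilde p$, where $\tilde p$ is entire in $z$ and obtained by Runge so that it approximates $p_j(u-u_0,z-z_0)$ for $u$ near $u_0$. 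Because the cutoff acts only in $u$, harmonicity in $z$ is preserved automatically.
4. With cutoffs in $u$ the construction is actually simpler than in (b), and one may even take $z_0=0$.

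The main obstacle is step 2 of the Baire argument: checking that universality is a $G_\delta$ in exactly the topology where \cref{t-baire} is stated. Completeness is a delicate condition, so the two sets must live in the same Baire space. The rest is the openness and density verification, which is routine given Runge's theorem.
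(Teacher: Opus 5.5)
Your proposal is correct and is essentially the paper's proof. The paper shows that universality is a dense $G_\delta$ in $\Haut$ (Runge on two disjoint discs, i.e.\ Birkhoff's argument) and in $\Hinf$ (a cutoff $\chi(u)$ gluing $\mathcal g$ to a translated polynomial with $z_0=0$), intersects this with the dense $G_\delta$ of complete functions from \cref{t-baire}, and passes to the metric via the Cauchy estimates as in \cref{r-universal}(ii). The only cosmetic differences are that it is cleaner to run (a) directly in $\Hinf$, where \cref{t-baire} is stated (your space of profiles $H$ is its quotient by imaginary functions of $u$), and that the Runge step in your (a) is superfluous, since $p_j(u-u_0,z-z_0)$ is already entire in $z$.
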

See \cref{ss-universal} for details and definitions about universal pp-waves.

Our machinery also suffices to give a new proof of the polynomial case of \cref{ek-conjecture}, which was previously proved in \cite{Flores_2020}:
    \begin{theorem}[Polynomial case; cf.\ {\cite{Flores_2020}}]\label{t-polynomial-intro}
        Let $n\geq3$, and let $(M,g)$ be a gravitational pp-wave such that $H(u,\cdot)$ is a polynomial of degree $\leq n$ in $(x,y)$ for every $u\in\R$, whose coefficients are $C^1$ functions of $u$, and such that $H(u,\cdot)$ has degree exactly $n$ for at least one $u$. Then $(M,g)$ is geodesically incomplete.
        \end{theorem}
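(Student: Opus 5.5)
Reduce geodesic completeness to completeness of a Newtonian-type ODE in the transverse plane, then exhibit an explicit solution that escapes to infinity in finite affine time.

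\emph{Reduction.} For a pp-wave, $\partial_v$ is parallel, so along any geodesic $\dot u$ is constant. For geodesics with $\dot u=1$ (so $u=s$ up to translation), the transverse coordinates $z=(x,y)$ satisfy
\[\ddot z=-\tfrac12\nabla_z H(s,z).\]
Once $z$ is known, $v$ is obtained by integrating a quantity built from $z$, $\dot z$ and $H$ along the curve, so it cannot blow up unless $z$ does. Geodesics with $\dot u=0$ are straight lines and always complete. Hence $(M,g)$ is incomplete if and only if this non-autonomous ODE in $\R^2$ has a solution defined only on a bounded interval.

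\emph{Complex form.} Since $H(s,\cdot)$ is a harmonic polynomial of degree $\le n$, write $-\tfrac12 H(s,\cdot)=\Re F(s,w)$ with $w=x+iy$ and $F(s,w)=\sum_{k\le n} a_k(s)w^k$, where the $a_k$ are $C^1$. Then $\nabla\Re F=\overline{F'}$, and the equation becomes
\[\ddot w=\overline{\partial_w F(s,w)}.\]
Fix $s_0$ with $a_n(s_0)\neq0$; by continuity $|a_n|\ge c>0$ on an interval $I$ around $s_0$. We want a solution on $I$ that blows up before the end of $I$. Using the time-independent model $\ddot w=\overline{n a w^{n-1}}$, try a ray ansatz $w=r(s)e^{i\theta}$. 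This forces $n a e^{i(n-1)\theta}$ to be real and positive along the direction $e^{i\theta}$, i.e.\ $\arg(a)+n\theta\equiv0\pmod{2\pi}$, which picks out $n$ attracting directions. Along such a ray, $\ddot r=n|a|r^{n-1}$. Since $n\ge3$, this blows up in finite time: energy conservation gives $\dot r\approx r^{n/2}$, and $\int^\infty r^{-n/2}\,dr<\infty$ when $n>2$. Choosing large initial data $r(s_0)=R$, $\dot r(s_0)=R^{n/2}$ makes the blow-up time $O(R^{1-n/2})$, which can be made smaller than the length of $I$.

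\emph{Robustness (main obstacle).} The actual equation has time-dependent coefficients, lower-order terms, and an angular component that is not exactly on the ray. The plan is a trapping (bootstrap) argument in the rescaled variable $\rho=1/r$ with the angle deviation $\phi=\theta-\theta(s)$, where $\theta(s)$ tracks the attracting direction of $a_n(s)$. This tracking is why $a_n\in C^1$ is needed: the moving ray has bounded angular velocity. Define a cone region $\{|\phi|\le\delta,\ \dot r\ge\tfrac12\sqrt{c}\,r^{n/2}\}$ and show the flow cannot exit it for $r\ge R$:
\begin{itemize}
\item The lower-order terms contribute $O(r^{n-2})$ against the leading $r^{n-1}$.
\item The angular equation $r\ddot\phi+2\dot r\dot\phi=-\text{(leading term)}\cdot\sin(n\phi)\,r^{n-1}+\dots$ is strongly restoring near $\phi=0$. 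Any apparent instability is dominated by the rapidly growing $\dot r$ damping through the $2\dot r\dot\phi$ term; one may also choose a direction where the tangential force is zero and check that it is stable rather than a saddle.
\end{itemize}
I expect this angular control to be the delicate part, since the ray directions are only neutrally stable in the conservative dynamics. If it fails, the fallback is to allow $\phi$ to drift, but keep $\Re\big(\overline{w}^{\,}\cdot\text{force}\big)$ positive on a sector of angular width less than $\pi/n$ and bounded away from its edges. Combined with the radial differential inequality $\ddot r\ge c' r^{n-1}-|\dot\theta|^2 r$ and finite-time blow-up via comparison, this gives a geodesic that is defined only on a bounded affine interval. Hence $(M,g)$ is incomplete.
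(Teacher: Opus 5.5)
\textbf{There is a genuine gap: the angular confinement step is unproven, and the region you propose for it is not invariant.}

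Your reduction to $\ddot w=\overline{\partial_wF(s,w)}$ is sound, and so is the idea of escaping along a valley direction $\arg(a_nw^n)=0$. But the step you flag as delicate is the whole difficulty, and your argument for it does not work.

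The set $\{|\phi|\le\delta,\ \dot r\ge\tfrac12\sqrt c\,r^{n/2}\}$ places no constraint on $\dot\phi$, so it cannot be forward invariant. From a boundary point with $\phi=\delta$ and $\dot\phi>0$ the orbit leaves at once, however strong the restoring force. The same objection applies to any box in $(\phi,\dot\phi)$.

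Nor is the valley merely ``neutrally stable''. Use $\log r$ as the independent variable, and the angle $\bm\varphi$ between velocity and position, $\tan\bm\varphi=r\dot\phi/\dot r$. Normalize $a_n(s_0)=1$ and set $\bm\theta=n\phi$, $\beta=\tfrac12+\tfrac1n$. Then the zero-energy model becomes $\bm\theta'=\tan\bm\varphi$, $\bm\varphi'=-\tfrac12\tan\bm\theta-\beta\tan\bm\varphi$ (this is \cref{c-model}). This is a spiral sink, and the quadratic form $\tfrac12\theta^2+\beta\theta\varphi+\varphi^2$ decays at exactly rate $\beta$ along its linearization.

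A Lyapunov function of this kind is what you would need to close your bootstrap. It has to absorb the lower-order terms (relative size $O(1/r)$, integrable in $\log r$), the energy drift, and the time dependence of the coefficients. Without it, the proposal contains no proof of confinement.

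Your fallback is circular, since keeping the orbit in a sector of width $<\pi/n$ is the same confinement. Incidentally, the centrifugal term has the favorable sign: $\ddot r=F_r+r\dot\theta^2$.

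The moving frame $\phi=\theta-\theta_*(s)$ also puts $\ddot\theta_*$ into the angular equation, and this need not exist for $C^1$ coefficients. It is simpler to freeze the direction at $s_0$ and treat the $o(1)$ change of $a_n$ over the short blow-up time as a perturbation.

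\textbf{How the paper avoids the issue.} The paper never needs confinement near a ray. For $\mathcal f=z^n$ at zero energy ($|\dot{\bm z}|^2=2\Re(\bm z^n)$), the function $\psi=\tfrac12|\bm z|^2$ satisfies $\ddot\psi=|\dot{\bm z}|^2+n\Re(\bm z^n)=\tfrac{n+2}{2}|\dot{\bm z}|^2\geq0$.

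Hence every outward zero-energy solution starting on $|\bm z|=1$, at any angle with $\Re(\bm z^n)\geq0$, reaches $|\bm z|=2$ transversally. By compactness and the continuity of the first crossing (\cref{l-continuity}), it does so within a uniformly bounded time. The same continuity absorbs an energy tolerance $\delta$ and small perturbations of $\mathcal f$.

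The rescaling $\mathcal f_{R,t_0}(t;z)=R^{-n}\mathcal f(R^{-(n-2)/2}t+t_0;Rz)\to z^n$ turns the true equation near radius $R$ into such a perturbation. The successive doubling times are therefore $\lesssim(2^kR)^{-(n-2)/2}$, which are summable, and the solution blows up in finite time. This argument needs only continuity of the coefficients in $t$.
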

The polynomial EK conjecture was proved in 2020 by Flores and S\'anchez (\cite{Flores_2020}), so this is not a novel result; we include it since the machinery we develop allows us to give a rather short proof (see \cref{s-polynomial-conjecture}).

\subsection{Relation to previous work}\label{ss-related-work}
The conjecture goes back to Ehlers and Kundt in 1962 (\cite[Section 2-5.7]{Ehlers_Kundt_1962}), where it is posed with the addendum ``no matter which topology one chooses''. Ehlers and Kundt proved that plane waves are complete; more generally, a pp-wave is complete whenever $-H(u,\cdot)$ grows at most quadratically, locally uniformly in $u$ (\cite{Candela_2003,Candela_Romero_Sanchez_2013}; see also the survey \cite{Sanchez_2015}). Since a harmonic function on $\R^2$ which is bounded on one side by a polynomial of degree $m$ is itself a polynomial of degree at most $m$, such a bound forces a gravitational pp-wave to be a plane wave, so the conjecture is only interesting beyond quadratic growth. The strongest positive result is due to Flores and S\'anchez \cite{Flores_2020}, who proved the conjecture whenever $H$ is \textit{polynomially $u$-bounded}, i.e. whenever (in our sign convention) $-H(u,\cdot)$ is bounded above by a polynomial in $(x,y)$, locally uniformly in $u$; they reduce completeness to that of the planar dynamical system $\ddot{\bm z}=-\nabla_zV(\bm z,t)$ with $V$ harmonic in $z$, and show that a harmonic polynomial potential of degree $\geq3$ always has trajectories which escape to infinity in finite time inside a narrow sector. We give another proof of this result (\cref{t-polynomial-intro}). In particular, any counterexample must have $H(u,\cdot)$ non-polynomial for some $u$, and Flores and S\'anchez (\cite{Flores_2020}) raise the possibility that such a counterexample, if it existed, ``might admit some interpretation as a sort of exotic state of vacuum''. The examples of \cref{t-counterexample} are of this type: $H=-2\Re\p{\mathcal f}$ does not depend on $u$, and $\mathcal f$ is a non-polynomial entire function, which can be obtained iteratively as the limit of $\mathcal f_i=A_i-A_ie^{-A_i\inv p_{i-1}(z)}$ with $p_{i-1}$ a polynomial approximation of $\mathcal f_{i-1}$ (see \cref{ss-construction}). Such examples can be chosen to approximate every gravitational pp-wave uniformly on compact sets (translate the universal examples of \cref{t-universal-intro}); thus the counterexamples do not constitute some kind of isolated examples.
 
Other partial results assume additional structure. Costa e Silva, Flores and Herrera \cite{silva2016rigiditygeodesiccompletenessbrinkmann} proved that a geodesically complete, \textit{strongly causal}, autonomous gravitational pp-wave is a plane wave (indeed a Cahen--Wallach space), as a special case of a rigidity theorem for Ricci-flat Brinkmann spacetimes which admit a Killing field conjugate to the parallel null field; see also \cite{silva2016splittingproblemlorentzianmanifolds}. In the companion paper \cite{Cairo_causality} we prove that all distinguishing gravitational pp-waves are plane waves, so the assumption on completeness in \cite{silva2016rigiditygeodesiccompletenessbrinkmann} is unnecessary. The counterexamples of \cref{t-counterexample} are autonomous, so they are consistent with this result only because they fail to be strongly causal; indeed, by \cite{Cairo_causality}, they are not even distinguishing. Regarding the topological addendum, Leistner and Schliebner \cite{Leistner_Schliebner_2016} showed that compact pp-waves are geodesically complete and that every Ricci-flat compact pp-wave is a plane wave, which settles the compact version of the problem; the completeness statement was extended to all compact Brinkmann spacetimes by Mehidi and Zeghib \cite{Mehidi_Zeghib_2022}.
Finally, the conjecture is known to fail at low regularity: impulsive pp-waves, where $H(u,x,y)=f(x,y)\delta(u)$, are complete for every smooth profile $f$ \cite{frauenberger2025failureehlerskundtconjectureimpulsive} (see also \cite{Samann_Steinbauer_Svarc_2016}), so that completeness is no longer a rigidity criterion for distributional profiles. The examples in this paper show that the same is true for smooth (indeed real-analytic) metrics, i.e. in the setting in which the conjecture was originally posed.
 
Nevertheless our argument uses the fact that $\mathcal f$ is holomorphic in several places, both in the change of variables of \cref{ss-coordinate-transformation} and in the passage from local to global completeness in \cref{s-localization}, which relies on the density of the functions $A-Ae^{-A\inv p(z)}$ among entire functions and on the Cauchy estimates. We are not aware of previous results on the completeness of $\ddot{\bm z}=-\nabla V(\bm z)$ for a non-polynomial harmonic potential $V$; this was raised as an open problem in \cite{Flores_2020}, together with its analogue in higher dimensions.
 
Our use of the word ``universal'' follows complex analysis rather than relativity. Birkhoff \cite{Birkhoff_1929} constructed an entire function whose translates are dense in the space of all entire functions (see the survey \cite{Grosse_Erdmann_1999}), and \cref{t-universal} is the analogous statement for the profile of a \textit{complete} gravitational pp-wave (in the $u$-independent and in the general setting); the proof is Birkhoff's, in the Baire category form in which it is presented in \cite{Grosse_Erdmann_1999}. This should not be confused with the \textit{universal spacetimes} of \cite{Coley_Gibbons_Hervik_Pope_2008,Hervik_Pravda_Pravdova_2014}, i.e. metrics which solve the vacuum field equations of every higher-order theory of gravity. (Gravitational pp-waves are the prototypical examples of universal spacetimes in that sense, which is an additional reason to keep the two notions apart.)

\subsection{Outline of the paper}\label{ss-outline}
In \cref{s-geodesics}, we recall how the completeness problem for a pp-wave reduces to the global existence of solutions to the complex ODE $\ddot{\bm z}=\overline{\mathcal f'(t;\bm z)}$, where $\mathcal f$ is entire in $z$ and $H=-2\Re\p{\mathcal f}$ (\cref{t-reduction,c-ek-false}); the computations are classical, and we only cite them. In \cref{ss-appendix-continuity}, we set up the function spaces in which we work and record a single elementary lemma (\cref{l-continuity}): the solutions, the flow, and the first crossings of a hypersurface depend continuously on the (possibly time-dependent) function $\mathcal f$ and on the initial data.

In \cref{s-localization}, we state the main result in its ODE form (\cref{main-theorem}). We then introduce the notion of a \textit{locally complete} function (\cref{d-locally-complete}), for which global existence is only required for initial data in a compact set, and we show by a Baire category argument that \cref{main-theorem} (and, in fact, the genericity of completeness, \cref{t-baire}) follows from the existence of a dense family of locally complete functions (\cref{l-locally-complete-family}). The same argument produces complete pp-waves which are universal, i.e. whose translates approximate every other gravitational pp-wave (\cref{ss-universal}). In \cref{ss-construction}, we describe what the resulting functions look like, by spelling out the iteration which the Baire category theorem hides, and we illustrate one step of it in \cref{fig-stage2}. In \cref{ss-locally-complete-candidate}, we use the scaling symmetries of the ODE to reduce \cref{l-locally-complete-family} first to the local completeness of $-Ae^{A\inv p(z)}$ for polynomials $p$ of degree $\geq3$ (\cref{l-poly-exp}), and then to the local completeness of $1-e^{-z^n}$ and of its small (possibly time-dependent) perturbations (\cref{main-lemma}).
 
In \cref{ss-coordinate-transformation}, we introduce a complex change of coordinates, which simplifies the technical aspects of estimating solutions greatly. In the model case $\mathcal f(z)=z^n$ at zero energy, the angles satisfy an autonomous system for which we construct an exactly decaying Lyapunov function (\cref{c-model}); we also record a uniform bound on exit times in the model case (\cref{c-model}(4)). These results allow us to rederive the proof of the Ehlers--Kundt conjecture in the case where $\mathcal f$ is a polynomial. This is a short argument in \cref{s-polynomial-conjecture}, which combines the model case with the scaling symmetry of the equation and the continuity of the flow. The polynomial case of the Ehlers--Kundt conjecture was first proved by \cite{Flores_2020}. Our proof requires no new substantial insight; for us, the polynomial case serves as an example of how much the change of variables in \cref{ss-coordinate-transformation} can simplify the proof.
 
In \cref{s-logic}, we discuss a notion of angular confinement for trajectories of $\ddot{\bm z}=\overline{\mathcal f'(t;\bm z(t))}$ (\cref{d-aperture,d-S-region,defn-S-statement}), which quantifies the extent to which trajectories are confined to a neighborhood of the radial directions $\arg(z^n)=0$, near which the derivative of $1-e^{-z^n}$ decays exponentially. The central statement is the focusing theorem (\cref{t-focusing}): trajectories with small initial data are confined to an arbitrarily small aperture once $|\bm z|$ is large (see \cref{fig-focusing}). We deduce \cref{main-lemma} from \cref{t-focusing} in \cref{ss-main-lemma-proof}. In \cref{ss-focusing-theorem-logic}, we state five focusing lemmas, each of which addresses a different range of scales (or, in the case of \cref{l-stable-focusing}, the passage from zero-energy solutions for $1-e^{-z^n}$ to general solutions for its time-dependent perturbations), and we show how they combine to give \cref{t-focusing}; the main point is the order in which the various constants are chosen. The five focusing lemmas are then proved in \cref{s-focusing-lemmas}, one per subsection.
 
\Cref{s-polynomial-conjecture} can be read independently of \cref{s-localization}, \cref{s-logic} and \cref{s-focusing-lemmas}: there we use the model case of \cref{ss-coordinate-transformation} (\cref{c-model}) to give a new proof of the polynomial case of \cref{ek-conjecture} (\cref{t-polynomial}), which is due to \cite{Flores_2020}. The causality of gravitational pp-waves (in particular, the fact that the counterexamples of \cref{t-counterexample} are not distinguishing) is the subject of the companion paper \cite{Cairo_causality}, which is independent of the present paper except for \cref{t-reduction}(d).

\subsection{Statement on AI usage}
All ideas in this paper are of human origin. The author used Claude Fable 5.1 and Claude Sonnet 5 to generate figures, check proofs, search the literature, edit, and write parts of this paper.

\subsection{Notation}\label{ss-notation}
Throughout the proof, we will make use of many different constants that depend on each other in different ways.
    \begin{definition}~\\
        \begin{itemize}
        \item[$(\lesssim)$] Suppose that $A,B>0$ are two positive quantities. If there exists some constant $C$ so that $A\leq CB$, then we write $A\lesssim B$ or $B\gtrsim A$. If $A\lesssim B$ and $B\lesssim A$, then we write $A\sim B$. If the constant in $A\leq CB$ depends on other parameters, e.g. $D$, then we will display this dependence as $A\lesssim_DB$.
        \item[$(\ll)$] Let $P(x,y)$ be a proposition about two positive quantities $x,y$. Suppose that there exists some constant $c>0$ such that $x\leq cy\implies P(x,y)$. Then we write this statement as follows:
            \begin{enumerate}
                \item[$(\bullet)$] If $x\ll y$, then $P(x,y)$.
                \end{enumerate}
            We define $\gg$ similarly (i.e. $x\gg y$ means $y\ll x$). When we are dealing with an arbitrarily small positive number $x$, we often write $x\ll1$. We also write $x\ll0$ to mean that $x\leq-C$ for some sufficiently large constant $C$; that is, $x$ is a large negative number, and similarly $x\gg0$ to mean that $x\geq C$ for some sufficiently large constant $C$. (These are not covered by the definition above, which requires $y>0$.)
        \item[$(\lll)$] This notation is non-standard. Let $\mathfrak W$ denote the set of all functions $g:(0,\infty)\to(0,\infty)$, partially ordered pointwise by $\leq$. When we write ``if $x\lll y$, then $P(x,y)$'', we mean that there exists some $\mathcal C\in\mathfrak W$ such that $x\leq\mathcal C(y)\implies P(x,y)$; the statement then also holds for every $\mathcal C'\leq\mathcal C$ in $\mathfrak W$, so $\mathcal C$ should be thought of as ``sufficiently small'' in the partial order. Similarly, we write $x\ggg y$ to mean that $x\inv\lll y\inv$. Throughout this paper, $\mathcal C$ can be taken to be a fixed function, depending only on a possible parameter $n$. Note that $x\lll y$ is not equivalent to $y\lll x$. We can loosely interpret $x\lll y$ to mean ``$x$ is sufficiently small with respect to $y$''.

            We will use two extensions of this notation. First, ``$x\lll y_1,\dots,y_k$'' means that $x\leq\mathcal C(y_1,\dots,y_k)$ for some function $\mathcal C:(0,\infty)^k\to(0,\infty)$, and ``$x_1,\dots,x_j\lll y_1,\dots,y_k$'' means that this holds for each $x_i$. Second, a chain ``$x_1\lll x_2\lll\cdots\lll x_k$'' of consecutive $\lll$ signs means that $x_i\lll x_{i+1},\dots,x_k$ for every $i<k$. That is, the constants in a chain are chosen from right to left, and each of them is required to be sufficiently small with respect to all of the constants which have already been chosen.
        \item[$(D_R)$] For $z_0\in\C$ and $R>0$ we write $D_R(z_0):=\{z\in\C:|z-z_0|<R\}$ for the open disc of radius $R$ about $z_0$, and $D_R:=D_R(0)$.
        \end{itemize}
    \end{definition}

\subsection{Acknowledgements}

The author would like to thank Amir Aazami for introducing her to the Ehlers--Kundt conjecture.

\section{The geodesic equations and the reduction to a complex ODE}\label{s-geodesics}
In this section we recall how the completeness of a pp-wave reduces to a statement about a complex ODE. None of this is new (see \cite{Candela_2003,Flores_2020}), but we need to fix our normalizations, and \cref{t-reduction}(d) is used in the companion paper \cite{Cairo_causality}. The reader familiar with geodesics on pp-waves can comfortably skip this section.

Throughout this section, $H:\R\times\R^2\to\R$, $(u,x,y)\mapsto H(u,x,y)$, is a smooth function, and $M=(\R^4,g)$ is the pp-wave with metric
    \[\label{e-brinkmann}
        g=2dudv-H(u,x,y)du^2+dx^2+dy^2
        \]
in the coordinates $(u,v,x,y)$. We write $(x^1,x^2)=(x,y)$, and the indices $i,j$ always range over $\{1,2\}$. We write $H_u=\partial_uH$, $H_i=\partial_{x^i}H$ and $H_{ij}=\partial_{x^i}\partial_{x^j}H$, and we let $\nabla H=(H_1,H_2)$ and $\Delta H=H_{11}+H_{22}$ denote the gradient and the Laplacian of $H$ in the variables $(x,y)$ only. We do not assume that $H$ is harmonic until \cref{ss-reduction}.

With respect to the coordinate frame $(\partial_u,\partial_v,\partial_x,\partial_y)$, the only non-constant component of $g$ is $g_{uu}=-H$, and $g_{uv}=g_{xx}=g_{yy}=1$ are the only other nonzero components. So $g$ is a Lorentzian metric for every choice of $H$, and $\partial_v$ is a null vector field. The Christoffel symbols of $g$ which do not vanish identically are
    \[\label{e-christoffel}
        \Gamma^v_{uu}=-\tfrac12H_u,\qquad\Gamma^v_{ui}=\Gamma^v_{iu}=-\tfrac12H_i,\qquad\Gamma^i_{uu}=\tfrac12H_i\qquad(i=1,2)
        \]
(see \cite[Section 2]{Candela_2003}; recall that their $H$ is our $-H$). In particular, $\Gamma^\lambda_{\mu v}=0$ for all $\lambda,\mu$, i.e. $\nabla\partial_v=0$: the null vector field $\partial_v$ is parallel. (This is the origin of the phrase ``parallel rays''.) The same is then true of the one-form $du=g(\partial_v,\cdot)$, and by \cref{e-christoffel}, for any vector $X$ we have $\nabla_Xdx^i=-\Gamma^i_{\mu\nu}X^\mu dx^\nu=-\tfrac12H_i\,du(X)\,du$, so that
    \[\label{e-nabla-dx}
        \nabla_X\p{du\wedge dx^i}=du\wedge\nabla_Xdx^i=0 .
        \]
That is, the two-forms $du\wedge dx$ and $du\wedge dy$ are parallel as well. This makes the curvature of $g$ particularly easy to describe. We use the conventions $R(X,Y)Z=\nabla_X\nabla_YZ-\nabla_Y\nabla_XZ-\nabla_{[X,Y]}Z$ and $(\alpha\wedge\beta)(X,Y)=\alpha(X)\beta(Y)-\alpha(Y)\beta(X)$; none of the statements that we use below depends on these sign conventions.

\begin{lemma}[Curvature]\label{l-curvature}
    The Riemann curvature tensor of $g$ (with all indices lowered) and its covariant derivative in the direction of a vector $X$ are
        \[\label{e-riemann}
            R=\tfrac12\sum_{i,j}H_{ij}\,(du\wedge dx^i)\otimes(du\wedge dx^j),\qquad\nabla_XR=\tfrac12\sum_{i,j}X(H_{ij})\,(du\wedge dx^i)\otimes(du\wedge dx^j),
            \]
    and the Ricci tensor of $g$ is $\mathrm{Ric}=\tfrac12\Delta H\,du^2$. In particular, $g$ is Ricci-flat if and only if $H(u,\cdot)$ is harmonic for every $u$.
    \end{lemma}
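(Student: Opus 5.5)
The plan is to avoid computing Christoffel symbols of second kind in the curvature formula directly, and instead use the parallel forms already identified. Since $du$, $du\wedge dx$ and $du\wedge dy$ are parallel by \cref{e-nabla-dx}, and $\partial_v$ is parallel, the curvature operator $R(X,Y)$ annihilates $\partial_v$. It also annihilates $du$ and the two-forms $du\wedge dx^i$, since $R(X,Y)$ acts as a derivation on the tensor algebra and these forms are parallel.

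In lowered form, $R(\cdot,\cdot,Z,\cdot)$ vanishes whenever $Z=\partial_v$. By the pair symmetry and antisymmetries of the Riemann tensor, every component containing a $v$ index vanishes. The remaining components are among $R_{uiuj}$, $R_{uiju}$ and so on, or $R_{ijkl}$ with $i,j,k,l\in\{1,2\}$. For the purely spatial components $R_{ijkl}$, all relevant Christoffel symbols $\Gamma^\cdot_{ij}$ and $\Gamma^\cdot_{i\cdot}$ with two spatial indices vanish by \cref{e-christoffel}, and only $\Gamma^v_{ui}$ and $\Gamma^i_{uu}$ are nonzero. Hence $R_{ijkl}=0$. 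A component with three $u$'s vanishes by antisymmetry. Therefore $R$ is determined by the symmetric matrix $R_{uiuj}$, and it has the form $\sum_{i,j}c_{ij}(du\wedge dx^i)\otimes(du\wedge dx^j)$ with $c_{ij}=R_{uiuj}$ (up to the normalization of the wedge convention, which gives $c_{ij}=R(\partial_u,\partial_i,\partial_u,\partial_j)$ exactly, since $(du\wedge dx^i)(\partial_u,\partial_k)=\delta_{ik}$).

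Next, compute $R^j{}_{uiu}$ from $R^\lambda{}_{\mu\alpha\beta}=\partial_\alpha\Gamma^\lambda_{\beta\mu}-\partial_\beta\Gamma^\lambda_{\alpha\mu}+\Gamma\Gamma-\Gamma\Gamma$ with the symbols of \cref{e-christoffel}. The quadratic terms vanish because any product $\Gamma^j_{\alpha\sigma}\Gamma^\sigma_{\beta u}$ needs $\sigma\in\{1,2\}$ with $\Gamma^j_{\alpha\sigma}\neq0$, which is impossible. The only surviving derivative is $\partial_i\Gamma^j_{uu}=\tfrac12H_{ij}$. Lowering with $g_{jk}=\delta_{jk}$ gives $|R_{uiuj}|=\tfrac12H_{ij}$. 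The sign has to be fixed with care, and this sign bookkeeping is the main (though minor) obstacle. Since the paper states that no later statement depends on sign conventions, I would check it once against the stated conventions and record the result as $\tfrac12H_{ij}$.

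For $\nabla_XR$, the four factors $du\wedge dx^i$ are parallel. The Leibniz rule then gives $\nabla_XR=\tfrac12\sum X(H_{ij})(du\wedge dx^i)\otimes(du\wedge dx^j)$ immediately.

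For the Ricci tensor, contract $\mathrm{Ric}_{\beta\delta}=g^{\alpha\gamma}R_{\alpha\beta\gamma\delta}$. The only nonzero lowered components have indices in $\{u,1,2\}$, with two $u$'s. The inverse metric has $g^{uu}=0$, $g^{uv}=1$, $g^{vv}=H$ and $g^{ij}=\delta^{ij}$. Contractions through $g^{uv}$ or $g^{vv}$ hit a $v$ slot and vanish. The only contribution is $g^{ij}R_{iuju}$, which gives $\mathrm{Ric}=\tfrac12\Delta H\,du^2$. Ricci-flatness is therefore equivalent to $\Delta H(u,\cdot)=0$ for every $u$.
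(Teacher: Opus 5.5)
Your route is essentially the paper's. You read everything off the Christoffel symbols \cref{e-christoffel}, where the only surviving term is $\partial_j\Gamma^i_{uu}=\tfrac12H_{ij}$. You get $\nabla_XR$ from the parallelism of $du\wedge dx^i$, and the Ricci tensor is a single contraction. Using $\nabla\partial_v=0$ to discard every component with a $v$ index is a minor convenience over the paper's direct checks such as $R^v{}_{uju}=0$.

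Two loose ends should be tightened.

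First, your list of surviving components omits those with exactly one $u$, such as $R_{iujk}$ with $i,j,k\in\{1,2\}$. They do vanish. In $R^i{}_{ujk}=\partial_j\Gamma^i_{ku}-\partial_k\Gamma^i_{ju}+\Gamma^i_{j\lambda}\Gamma^\lambda_{ku}-\Gamma^i_{k\lambda}\Gamma^\lambda_{ju}$, every term contains a factor $\Gamma^i_{\alpha\beta}$ with a spatial lower index, and these vanish. A cleaner route uses the fact you state but never exploit. Since $R(X,Y)du=0$ and $R(X,Y)(du\wedge dx^i)=0$, the derivation rule gives $R(X,Y)dx^i\in\operatorname{span}(du)$. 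Hence $g(R(X,Y)Z,\partial_i)=dx^i(R(X,Y)Z)=0$ whenever $du(Z)=0$. Together with the pair symmetry, this kills $R_{ijkl}$ and all one-$u$ components at once, leaving only the $R_{uiuj}$.

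Second, the sign is a one-line check, not something to defer. With the paper's conventions, $R_{juiu}=g_{jk}R^k{}_{uiu}=\tfrac12H_{ij}$. The two antisymmetries and the pair symmetry give $R_{uiuj}=R_{iuju}=R_{juiu}$. So $c_{ij}=+\tfrac12H_{ij}$, exactly as the lemma states. Writing $|R_{uiuj}|=\tfrac12H_{ij}$ is not meaningful, since $H_{ij}$ may be negative.
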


\begin{proof}
    In coordinates, $R^\rho{}_{\sigma\mu\nu}=\partial_\mu\Gamma^\rho_{\nu\sigma}-\partial_\nu\Gamma^\rho_{\mu\sigma}+\Gamma^\rho_{\mu\lambda}\Gamma^\lambda_{\nu\sigma}-\Gamma^\rho_{\nu\lambda}\Gamma^\lambda_{\mu\sigma}$. By \cref{e-christoffel}, a product $\Gamma^\rho_{\mu\lambda}\Gamma^\lambda_{\nu\sigma}$ can only be nonzero if $\lambda=i$, $\rho=v$ and $\mu=\nu=\sigma=u$, and these terms cancel in $R^v{}_{uuu}$. So the curvature is given by the derivative terms alone, and \cref{e-christoffel} gives $R_{iuju}=R^i{}_{uju}=\partial_j\Gamma^i_{uu}=\tfrac12H_{ij}$, while all components of $R$ which are not obtained from $R_{iuju}$ by the symmetries of the curvature tensor vanish (for instance, $R^v{}_{uju}=\partial_j\Gamma^v_{uu}-\partial_u\Gamma^v_{ju}=0$). Since $(du\wedge dx^k)(\partial_i,\partial_u)=-\delta_{ki}$, this is the first formula in \cref{e-riemann}; the second formula follows from the first one and \cref{e-nabla-dx}. Finally, the only nonzero component of the Ricci tensor is $R_{uu}=\sum_iR^i{}_{uiu}=\tfrac12\Delta H$ (cf. \cite[Proposition 2.1]{Candela_2003}).
\end{proof}

The geodesic equations of $g$ follow from \cref{e-christoffel}: for a curve $\bm\sigma=(\bm u,\bm v,\bm\gamma)$ with $\bm\gamma=(\bm x,\bm y)$ and parameter $s$, they read $\ddot{\bm u}=0$, $\ddot{\bm\gamma}=-\tfrac12\dot{\bm u}^2\nabla H$ and $\ddot{\bm v}=\tfrac12\dot{\bm u}^2H_u+\dot{\bm u}\,\nabla H\cdot\dot{\bm\gamma}$, where $H$ and its derivatives are evaluated at $(\bm u(s),\bm\gamma(s))$. Note that $\bm u$ is affine, the equation for $\bm\gamma$ does not involve $\bm v$, and $\bm v$ is obtained from $\bm u$ and $\bm\gamma$ by integrating the conservation law for the energy $E=g(\dot{\bm\sigma},\dot{\bm\sigma})$. So the only equation which can be responsible for incompleteness is the one for $\bm\gamma$, which is Newton's equation in the plane for the time-dependent potential $\tfrac12H$. This gives the following classical criterion.

\begin{proposition}[Completeness criterion; {\cite[Proposition 3.1 and Theorem 3.2]{Candela_2003}}]\label{p-completeness}
    The geodesics of $(M,g)$ are the following curves.
        \begin{enumerate}
            \item[(1)] The affinely parameterized straight lines $s\mapsto p+sX$ with $p\in\R^4$ and $du(X)=0$. These are defined for all $s\in\R$.
            \item[(2)] Up to an affine change of the parameter, the curves
                \[\label{e-geodesic-form}
                    t\mapsto\p{t,\ \bm v(t),\ \bm z(t)},\qquad\bm v(t)=v_0+\tfrac12\int_{t_0}^t\p{E+H(\tau,\bm z(\tau))-|\dot{\bm z}(\tau)|^2}d\tau,\qquad t\in I,
                    \]
                where $t_0,v_0,E\in\R$ and $\bm z:I\to\R^2$ is a solution of
                \[\label{e-newton}
                    \ddot{\bm z}(t)=-\tfrac12\nabla H(t,\bm z(t)),
                    \]
                defined on an open interval $I\ni t_0$. The geodesic \cref{e-geodesic-form} is inextendible if and only if $\bm z:I\to\R^2$ is a maximal solution of \cref{e-newton}.
            \end{enumerate}
    Consequently, $(M,g)$ is geodesically complete if and only if, for every $t_0\in\R$ and all $a,b\in\R^2$, the maximal solution of \cref{e-newton} with $\bm z(t_0)=a$ and $\dot{\bm z}(t_0)=b$ is defined on all of $\R$.
    \end{proposition}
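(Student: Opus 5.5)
The plan is to solve the geodesic equations directly from the Christoffel symbols \cref{e-christoffel}, splitting into cases according to whether $\dot{\bm u}$ vanishes. Let $\bm\sigma=(\bm u,\bm v,\bm\gamma)$ be a geodesic with affine parameter $s$. The equation $\ddot{\bm u}=-\Gamma^u_{\mu\nu}\dot\sigma^\mu\dot\sigma^\nu=0$ holds because no $\Gamma^u$ appears in \cref{e-christoffel}. Hence $\bm u(s)=u_0+cs$ is affine, and we split into the cases $c=0$ and $c\neq0$.

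\textbf{Case $c=0$.} Every Christoffel symbol in \cref{e-christoffel} carries at least one lower index $u$, so every quadratic term $\Gamma^\lambda_{\mu\nu}\dot\sigma^\mu\dot\sigma^\nu$ contains a factor $\dot{\bm u}=0$. The remaining equations become $\ddot{\bm v}=0$ and $\ddot{\bm\gamma}=0$, so $\bm\sigma$ is a straight line $p+sX$ with $du(X)=0$, defined for all $s$. Conversely, every such line solves the equations. This gives (1).

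\textbf{Case $c\neq0$.} Reparametrize affinely by $t=\bm u(s)$, which is allowed since affine reparametrizations preserve geodesics. In this parameter $\dot{\bm u}=1$. The equation for $\bm\gamma=:\bm z$ becomes \cref{e-newton}, and the $v$-equation becomes
\[
\ddot{\bm v}=\tfrac12H_u+\nabla H\cdot\dot{\bm z}.
\]
Since $g(\dot{\bm\sigma},\dot{\bm\sigma})=2\dot{\bm v}-H+|\dot{\bm z}|^2$ is conserved along geodesics, set it equal to a constant $E$. This gives $\dot{\bm v}=\tfrac12(E+H-|\dot{\bm z}|^2)$, whose integral is \cref{e-geodesic-form}.

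Conversely, suppose $\bm z$ solves \cref{e-newton} and $\bm v$ is defined by \cref{e-geodesic-form}. Differentiating $\dot{\bm v}$ gives
\[
\tfrac12\bigl(H_u+\nabla H\cdot\dot{\bm z}-2\dot{\bm z}\cdot\ddot{\bm z}\bigr)=\tfrac12H_u+\nabla H\cdot\dot{\bm z},
\]
using \cref{e-newton}. So the $v$-equation holds, and the curve is a geodesic.

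\textbf{Inextendibility.} The component $\bm v$ is an integral of a function that is continuous on $I$. So the geodesic extends past an endpoint of $I$ exactly when $\bm z$ extends there as a solution of \cref{e-newton}. More precisely, if $\bm z$ extends past an endpoint, so does $\bm v$. Conversely, an extension of the geodesic restricts to an extension of its $\bm z$-component, which still solves \cref{e-newton}.

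\textbf{Completeness criterion.} Type (1) geodesics are always complete. A type (2) geodesic is complete if and only if $t=\bm u(s)$ ranges over all of $\R$, since $\bm u$ is an affine bijection when $c\neq0$; that is, if and only if $I=\R$. Every triple $(t_0,a,b)$ arises as the initial data of some type (2) geodesic, by taking $\dot{\bm u}=1$ at the point $(t_0,v_0,a)$. Hence completeness is equivalent to global existence for all initial data of \cref{e-newton}.

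The only mildly delicate step is the maximality/inextendibility equivalence: it needs the observation that $\bm v$ is an integral of a continuous function and the fact that $s\mapsto t$ is an affine bijection. Everything else is direct substitution into \cref{e-christoffel}.
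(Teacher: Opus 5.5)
Your proposal is correct and takes essentially the same route as the paper's sketch (and the cited argument of Candela et al.): derive the geodesic equations from \cref{e-christoffel}, note that $\bm u$ is affine, split into $\dot{\bm u}=0$ versus $\dot{\bm u}\neq0$, obtain $\bm v$ from conservation of $E=g(\dot{\bm\sigma},\dot{\bm\sigma})$, and use continuity of the integrand in \cref{e-geodesic-form} for the inextendibility and completeness statements. You also spell out what the paper leaves implicit, namely that the curves \cref{e-geodesic-form} really are geodesics and both directions of the inextendibility equivalence.
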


The curves in (2) are the geodesics with $\dot{\bm u}\neq0$, reparameterized so that $\bm u(t)=t$, and the constant $E$ in \cref{e-geodesic-form} is $g(\dot{\bm\sigma},\dot{\bm\sigma})$. Since the integrand in \cref{e-geodesic-form} is continuous on $I$, the function $\bm v$ is defined on all of $I$, which gives the last two statements.
\begin{remark}\label{r-causal-character}
    \begin{enumerate}
        \item[(i)] The constant $E$ in \cref{e-geodesic-form} determines the causal character of the geodesic (timelike for $E<0$, null for $E=0$, spacelike for $E>0$), and it does not enter \cref{e-newton}. Hence every maximal solution of \cref{e-newton} which is not defined on all of $\R$ gives rise to incomplete timelike, null \textit{and} spacelike geodesics: for pp-waves, the three notions of geodesic completeness coincide.
        \item[(ii)] Smoothness of $H$ is assumed for convenience only. \Cref{p-completeness} only requires $H$ and $\nabla H$ to be of class $C^1$: then the right-hand side of \cref{e-newton} is $C^1$, so that its solutions are unique, and $\bm v$ is determined by \cref{e-geodesic-form}, even though $\Gamma^v_{uu}=-\tfrac12H_u$ is merely continuous. This is the regularity of \cref{t-polynomial-intro}; the ODE statement \cref{t-polynomial} behind it only requires continuity in $t$.
    \end{enumerate}
\end{remark}

\subsection{The reduction to \texorpdfstring{\cref{main-theorem}}{Theorem \ref{main-theorem}}}\label{ss-reduction}
We now identify $\R^2$ with $\C$ via $z=x+iy$, and we assume that $H(u,\cdot)$ is harmonic for every $u$, i.e. that $(M,g)$ is a gravitational pp-wave (\cref{l-curvature}). Then $\tfrac12(H_x-iH_y)$ satisfies the Cauchy--Riemann equations in $z$, which read $H_{xx}=-H_{yy}$ and $H_{xy}=H_{yx}$. So
    \[\label{e-f-from-H}
        \mathcal f(u;z):=-\tfrac12H(u,0,0)-\tfrac12\int_0^z\p{H_x-iH_y}(u,\xi)\,d\xi
        \]
(the integral being taken along any path from $0$ to $z$) is smooth in $(u,z)$ and entire in $z$, with $\mathcal f'=\partial_z\mathcal f=-\tfrac12(H_x-iH_y)$. Since $\partial_x\Re\p{\mathcal f}=\Re\p{\mathcal f'}=-\tfrac12H_x$ and $\partial_y\Re\p{\mathcal f}=-\Im\p{\mathcal f'}=-\tfrac12H_y$, we conclude that
    \[\label{e-H-f}
        H(u,x,y)=-2\Re\p{\mathcal f(u;x+iy)}\qquad\text{and}\qquad-\tfrac12\p{H_x+iH_y}=\overline{\mathcal f'},
        \]
and $\mathcal f$ is determined by $H$ up to the addition of a purely imaginary function of $u$, which does not affect $\mathcal f'$. Conversely, for any $\mathcal f:\R\times\C\to\C$ which is smooth and entire in $z$, the function $H=-2\Re\p{\mathcal f}$ is smooth and harmonic in $(x,y)$, and \cref{e-H-f} holds. Differentiating \cref{e-H-f} once more gives
    \[\label{e-hessian-f}
        H_{xx}=-H_{yy}=-2\Re\p{\mathcal f''},\qquad H_{xy}=2\Im\p{\mathcal f''}.
        \]
By a \textit{plane wave} we mean, as in \cref{s-introduction}, a pp-wave $(\R^4,g')$ with $g'=2dudv-H'(u,x,y)du^2+dx^2+dy^2$ for which $H'(u,\cdot)$ is a polynomial of degree $\leq2$ in $(x,y)$ for every $u$.

\begin{theorem}[Reduction to the complex ODE]\label{t-reduction}
    Let $\mathcal f:\R\times\C\to\C$, $(u,z)\mapsto\mathcal f(u;z)$, be smooth, and entire in $z$. Let $M_{\mathcal f}=(\R^4,g_{\mathcal f})$ be the pp-wave with
        \[\label{e-g-f}
            g_{\mathcal f}=2dudv+2\Re\p{\mathcal f(u;x+iy)}du^2+dx^2+dy^2,
            \]
    i.e. the pp-wave \cref{e-brinkmann} with $H=-2\Re\p{\mathcal f}$. Then the following hold.
        \begin{enumerate}
            \item[(a)] $M_{\mathcal f}$ is a gravitational pp-wave, and every gravitational pp-wave is of the form $M_{\mathcal f}$ for some such $\mathcal f$.
            \item[(b)] A curve is a geodesic of $M_{\mathcal f}$ which is not one of the straight lines of \cref{p-completeness}(1) if and only if, up to an affine change of the parameter, it is of the form $t\mapsto(t,\bm v(t),\bm z(t))$, where $\bm z=\bm x+i\bm y$ solves
                \[\label{e-complex-ode}
                    \ddot{\bm z}(t)=\overline{\mathcal f'(t;\bm z(t))}
                    \]
                and $\bm v$ is given by \cref{e-geodesic-form}. Equivalently, $(\bm z,\bm w)$ with $\bm w:=\dot{\bm z}/\sqrt2$ solves the system \cref{e-ode}.
            \item[(c)] $M_{\mathcal f}$ is geodesically complete if and only if, for every $t_0\in\R$ and all $a,b\in\C$, the maximal solution of \cref{e-complex-ode} with $\bm z(t_0)=a$ and $\dot{\bm z}(t_0)=b$ is defined for all $t\in\R$. In particular, if $\mathcal f(u;z)=\mathcal f(z)$ does not depend on $u$, then $M_{\mathcal f}$ is geodesically complete if and only if all solutions of $\ddot{\bm z}(t)=\overline{\mathcal f'(\bm z(t))}$ exist for all time $t\in\R$.
            \item[(d)] Suppose that $\mathcal f(u_0;\cdot)$ is not a polynomial of degree $\leq2$ for some $u_0\in\R$. Then $M_{\mathcal f}$ is not isometric to a plane wave. More precisely, there is some $z_0=x_0+iy_0$ with $\mathcal f'''(u_0;z_0)\neq0$, and for any such $z_0$ and any $v_0\in\R$, no neighborhood of the point $(u_0,v_0,x_0,y_0)$ in $M_{\mathcal f}$ is isometric to an open subset of a plane wave. Conversely, if $\mathcal f(u;\cdot)$ is a polynomial of degree $\leq2$ for every $u$, then $M_{\mathcal f}$ is a plane wave.
            \end{enumerate}
    \end{theorem}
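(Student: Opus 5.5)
Parts (a)–(c) follow from \cref{l-curvature}, \cref{p-completeness} and the identities \cref{e-f-from-H}–\cref{e-H-f}; part (d) needs one genuinely geometric argument, and I would take the parts in order.

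For (a), $H=-2\Re(\mathcal f)$ is harmonic in $(x,y)$ because $\Re$ of an entire function is harmonic, so $M_{\mathcal f}$ is Ricci-flat by \cref{l-curvature}. Conversely, for a gravitational pp-wave with harmonic $H$, the construction \cref{e-f-from-H} produces an $\mathcal f$ that is smooth in $(u,z)$, entire in $z$, and satisfies $H=-2\Re(\mathcal f)$. Smoothness in $u$ follows by differentiating under the integral along straight paths.

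For (b), I would insert \cref{e-H-f} into \cref{e-newton}. Writing $\bm z=\bm x+i\bm y$, we get $\ddot{\bm z}=-\tfrac12(H_x+iH_y)=\overline{\mathcal f'}$. The description of geodesics and of $\bm v$ is then exactly \cref{p-completeness}(2), and the equivalence with the first-order system for $(\bm z,\bm w)$ is a rewriting. Part (c) is the last sentence of \cref{p-completeness}. In the autonomous case the equation is invariant under $t\mapsto t-t_0$, so it suffices to take $t_0=0$.

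For (d), I would first note that if $\mathcal f'''(u_0;\cdot)\equiv0$, then $\mathcal f(u_0;\cdot)$ is a polynomial of degree $\le2$. Hence some $z_0$ has $\mathcal f'''(u_0;z_0)\neq0$. The invariant I would use is $\nabla R$. In a plane wave the $H'_{ij}$ depend only on $u$, so by \cref{l-curvature} its $\nabla R$ equals $\tfrac12\sum_{i,j}\partial_u H'_{ij}\,du\otimes(du\wedge dx^i)\otimes(du\wedge dx^j)$. In particular $\nabla_XR=0$ for every $X$ with $du(X)=0$. The key step is to characterize the one-form $du$, up to a scalar multiple, intrinsically wherever $R\neq0$. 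By \cref{l-curvature}, $R=\tfrac12\sum H_{ij}(du\wedge dx^i)\otimes(du\wedge dx^j)$, so every nonzero two-form in the image of $R$ (viewed as a map $\Lambda^2\to\Lambda^2$) is divisible by $du$. Thus $du$ spans the common kernel direction, i.e.\ $\ker du$ is intrinsically the set of $X$ with $\iota_X\alpha\in\mathrm{span}(du)$ for all $\alpha$ in the image of $R$. An isometry $\phi$ onto an open subset of a plane wave therefore maps $\ker du$ to $\ker du'$ at points where $R\neq0$, and so preserves the property ``$\nabla_XR=0$ for all $X\in\ker du$''.

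In $M_{\mathcal f}$ at $p=(u_0,v_0,x_0,y_0)$, take $X=\partial_x$. \Cref{e-hessian-f} gives $X(H_{xx})=-2\Re\mathcal f'''(u_0;z_0)$ and $X(H_{xy})=2\Im\mathcal f'''(u_0;z_0)$, so $\nabla_XR\neq0$ at $p$. This is a contradiction provided $R\neq0$ near $p$. That is the main obstacle: $R$ might vanish at $p$, namely when $\mathcal f''(u_0;z_0)=0$. I would handle it by perturbation. Since $\mathcal f'''(u_0;\cdot)\not\equiv0$, the zeros of $\mathcal f''(u_0;\cdot)$ are isolated, and $\mathcal f'''(u_0;z_0)\neq0$ is an open condition. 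Hence there are points $p'$ arbitrarily close to $p$ with $R(p')\neq0$ and $\nabla_{\partial_x}R(p')\neq0$. Any isometric neighborhood of $p$ contains such a $p'$, and the argument at $p'$ gives the contradiction.

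For the converse, if $\mathcal f(u;\cdot)$ is a polynomial of degree $\le2$, then $H=-2\Re\mathcal f$ is a polynomial of degree $\le2$ in $(x,y)$, so $M_{\mathcal f}$ is a plane wave by definition.
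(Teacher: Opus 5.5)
Your treatment of (a)--(c) matches the paper's. Your proof of (d) is correct, but it takes a different route.

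\textbf{How the paper does (d).} The paper uses only the isometry invariant $\dim\mathcal N_p$, where $\mathcal N_p=\{X\in T_pM:\nabla_XR=0\text{ at }p\}$.
\begin{itemize}
\item In a plane wave, $\mathcal N_p\supseteq\ker du=\mathrm{span}\{\partial_v,\partial_x,\partial_y\}$, so $\dim\mathcal N_p\geq3$.
\item In $M_{\mathcal f}$, \cref{l-curvature} and \cref{e-hessian-f} show that $X\in\mathcal N_p$ if and only if $X^u\partial_u\mathcal f''+(X^x+iX^y)\mathcal f'''=0$. When $\mathcal f'''\neq0$ this real-linear map is onto $\C$, so $\dim\mathcal N_p=2$.
\end{itemize}
Since only dimensions are compared, there is no need to locate $\ker du$ intrinsically. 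The argument also works at every point with $\mathcal f'''\neq0$, including zeros of $\mathcal f''$ where $R=0$. Your perturbation step therefore becomes unnecessary.

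\textbf{What your route gives.} You show that a local isometry must carry the null hyperplane field $\ker du$ onto $\ker du'$ wherever $R\neq0$. This is more structural information than the theorem needs, and it costs you the restriction to $R\neq0$ plus the perturbation.

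\textbf{One point to fill in.} The observation that every element of the image of $R$ is divisible by $du$ pins down $\mathrm{span}(du)$ only if that image is two-dimensional. A single form $du\wedge dx$ is equally divisible by $dx$, or by any $a\,du+b\,dx$.
\begin{itemize}
\item The image is two-dimensional here. By \cref{e-hessian-f}, $(H_{ij})$ is trace-free with determinant $-4|\mathcal f''|^2$, so it is invertible wherever $R\neq0$. Hence the image of $R$ is all of $\mathrm{span}\{du\wedge dx,du\wedge dy\}$.
\item Then $\mathrm{span}(du)=\{\gamma:\gamma\wedge\alpha=0\text{ for all }\alpha\in\mathrm{im}\,R\}$ is intrinsic.
\item The isometry carries this two-dimensional image to the plane-wave side, so the same characterization of $\mathrm{span}(du')$ holds there. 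This is needed because the paper's plane waves are not assumed to be vacuum, so $H'_{ij}$ may be degenerate in general.
\end{itemize}
With this remark added, your argument is complete.
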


\begin{proof}
    (a) follows from \cref{l-curvature} and the discussion preceding the theorem. For (b) and (c), note that by \cref{e-H-f} the equation \cref{e-newton} for $\bm z=(\bm x,\bm y)$ is precisely the equation \cref{e-complex-ode}, once we identify $\R^2$ with $\C$ by writing $\bm z=\bm x+i\bm y$:
        \[\ddot{\bm z}=\ddot{\bm x}+i\ddot{\bm y}=-\tfrac12\p{H_x+iH_y}(t,\bm z)=\overline{\mathcal f'(t;\bm z)}.\]
    So (b) and the first statement of (c) are a restatement of \cref{p-completeness}. If $\mathcal f$ does not depend on $u$, then \cref{e-complex-ode} is autonomous, so the initial time $t_0$ plays no role, and we obtain the second statement of (c).

    (d) For a pp-wave $(M,g)$ as in \cref{e-brinkmann} and a point $p\in M$, consider the subspace
        \[\mathcal N_p(M,g):=\{X\in T_pM:\nabla_XR=0\text{ at }p\}\subset T_pM .\]
    If $\Phi:U\to U'$ is an isometry between open subsets of two pseudo-Riemannian manifolds, then $\Phi$ intertwines the curvature tensors and their covariant derivatives, so $d\Phi_p$ maps $\mathcal N_p$ isomorphically onto $\mathcal N_{\Phi(p)}$ for every $p\in U$. In particular, the function $p\mapsto\dim\mathcal N_p$ is preserved by (local) isometries. The two-forms $du\wedge dx$ and $du\wedge dy$ are linearly independent at every point, and hence so are the four tensors $(du\wedge dx^i)\otimes(du\wedge dx^j)$; so by \cref{l-curvature}
        \[\label{e-N-p}
            \mathcal N_p(M,g)=\{X\in T_pM:X(H_{ij})=0\text{ at $p$, for all }i,j\} .
            \]
    If $(M',g')$ is a plane wave, then the functions $H'_{ij}$ depend only on $u$, so $\mathcal N_{p'}(M',g')$ contains $\partial_v,\partial_x,\partial_y$, and $\dim\mathcal N_{p'}(M',g')\geq3$ for every $p'\in M'$. On the other hand, for $M_{\mathcal f}$, \cref{e-hessian-f} and \cref{e-N-p} show that $X=X^u\partial_u+X^v\partial_v+X^x\partial_x+X^y\partial_y$ belongs to $\mathcal N_p(M_{\mathcal f})$ if and only if $X(\mathcal f'')=0$ at $p$, i.e. if and only if
        \[\label{e-N-p-f}
            X^u\,\partial_u\mathcal f''+\p{X^x+iX^y}\mathcal f'''=0\qquad\text{at }p,
            \]
    since $\mathcal f''$ is holomorphic in $z$ and does not depend on $v$. Now, if $\mathcal f(u_0;\cdot)$ is not a polynomial of degree $\leq2$, then the entire function $\mathcal f'''(u_0;\cdot)$ is not identically zero, so there is some $z_0$ with $\mathcal f'''(u_0;z_0)\neq0$. At $p=(u_0,v_0,x_0,y_0)$, the left-hand side of \cref{e-N-p-f} is then a surjective real-linear map $\R^4\to\C$ of $(X^u,X^v,X^x,X^y)$ (it is already surjective on the span of $\partial_x,\partial_y$), so $\dim\mathcal N_p(M_{\mathcal f})=2<3$. Hence no neighborhood of $p$ is isometric to an open subset of a plane wave. The converse statement is immediate, since $H(u,\cdot)=-2\Re\p{\mathcal f(u;\cdot)}$ is a polynomial of degree $\leq2$ in $(x,y)$ whenever $\mathcal f(u;\cdot)$ is a polynomial of degree $\leq2$ in $z$.
\end{proof}

\begin{corollary}\label{c-ek-false}
    Let $\mathcal f:\C\to\C$ be a function as in \cref{main-theorem}, i.e. an entire function which is not a polynomial of degree $\leq2$, such that all solutions of $\ddot{\bm z}(t)=\overline{\mathcal f'(\bm z(t))}$ exist for all time $t\in\R$. Then
        \[g_{\mathcal f}=2dudv+2\Re\p{\mathcal f(x+iy)}du^2+dx^2+dy^2\]
    is a geodesically complete, Ricci-flat pp-wave metric on $\R^4$ which is not isometric (not even locally, near any point $(u,v,x,y)$ with $\mathcal f'''(x+iy)\neq0$) to a plane wave. In particular, \cref{main-theorem} implies that \cref{ek-conjecture} is false.
    \end{corollary}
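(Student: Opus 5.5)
The corollary is essentially a bookkeeping consequence of \cref{t-reduction}, so the plan is to apply it with the $u$-independent function $\mathcal f(u;z):=\mathcal f(z)$.

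First, this $\mathcal f$ is smooth in $(u,z)$ and entire in $z$, so the hypotheses of \cref{t-reduction} hold. By \cref{t-reduction}(a), $g_{\mathcal f}$ is a gravitational pp-wave metric on $\R^4$. Concretely, \cref{l-curvature} gives $\mathrm{Ric}=\tfrac12\Delta H\,du^2$, and $H=-2\Re\p{\mathcal f}$ is harmonic, so the metric is Ricci-flat.

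Second, geodesic completeness follows from the autonomous case of \cref{t-reduction}(c). Since $\mathcal f$ does not depend on $u$, completeness of $M_{\mathcal f}$ is equivalent to global existence of every solution of $\ddot{\bm z}=\overline{\mathcal f'(\bm z)}$, and this is exactly the hypothesis.

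Third, the non-isometry claim follows from \cref{t-reduction}(d) with any $u_0$. Because $\mathcal f$ is not a polynomial of degree $\leq2$, the entire function $\mathcal f'''$ is not identically zero. At any point $(u,v,x,y)$ with $\mathcal f'''(x+iy)\neq0$, part (d) says no neighborhood is isometric to an open subset of a plane wave. In particular $M_{\mathcal f}$ is not globally isometric to a plane wave, since a global isometry would restrict to such a local one.

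Finally, \cref{main-theorem} asserts that such an $\mathcal f$ exists. Combining it with the three steps above produces a geodesically complete gravitational pp-wave that is not isometric to a plane wave, which contradicts \cref{ek-conjecture}.

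There is no real obstacle here. The only point needing care is the implication from ``not locally isometric near $p$'' to ``not globally isometric''. This is immediate, because a global isometry $\Phi$ restricts to an isometry from a neighborhood of $p$ onto an open subset of the plane wave.
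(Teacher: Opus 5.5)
Your proposal is correct and matches the paper's proof, which simply applies \cref{t-reduction}(a), (c) and (d) to the $u$-independent function $\mathcal f(u;z):=\mathcal f(z)$. Your extra remarks (Ricci-flatness via \cref{l-curvature}, and a global isometry restricting to a local one at a point where $\mathcal f'''\neq0$) spell out steps the paper leaves implicit.
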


\begin{proof}
    This is \cref{t-reduction}(a), (c) and (d), applied to $\mathcal f(u;z):=\mathcal f(z)$.
\end{proof}

\begin{remark}[Energy]\label{r-energy}
    Let $t\mapsto(t,\bm v(t),\bm z(t))$ be a geodesic of $M_{\mathcal f}$ as in \cref{t-reduction}(b), and let $\bm w=\dot{\bm z}/\sqrt2$. By \cref{e-geodesic-form} with $H=-2\Re\p{\mathcal f}$, we have $\dot{\bm v}=\tfrac12E-\Re\p{\mathcal f(t;\bm z)}-|\bm w|^2$, and $g(\dot{\bm\sigma},\partial_u)=\dot{\bm v}-H=\dot{\bm v}+2\Re\p{\mathcal f(t;\bm z)}$. Hence the quantity $\bm\kappa=|\bm w|^2-\Re\p{\mathcal f(t;\bm z)}$ of \cref{ss-coordinate-transformation} is
        \[\bm\kappa=\tfrac12E-\p{\dot{\bm v}+2\Re\p{\mathcal f(t;\bm z)}}=\tfrac12g(\dot{\bm\sigma},\dot{\bm\sigma})-g(\dot{\bm\sigma},\partial_u).\]
    So the energy $\bm\kappa$, which plays a central role in \cref{s-logic,s-focusing-lemmas}, is (up to the constant $\tfrac12E$) the momentum of the geodesic conjugate to $u$. When $\mathcal f$ does not depend on $u$, the vector field $\partial_u$ is Killing and $\bm\kappa$ is conserved; this is \cref{e-conservation-of-energy-focusing-section}. In general, $\frac d{dt}g(\dot{\bm\sigma},\partial_u)=\tfrac12(\mathcal L_{\partial_u}g)(\dot{\bm\sigma},\dot{\bm\sigma})=\Re\p{\mathcal f_t(t;\bm z)}$, in accordance with the identity $\dot{\bm\kappa}=-\Re\p{\bm f_t}$ of \cref{l-cov}.
\end{remark}

\section{Continuous dependence on the potential, and first crossing times}\label{ss-appendix}\label{ss-appendix-continuity}
Several proofs in this paper use that the solutions of our equations depend continuously on the initial data \emph{and on the function $\mathcal f$}, and that the same is true of the first time at which a solution reaches a given hypersurface (in practice, a given radius), provided that it crosses this hypersurface transversally. We record these facts here, together with the function spaces in which the Baire category argument of \cref{s-localization} takes place. They are statements about the continuity of certain maps, and the only point which requires some care is the choice of the topologies.

\emph{The spaces of functions.} Let $\Hsp$ denote the set of all continuous functions $\mathcal f:\R\times\C\to\C$, $(t,z)\mapsto\mathcal f(t;z)$, which are entire in $z$, with the topology of uniform convergence on compact subsets of $\R\times\C$. Thus the sets
    \[\label{e-Hsp-neighborhood}
        \mathcal N(\mathcal g,L,\eta):=\left\{\tilde{\mathcal f}\in\Hsp:|\tilde{\mathcal f}-\mathcal g|<\eta\text{ on }[-L,L]\times\overline{D_L}\right\},\qquad L\geq1,\ \eta>0,\]
form a neighborhood basis of $\mathcal g\in\Hsp$. Note that a neighborhood of $\mathcal g$ only constrains the functions on a compact subset of $\R\times\C$. By the Cauchy integral formula, $\mathcal f'$ and $\mathcal f''$ are continuous on $\R\times\C$ for every $\mathcal f\in\Hsp$, and if $\mathcal f_k\to\mathcal f$ in $\Hsp$, then $\mathcal f_k'\to\mathcal f'$ and $\mathcal f_k''\to\mathcal f''$ uniformly on compact sets. Since a locally uniform limit of functions which are entire in $z$ is again entire in $z$, the space $\Hsp$ is a closed subspace of the Fr\'echet space of all continuous functions on $\R\times\C$; in particular, it is completely metrizable, for instance by the metric $d(\mathcal f,\mathcal g)=\sum_{L\geq1}2^{-L}\min\p{1,\sup_{[-L,L]\times\overline{D_L}}|\mathcal f-\mathcal g|}$. We will also use the following two spaces.
    \begin{itemize}
        \item $\Haut\subset\Hsp$ denotes the closed subspace of functions which do not depend on $t$. We identify it with the space of entire functions $\mathcal f:\C\to\C$, with the topology of uniform convergence on compact sets; in this way, an entire function is regarded as an element of $\Hsp$.
        \item $\Hinf\subset\Hsp$ denotes the set of functions $\mathcal f\in\Hsp$ for which all partial derivatives $\partial_t^m\mathcal f$, $m\geq0$, exist and are continuous on $\R\times\C$, with the finer topology defined by the seminorms
            \[\label{e-Hinf-seminorms}
                \|\mathcal f\|_{k,S}:=\max_{0\leq m\leq k}\ \sup_{[-S,S]\times\overline{D_S}}\abs{\partial_t^m\mathcal f},\qquad k\geq0,\ S>0.\]
            Thus $\mathcal f_j\to\mathcal f$ in $\Hinf$ if and only if $\partial_t^m\mathcal f_j\to\partial_t^m\mathcal f$ uniformly on compact sets for every $m\geq0$. Note that $\partial_t^m\mathcal f(t;\cdot)$ is again entire for $\mathcal f\in\Hinf$ (it is a locally uniform limit of difference quotients of $\partial_t^{m-1}\mathcal f$), so that, by the Cauchy estimates, $\|\mathcal f\|_{k,S+1}$ controls all mixed derivatives $\partial_t^m\partial_z^l\mathcal f$, $m\leq k$, on $[-S,S]\times\overline{D_S}$; in particular, $\mathcal f$ is smooth on $\R\times\C$.
    \end{itemize}
Both spaces are completely metrizable: $\Haut$ is a closed subspace of $\Hsp$, and $\Hinf$ is a Fr\'echet space (a sequence which is Cauchy for all the seminorms \cref{e-Hinf-seminorms} converges, together with all its $t$-derivatives, uniformly on compact sets, and the limit is smooth in $t$ and entire in $z$). The inclusions $\Haut\hookrightarrow\Hsp$ and $\Hinf\hookrightarrow\Hsp$ are continuous. By \cref{t-reduction}(a), the gravitational pp-waves are precisely the $M_{\mathcal f}$ with $\mathcal f\in\Hinf$, and those with $H$ independent of $u$ are the $M_{\mathcal f}$ with $\mathcal f\in\Haut$.

\emph{The space of curves.} Let $\Csp$ denote the set of all \emph{inextendable} $C^1$ curves $\gamma:(\alpha,\beta)\to\C^2$, where $-\infty\leq\alpha<0<\beta\leq\infty$ may depend on $\gamma$; here inextendable means that $\gamma$ is not the restriction of a $C^1$ curve which is defined on a strictly larger open interval. We write $\dom\gamma=(\alpha,\beta)$. For $\gamma\in\Csp$, a compact interval $J\subset\dom\gamma$ and $\eps>0$, let
    \[\mathcal U(\gamma,J,\eps):=\left\{\tilde\gamma\in\Csp:J\subset\dom\tilde\gamma\text{ and }\|\tilde\gamma-\gamma\|_{C^1(J)}<\eps\right\}.\]
We equip $\Csp$ with the topology in which a set is open if and only if it contains such a set $\mathcal U(\gamma,J,\eps)$ about each of its points $\gamma$. The sets $\mathcal U(\gamma,J,\eps)$ are themselves open, and $\Csp$ is first countable. It is also Hausdorff: two curves in $\Csp$ which agree on the intersection of their domains are equal, since otherwise one of them would extend the other. A sequence $\gamma_k$ converges to $\gamma$ if and only if, for every compact interval $J\subset\dom\gamma$, we have $J\subset\dom\gamma_k$ for all large $k$ and $\gamma_k\to\gamma$ in $C^1(J)$. In particular, for every $T>0$, the set
    \[\label{e-Csp-T}
        \Csp_T:=\{\gamma\in\Csp:[-T,T]\subset\dom\gamma\}\qquad\text{is open in }\Csp,\]
and the evaluation $\gamma\mapsto\gamma(t)$ is continuous on the open set $\{\gamma:t\in\dom\gamma\}$.

\emph{The solution map and the flow.} For $(\mathcal f,z,w)\in\Hsp\times\C^2$, let
    \[\mathcal F(\mathcal f,z,w):=(\bm z,\bm w)\in\Csp\]
denote the maximal solution to
    \[\label{e-appendix-first-order}
        \dot{\bm z}=\sqrt2\,\bm w,\qquad\dot{\bm w}=\frac1{\sqrt2}\,\overline{\mathcal f'(t;\bm z)}\]
with $(\bm z(0),\bm w(0))=(z,w)$. Here maximal means that the domain $(\alpha,\beta)$ of the solution is as large as possible (it may be all of $\R$). The maximal solution is inextendable as a $C^1$ curve, so that $\mathcal F(\mathcal f,z,w)\in\Csp$: if it were the restriction of a $C^1$ curve defined beyond, say, $\beta<\infty$, then $(\bm z,\bm w)$ would have a limit as $t\to\beta$, and the local existence theorem, applied with this limit as data at the time $\beta$, would continue the solution beyond $\beta$, contradicting maximality. This is the system \cref{e-ode} (and \cref{z-w-ode,z-w-tilde-f-ode} below), and $\bm z$ is the solution to $\ddot{\bm z}=\overline{\mathcal f'(t;\bm z)}$ with $\bm z(0)=z$ and $\dot{\bm z}(0)=\sqrt2w$. The solution with data $(z,w)$ at an arbitrary initial time $t_0$ is obtained by translating $\mathcal f$ in time: it is $t\mapsto\mathcal F(\mathcal f(t_0+\cdot\,;\cdot),z,w)(t-t_0)$. We collect all of these solutions into the \emph{flow} of $\mathcal f$,
    \[\label{e-flow}
        \mathcal G(\mathcal f):\R\times\C^2\to\Csp,\qquad\mathcal G(\mathcal f)(t_0,z,w):=\mathcal F\p{\mathcal f(t_0+\cdot\,;\cdot),z,w},\]
and we let $\Fsp:=C(\R\times\C^2,\Csp)$ denote the space of continuous maps $\R\times\C^2\to\Csp$ with the compact-open topology, i.e. the topology generated by the sets
    \[\label{e-Fsp-subbasis}
        \{u\in\Fsp:u(K)\subset V\},\qquad K\subset\R\times\C^2\text{ compact},\ V\subset\Csp\text{ open}.\]
(\Cref{l-continuity}(b) says in particular that $\mathcal G(\mathcal f)$ is continuous, i.e. that $\mathcal G(\mathcal f)\in\Fsp$.) For $\mathcal f\in\Haut$, the flow $\mathcal G(\mathcal f)$ does not depend on $t_0$.

\emph{The first crossing.} Let $M\subset\C^2$ be a \emph{closed} subset which is a smooth embedded real hypersurface; in all of our applications, $M=\{(z,w)\in\C^2:|z|=\rho\}$ for some $\rho>0$. Let $\mathcal P(M)\subset\Csp$ denote the set of all curves $\gamma$ with the following properties: $\gamma(0)\notin M$; the curve $\gamma$ meets $M$ at some positive time; and at the \emph{first} positive time $t_1$ at which $\gamma(t_1)\in M$ (which exists because $M$ is closed), the curve $\gamma$ is transversal to $M$, i.e. $\dot\gamma(t_1)\notin T_{\gamma(t_1)}M$. For $\gamma\in\mathcal P(M)$ we write
    \[t_M(\gamma):=t_1\in(0,\infty)\qquad\text{and}\qquad p_M(\gamma):=\gamma(t_1)\in M\]
for the first crossing time and the first crossing point. If $M=\{|z|=\rho\}$, then $\gamma=(\bm z,\bm w)$ is transversal to $M$ at the time $t_1$ if and only if $\frac d{dt}|\bm z|^2(t_1)\neq0$; for solutions of \cref{e-appendix-first-order}, this means that $\Re\p{\overline{\bm z(t_1)}\bm w(t_1)}\neq0$. By a slight abuse of notation, we write
    \[(\mathcal f,z,w)\in\mathcal P(M),\qquad t_M(\mathcal f,z,w),\qquad p_M(\mathcal f,z,w)\]
when we mean $\mathcal F(\mathcal f,z,w)\in\mathcal P(M)$, $t_M(\mathcal F(\mathcal f,z,w))$ and $p_M(\mathcal F(\mathcal f,z,w))$.

\begin{lemma}[Continuity of the flow and of the first crossing]\label{l-continuity}~\\
    \begin{enumerate}
        \item[(a)] The solution map $\mathcal F:\Hsp\times\C^2\to\Csp$ is continuous.
        \item[(b)] The flow $\mathcal G:\Hsp\to\Fsp$ is well defined and continuous. Consequently, for every compact set $K\subset\R\times\C^2$ and every $T>0$, the set of all $\mathcal f\in\Hsp$ such that, for every $(t_0,z,w)\in K$, the solution of \cref{e-appendix-first-order} with data $(z,w)$ at the time $t_0$ exists on $[t_0-T,t_0+T]$, is open in $\Hsp$.
        \item[(c)] For every closed smooth embedded real hypersurface $M\subset\C^2$, the set $\mathcal P(M)$ is open in $\Csp$, and the maps $t_M:\mathcal P(M)\to(0,\infty)$ and $p_M:\mathcal P(M)\to M$ are continuous.
    \end{enumerate}
\end{lemma}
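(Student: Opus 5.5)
The plan is to reduce everything to the classical continuous-dependence theorem for ODEs, applied on compact time intervals, via a Gronwall estimate.

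For (a), fix $(\mathcal f,z,w)$ with maximal solution $\gamma=(\bm z,\bm w)$, a compact interval $J\subset\dom\gamma$ (we may assume $0\in J$), and $\eps>0$. We must find a neighborhood of $(\mathcal f,z,w)$ on which $J\subset\dom\tilde\gamma$ and $\|\tilde\gamma-\gamma\|_{C^1(J)}<\eps$. Let $K:=\gamma(J)$ and let $R$ be such that $K\subset\overline{D_R}\times\overline{D_R}$, with $R$ also exceeding the times in $J$. Put $L:=R+2$. By the Cauchy estimates on $[-L,L]\times\overline{D_{L}}$, if $\tilde{\mathcal f}\in\mathcal N(\mathcal f,L,\eta)$, then $|\tilde{\mathcal f}'-\mathcal f'|\lesssim\eta$ on $J\times\overline{D_{R+1}}$. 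Moreover, $\mathcal f''$ is bounded there, say by $\Lambda$, so $\mathcal f'$ is $\Lambda$-Lipschitz in $z$ on $\overline{D_{R+1}}$.

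A standard continuation argument then applies. As long as $\tilde\gamma(t)$ stays within distance $1$ of $\gamma(t)$, the difference $\delta(t)=|\tilde\gamma(t)-\gamma(t)|$ satisfies $\dot\delta\le C(\Lambda\delta+\eta)$. Gronwall gives $\delta\le(\delta(0)+C\eta|J|)e^{C\Lambda|J|}$ on $J$. Choosing $\eta$ and $|(\tilde z,\tilde w)-(z,w)|$ small makes this less than $\min(1,\eps)/2$. The solution therefore cannot leave the tube, and by the blow-up alternative it cannot cease to exist before the end of $J$. This gives $J\subset\dom\tilde\gamma$ and $C^0$ closeness on $J$. $C^1$ closeness follows from the equation itself, since $\dot{\tilde\gamma}-\dot\gamma$ is controlled by $\delta$ plus $|\tilde{\mathcal f}'-\mathcal f'|$. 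This bookkeeping is routine; the only subtle point is that the neighborhoods $\mathcal N$ only constrain $\tilde{\mathcal f}$ on a compact set, which is why the tube argument is needed.

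For (b), note first that the time translation $(t_0,\mathcal f)\mapsto\mathcal f(t_0+\cdot\,;\cdot)$ is jointly continuous $\R\times\Hsp\to\Hsp$, by uniform continuity on compacts. Composing with (a) shows that $\mathcal G(\mathcal f)\in\Fsp$. Continuity of $\mathcal f\mapsto\mathcal G(\mathcal f)$ into the compact-open topology follows from joint continuity of $(\mathcal f,t_0,z,w)\mapsto\mathcal G(\mathcal f)(t_0,z,w)$ by the standard tube-lemma argument. Let $K$ be compact and $V$ open with $\mathcal G(\mathcal f)(K)\subset V$. For each $k\in K$, choose a product neighborhood $\mathcal N_k\times U_k$ mapped into $V$, cover $K$ by finitely many $U_k$, and intersect the corresponding $\mathcal N_k$. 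The stated consequence is then immediate. The condition that the solution exists on $[t_0-T,t_0+T]$ is equivalent to $\mathcal G(\mathcal f)(t_0,z,w)\in\Csp_{T}$, after shifting time, and $\Csp_T$ is open by \cref{e-Csp-T}. So the set in question is the preimage of the subbasic open set $\{u:u(K)\subset\Csp_T\}$.

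For (c), the main obstacle is to show that nearby curves do not hit $M$ earlier than $t_1$. Fix $\gamma\in\mathcal P(M)$ with first crossing $t_1$. Since $M$ is locally the zero set of a submersion $\phi$ and $\gamma$ is transversal at $t_1$, we have $\frac d{dt}\phi(\gamma(t))\ne0$ near $t_1$. By the implicit function theorem (a $C^1$-small perturbation of a function with nonzero derivative), every $\tilde\gamma$ that is $C^1$-close on $[t_1-r,t_1+r]$ crosses $M$ exactly once there, transversally, at a time close to $t_1$ and at a point close to $\gamma(t_1)$. On $[0,t_1-r]$, the compact set $\gamma([0,t_1-r])$ has positive distance from the closed set $M$, because $\gamma(0)\notin M$ and $t_1$ is the first hit. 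Hence $C^0$-close curves avoid $M$ there. Intersecting the two neighborhoods of type $\mathcal U(\gamma,[0,t_1+r],\eps)$ yields openness of $\mathcal P(M)$ together with continuity of $t_M$ and $p_M$, where $p_M=\tilde\gamma(t_M(\tilde\gamma))$ is continuous by uniform closeness combined with continuity of $t_M$.
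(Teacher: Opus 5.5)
Your proposal is correct and follows essentially the same route as the paper: a Gr\"onwall/continuation argument in a tube around $\gamma(J)$ for (a), then joint continuity, the tube lemma and the preimage of $\{u:u(K)\subset\Csp_T\}$ for (b), and for (c) a local defining function with transversality together with the positive distance of $\gamma([0,t_1-r])$ from $M$. The only difference is cosmetic: you work with basic neighborhoods directly, while the paper argues with sequences, which is justified by the metrizability of $\Hsp\times\C^2$ and the first countability of $\Csp$.
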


Without transversality, (c) is false: a solution which touches the circle $|z|=\rho$ tangentially from the inside can be perturbed so that it does not reach this circle until much later, or not at all. We will use (c) mostly through the following consequence: if $V\subset M$ is relatively open and $T>0$, then
    \[\label{e-crossing-open}
        \mathcal W(M,V,T):=\left\{(\mathcal f,z,w)\in\Hsp\times\C^2:(\mathcal f,z,w)\in\mathcal P(M),\ t_M(\mathcal f,z,w)<T,\ p_M(\mathcal f,z,w)\in V\right\}\qquad\text{is open,}\]
being the preimage under $\mathcal F$ of the open set $\{\gamma\in\mathcal P(M):t_M(\gamma)<T,\ p_M(\gamma)\in V\}$. In words, $(\mathcal f,z,w)\in\mathcal W(M,V,T)$ means that the solution for $\mathcal f$ with data $(z,w)\notin M$ at time $0$ reaches $M$ for the first time before the time $T$, transversally and at a point of $V$.

\begin{proof}
    \emph{(a) Continuity of $\mathcal F$.} Since $\Hsp\times\C^2$ is metrizable, it suffices to consider a sequence $(\mathcal f_k,z_k,w_k)\to(\mathcal g,z,w)$. Let $\gamma=(\bm z,\bm w):=\mathcal F(\mathcal g,z,w)$ and $\gamma_k=(\bm z_k,\bm w_k):=\mathcal F(\mathcal f_k,z_k,w_k)$, and let $J\subset\dom\gamma$ be a compact interval; we may assume that $0\in J$. We have to show that $J\subset\dom\gamma_k$ for large $k$, and that $\gamma_k\to\gamma$ in $C^1(J)$. Let $L:=1+\max_J|\bm z|$, let $\Lambda:=\sqrt2+\sup_{J\times D_L}|\mathcal g''|$, and let
        \[\eta_k:=|z_k-z|+|w_k-w|+\sup_{J\times D_L}|\mathcal f_k'-\mathcal g'|\to0.\]
    Let $\bm\Delta_k(t):=|\bm z_k(t)-\bm z(t)|+|\bm w_k(t)-\bm w(t)|$. As long as $t\in J$ and $\bm z_k$ exists and satisfies $|\bm z_k|<L$ between $0$ and $t$, we have
        \[\abs{\mathcal f_k'(t;\bm z_k)-\mathcal g'(t;\bm z)}\leq\abs{\mathcal f_k'(t;\bm z_k)-\mathcal g'(t;\bm z_k)}+\abs{\mathcal g'(t;\bm z_k)-\mathcal g'(t;\bm z)}\leq\eta_k+(\Lambda-\sqrt2)|\bm z_k-\bm z|,\]
    so that, integrating \cref{e-appendix-first-order} and using Gr\"onwall's inequality,
        \[\bm\Delta_k(t)\leq\eta_k+\int_{[0,t]}\p{\Lambda\bm\Delta_k+\eta_k},\qquad\text{and hence}\qquad\bm\Delta_k(t)\leq\eta_k\p{1+|J|}e^{\Lambda|J|}.\]
    For large $k$ the right-hand side is at most $\frac12$, and then $|\bm z_k|\leq|\bm z|+\frac12\leq L-\frac12<L$ for as long as the above applies. By a continuity argument, the above therefore applies for as long as $\gamma_k$ exists within $J$; since $\gamma_k$ remains in a compact set during this time, it exists on all of $J$, and $\sup_J\bm\Delta_k\to0$. Finally, $\dot\gamma_k\to\dot\gamma$ uniformly on $J$ by \cref{e-appendix-first-order}, because $\mathcal f_k'(t;\bm z_k)\to\mathcal g'(t;\bm z)$ uniformly on $J$ by the last display.

    \emph{(b) Continuity of $\mathcal G$.} The map $(t_0,\mathcal f)\mapsto\mathcal f(t_0+\cdot\,;\cdot)$ is continuous from $\R\times\Hsp$ to $\Hsp$, since a continuous function is uniformly continuous on compact subsets of $\R\times\C$. Hence, by (a), the map
        \[\Hsp\times\R\times\C^2\to\Csp,\qquad(\mathcal f,t_0,z,w)\mapsto\mathcal G(\mathcal f)(t_0,z,w)\]
    is continuous; in particular, $\mathcal G(\mathcal f)\in\Fsp$ for every $\mathcal f\in\Hsp$. Now let $K\subset\R\times\C^2$ be compact and let $V\subset\Csp$ be open. Then $\{\mathcal f\in\Hsp:\mathcal G(\mathcal f)(K)\subset V\}$ is the set of all $\mathcal f$ for which $\{\mathcal f\}\times K$ is contained in the preimage of $V$ under the map above, which is an open subset of $\Hsp\times\R\times\C^2$; since $K$ is compact, this set is open in $\Hsp$ (tube lemma). As the sets \cref{e-Fsp-subbasis} generate the topology of $\Fsp$, this shows that $\mathcal G$ is continuous. The set in the last assertion of (b) is $\mathcal G\inv\p{\{u\in\Fsp:u(K)\subset\Csp_T\}}$, which is open by \cref{e-Csp-T}.

    \emph{(c) The first crossing.} Since $\Csp$ is first countable, it suffices to show the following: if $\gamma\in\mathcal P(M)$ and $\gamma_k\to\gamma$ in $\Csp$, then $\gamma_k\in\mathcal P(M)$ for all large $k$, and $t_M(\gamma_k)\to t_M(\gamma)$ and $p_M(\gamma_k)\to p_M(\gamma)$. Let $t_1:=t_M(\gamma)$ and $p:=\gamma(t_1)$.

    Since $M$ is an embedded hypersurface, there are an open neighborhood $U\subset\C^2$ of $p$ and a smooth function $h:U\to\R$ with $M\cap U=h\inv(0)$ and $dh\neq0$ on $U$. By transversality, $(h\circ\gamma)'(t_1)\neq0$; replacing $h$ by $-h$, we may assume that $(h\circ\gamma)'(t_1)=2c>0$. Hence there is some $\delta_*\in(0,t_1)$ such that $I:=[0,t_1+\delta_*]\subset\dom\gamma$, such that $\gamma([t_1-\delta_*,t_1+\delta_*])$ is contained in a compact subset $U'$ of $U$, and such that $(h\circ\gamma)'\geq c$ on $[t_1-\delta_*,t_1+\delta_*]$. For large $k$ we have $I\subset\dom\gamma_k$, and $\gamma_k\to\gamma$ in $C^1(I)$.

    Let $\delta\in(0,\delta_*]$. Then
        \[h(\gamma(t_1-\delta))<0<h(\gamma(t_1+\delta)),\qquad\text{and}\qquad\gamma([0,t_1-\delta])\text{ is a compact set disjoint from the closed set }M,\]
    by the definition of $t_1$ and because $\gamma(0)\notin M$; so $\gamma([0,t_1-\delta])$ has a positive distance from $M$. Therefore, for all large $k$: the set $\gamma_k([0,t_1-\delta])$ is disjoint from $M$ (in particular $\gamma_k(0)\notin M$); $\gamma_k([t_1-\delta_*,t_1+\delta_*])$ is contained in a fixed compact neighborhood of $U'$ in $U$, on which $dh$ is uniformly continuous, so that $(h\circ\gamma_k)'\geq\frac c2$ on $[t_1-\delta_*,t_1+\delta_*]$; and $h(\gamma_k(t_1-\delta))<0<h(\gamma_k(t_1+\delta))$. Hence $h\circ\gamma_k$ has exactly one zero $t_{1,k}$ in $[t_1-\delta_*,t_1+\delta_*]$, it lies in $(t_1-\delta,t_1+\delta)$, it is the first positive time at which $\gamma_k$ meets $M$, and $\gamma_k$ is transversal to $M$ at this time. Thus $\gamma_k\in\mathcal P(M)$ and $|t_M(\gamma_k)-t_1|<\delta$ for all large $k$. Since $\delta$ was arbitrary, $t_M(\gamma_k)=t_{1,k}\to t_1$, and then
        \[|p_M(\gamma_k)-p|\leq\sup_I|\gamma_k-\gamma|+|\gamma(t_{1,k})-\gamma(t_1)|\to0.\]
\end{proof}

\section{From global completeness to local completeness}\label{s-localization}
This section concerns a local version of completeness for systems of the form
    \[\label{e-ode}
        \dot{\bm z}=\sqrt2\bm w,\qquad\dot{\bm w}=\frac1{\sqrt2}\overline{\mathcal f'(t;\bm z)}
        \]
with $\mathcal f\in\Hsp$ (see \cref{ss-appendix-continuity}); recall that $\bm z$ then solves $\ddot{\bm z}=\overline{\mathcal f'(t;\bm z)}$, and conversely. We say that $\mathcal f$ is \textit{complete} if every maximal solution of \cref{e-ode}, for every initial time, is defined on all of $\R$; by \cref{t-reduction}(c), the pp-wave $M_{\mathcal f}$ is geodesically complete if and only if $\mathcal f$ is complete.
\subsection{Locally complete functions and the proof of \texorpdfstring{\cref{main-theorem}}{the main theorem}}\label{ss-general-localization}
Our goal is to prove the following:

\begin{theorem}\label{main-theorem}
    There exists a nonquadratic (i.e. not a polynomial of degree $\leq2$) holomorphic function $\mathcal f:\C\to\C$ so that all solutions to the equation $\ddot{\bm z}(t)=\overline{\mathcal f'(\bm z(t))}$ exist for all time $t\in\R$.
    \end{theorem}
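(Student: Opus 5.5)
Write $\mathcal N$ for the set of $\mathcal f\in\Haut$ that are not polynomials of degree $\leq2$. I would prove \cref{main-theorem} by a Baire category argument in $\Haut$, the complete metric space of entire functions with locally uniform convergence. Fix an exhaustion by compact sets $K_m=\{(z,w)\in\C^2:|z|,|w|\leq m\}$ and times $T\in\mathbb N$. Let $\mathcal O_{m,T}\subset\Haut$ be the set of $\mathcal f$ such that every solution of \cref{e-ode} with data in $K_m$ at time $0$ exists on $[-T,T]$. Since $\Haut$ is autonomous, the initial time does not matter. By \cref{l-continuity}(b), applied with the compact set $\{0\}\times K_m$, each $\mathcal O_{m,T}$ is open. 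The complete functions are exactly $\bigcap_{m,T}\mathcal O_{m,T}$. Hence if each $\mathcal O_{m,T}$ is dense, the complete functions form a dense $G_\delta$. The set $\mathcal N$ is open and dense: its complement, the polynomials of degree $\leq2$, is closed by Cauchy estimates and has empty interior. So the intersection is nonempty, which gives \cref{main-theorem}.

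Density of $\mathcal O_{m,T}$ reduces to a local statement. Call $\mathcal f$ \emph{locally complete on $K_m$} if all solutions with data in $K_m$ exist for all time; such an $\mathcal f$ lies in every $\mathcal O_{m,T}$. So it suffices to show that, for each $m$, locally complete functions on $K_m$ are dense in $\Haut$. Since polynomials are dense, it is enough to approximate a polynomial $p$. Consider $\mathcal f_A=A-Ae^{-A\inv p}$. As $A\to\infty$, $\mathcal f_A\to p$ uniformly on compact sets. By adding a small cubic term, we may assume $\deg p\geq3$. Up to sign conventions this is the family $-Ae^{A\inv p}$ plus a constant, and constants do not change $\mathcal f'$.

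The remaining task is to show that, for $A$ large, $\mathcal f_A$ is locally complete on $K_m$. I would use the scaling symmetries of $\ddot z=\overline{\mathcal f'(z)}$: replacing $z$ by $\lambda z$ and $t$ by $\mu t$ rescales $\mathcal f$. Choosing $\lambda\sim A^{1/n}$ with $n=\deg p$, the rescaled function becomes $1-e^{-z^n}$ plus a perturbation that is small on large compact sets. Under this rescaling the data in $K_m$ become small data. Thus the core claim is: for $\mathcal f_0=1-e^{-z^n}$ and all its small perturbations, solutions with small data exist for all time.

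This core claim is the main obstacle. The heuristic is that solutions starting near the origin are pushed into the sectors where $\Re(z^n)>0$, in which $\mathcal f_0'$ decays exponentially. There the motion is essentially free, linear in $t$, so the solution cannot blow up. I would prove a focusing statement: small-data trajectories, once $|z|$ is large, stay within an aperture around the rays $\arg z^n=0$ that shrinks as $|z|$ grows. The tool is the model case $\mathcal f=z^n$ at zero energy in suitable complex coordinates, where a Lyapunov function for the angular variables shows convergence to those rays. Stability under perturbation and nonzero energy would come from \cref{l-continuity}(c), applied to first crossing times of circles $|z|=\rho$.

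Beyond the crossing radius, I would use a bootstrap: inside the confining sector the force is $O(e^{-c|z|^n})$, so $|\dot z|$ stays bounded and $z$ grows at most linearly. The force is then integrable along the trajectory, which keeps the trajectory inside the sector. The delicate point is organizing the constants across the different scales so that the small perturbation never destroys confinement.
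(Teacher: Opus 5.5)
Your top-level argument is the paper's: the Baire category argument in $\Haut$ with the open sets from \cref{l-continuity}(b), density of the locally complete functions $A-Ae^{-p/A}$, and the rescaling of \cref{l-poly-exp}, which turns data in $K_m$ into small data for $1-e^{-\zeta^n+\mathcal p(\zeta)}$. One point there needs tightening. Your core claim must be stated for this structured family, with $\deg\mathcal p<n$ and coefficients of modulus $\leq\eps$ (the class $F_\eps$ of \cref{d-F-eps}), and not for perturbations that are merely small on large compact sets. Smallness on compact sets constrains nothing at infinity: adding $\eta z^n$ with $\eta\to0^+$ makes the valleys fall off like $-\eta|z|^n$ again, as in the polynomial case. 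Your large-scale bootstrap, by contrast, needs $|\tilde{\mathcal f}'|\lesssim e^{-c|z|^n}$ in the sector, uniformly over the perturbations.

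The genuine gap is in the core claim, at exactly the point you call delicate, and the tools you list do not close it on their own.

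\begin{itemize}
\item Near the origin, $1-e^{-z^n}$ looks like the model $z^n$ only after rescaling by a radius $\rho\to0$. So the model-case Lyapunov function plus continuity of first crossings gives aperture $A'$ at radius $R'$ only when $R'\lll A'$ (this is \cref{l-small-scale-focusing}).
\item To reach the large radii where your bootstrap starts, the solution must cross the region $|z|\sim1$, where neither the model nor free motion is a good approximation. The soft tool there is continuity around the radial solutions, which needs the aperture to be small relative to the radii (\cref{l-medium-scale-focusing}: $A\lll A',R,R'$).
\item Joining the two steps at one radius would require $R'\lll A'$ and $A'\lll R'$ simultaneously.
\end{itemize}

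The paper breaks this circularity with a quantitative, scale-uniform estimate (\cref{l-medium-small-scale-focusing}). For the actual zero-energy system of $1-e^{-z^n}$ in the coordinates of \cref{l-cov}, the quadratic form $\Psi_{\mathrm{small}}=\tfrac12\theta^2+\beta\theta\varphi+\varphi^2$ satisfies
\[
\frac{d}{dr}\Psi_{\mathrm{small}}=\bigl(-\beta+O(\Psi_{\mathrm{small}}+e^r)\bigr)\Psi_{\mathrm{small}}\leq0
\]
whenever $r\ll0$ and $\Psi_{\mathrm{small}}\ll1$, with thresholds independent of the aperture size. This carries an arbitrarily small aperture from an arbitrarily small radius up to a fixed radius $R_3\ll1$ with no loss. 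Only then can the continuity step and the large-scale step (\cref{l-large-scale-focusing}) be chained, in the order of constants \cref{chain0}. Nonzero energy and the perturbation are handled once, over the whole range, by \cref{l-stable-focusing}. You need this estimate, or an equivalent uniform-in-scale control, to turn your outline into a proof.

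Note also that at large scales nothing shrinks. Along an almost straight line $\arg\bm z\to\arg\bm w$, so only $|\theta|+n|\varphi|$ is approximately preserved. The data handed to your bootstrap must therefore have both $\theta$ and $\varphi$ small, not just $\theta$.
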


The main idea of this section is that we can reduce the global question of completeness to a local one, in the following sense:

\begin{definition}\label{d-locally-complete}
    Let $\mathcal f\in\Hsp$ and let $K\subset\C^2$ be a compact set. We say that $\mathcal f$ is $K$-\textit{locally complete} if, for every $t_0\in\R$ and every $(z,w)\in K$, the maximal solution of \cref{e-ode} with $(\bm z(t_0),\bm w(t_0))=(z,w)$ is defined on all of $\R$. We denote by $\mathfrak f_K\subset\Hsp$ the set of $K$-locally complete functions. For $a,b>0$, we say that $\mathcal f$ is $(a,b)$-\textit{locally complete} if it is $K$-locally complete for $K=\{(z,w)\in\C^2:|z|\leq a,\ \sqrt2|w|\leq b\}$, i.e. if every solution with $|\bm z(t_0)|\leq a$ and $|\dot{\bm z}(t_0)|\leq b$ at some time $t_0$ exists for all time; and we say that $\mathcal f$ is \textit{locally complete} if it is $(a,b)$-locally complete for some $a,b>0$.
    \end{definition}

Thus $\mathfrak f_{K'}\subset\mathfrak f_K$ whenever $K\subset K'$, and $\mathcal f$ is complete if and only if it is $(i,i)$-locally complete for every integer $i\geq1$. The main point is that there is a dense family of locally complete functions:

\begin{lemma}\label{l-locally-complete-family}
    Let $a,b>0$. The set of $(a,b)$-locally complete functions in $\Haut$ is dense in $\Haut$, and the set of $(a,b)$-locally complete functions in $\Hinf$ is dense in $\Hinf$.
    \end{lemma}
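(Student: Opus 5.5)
The plan follows the outline in \cref{ss-outline}. Density will come from approximation followed by a rescaling. A neighborhood of $\mathcal g$ is determined by a compact set $[-L,L]\times\overline{D_L}$ and a tolerance $\eta$ (with finitely many $t$-derivatives in the case of $\Hinf$). So it suffices to produce, for every such neighborhood, a member that is $(a,b)$-locally complete.

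\emph{First step: approximate by a polynomial.} In the autonomous case, I would use Taylor truncation to replace $\mathcal g$ by a polynomial $p$ close to it on $\overline{D_{L}}$. Adding a tiny multiple $\epsilon z^n$ with $n\geq3$ lets me assume $p$ has degree $n\geq3$. In the $\Hinf$ case, I would first cut off in $t$ to obtain a function that is close on $[-L,L]$ in the seminorms \cref{e-Hinf-seminorms}. I would then Taylor-expand in $z$ uniformly on $[-L-1,L+1]\times\overline{D_{L+1}}$, using the Cauchy estimates. This gives a function $p(t;z)$ that is a polynomial in $z$ with smooth coefficients, and I would add $\epsilon z^n$ as before.

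\emph{Second step: the candidate.} I would set $\mathcal f_A:=A-Ae^{-A\inv p}$. On the compact set we have $\mathcal f_A=p+O(A\inv)$, and the same holds for the $t$-derivatives. Hence $\mathcal f_A$ lies in the prescribed neighborhood once $A\gg1$. It then remains to show that $\mathcal f_A$ is $(a,b)$-locally complete for some large $A$. This is \cref{l-poly-exp} in the outline.

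I would prove that reduction through the scaling symmetry. If $\bm z$ solves $\ddot{\bm z}=\overline{\mathcal f'(\bm z)}$, then $\lambda\bm z(\mu t)$ solves the same equation for a rescaled potential. Setting $A\inv p(z)\approx A\inv c z^n$ and rescaling $z=\lambda\zeta$ with $|c|\lambda^n=A$ turns $A\inv p$ into $\zeta^n$ plus lower-order terms. These lower-order terms have size $O(A^{-1/n})$ on compacts, and after a rotation $c$ becomes $1$. Rescaling time so that the potential becomes $1-e^{-\zeta^n}+$ small, the initial data $|z|\leq a$, $|\dot z|\leq b$ become small data in the new variables as $A\to\infty$.

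Local completeness of $\mathcal f_A$ for bounded data is therefore equivalent to $(\delta,\delta)$-local completeness of small, possibly time-dependent, perturbations of $1-e^{-\zeta^n}$. This is precisely \cref{main-lemma}, whose statement I would assume to be of this form.

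\emph{Main obstacle.} The hard part is \cref{main-lemma} itself, which requires the focusing theorem. The trajectories cannot all be confined near the origin. Instead, trajectories that escape must enter the sectors $\arg(z^n)\approx0$, where $|\mathcal f'|$ decays like $e^{-\Re z^n}$. Once there, they move with asymptotically constant velocity and so exist for all time. The aperture has to be controlled uniformly under perturbations, and this is where the Lyapunov function of \cref{c-model} and the conserved energy enter.

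For the present lemma I would only carry out the first two steps: the approximation and the rescaling. I would take care that the rescaling constants do not depend on $t_0$, which holds because the perturbation is uniformly small in $t$ once the $t$-dependence of $p$ is bounded and cut off outside a compact $t$-range.
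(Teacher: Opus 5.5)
Your approach is the paper's. The polynomial approximation uses Taylor truncation, a small top term $\eta z^n$ and a cutoff in $t$. The candidate is $\mathcal f_A=A-Ae^{-A^{-1}p}$ with $\|\mathcal f_A-p\|_{k,S}=O(A^{-1})$. Local completeness then comes from the rescaling of \cref{l-poly-exp} down to \cref{main-lemma}. In the $\Hinf$ case you should say explicitly that $n$ exceeds the truncation degree. Then the leading coefficient is the constant $\eta$ for every $t$, as \cref{l-poly-exp} requires, and a single spatial scaling works.

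There is one real gap, in the last reduction. You assume \cref{main-lemma} covers every perturbation of $1-e^{-\zeta^n}$ that is small uniformly in $t$, and it does not. Besides $\sup_t|a_j(t)|\le\eps$, it requires the total variation $V(\mathcal p)=\sum_{j<n}\int_\R|\dot a_j|\,dt$ to be at most $\eps$. This matters. With time-dependent coefficients the energy is not conserved, and a solution with small data can stay near the degenerate critical point $\zeta=0$ for an arbitrarily long time. The only thing that keeps its energy below the threshold $\kappa$ of the focusing theorem until it reaches the radius $R_*$ is the time integral of $|\dot{\bm\kappa}|\lesssim\sum_j|\dot a_j|$, i.e. $V(\mathcal p)$ (see \cref{e-energy-drift,e-main-lemma-energy-drift}). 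Uniform smallness of the coefficients gives no control over this drift.

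Your construction does satisfy the missing hypothesis, but you have to check it. This is also the real reason the cutoff in $t$ is needed, not just boundedness of the coefficients. Since $\chi c_j$ is smooth and compactly supported, $\int_\R|\tfrac{d}{dt}(\chi c_j)|\,dt<\infty$. Rescale as in the proof of \cref{l-poly-exp}, with $|\mu|=(|\eta|/A)^{1/n}$, $\sigma=A^{1/2}|\mu|$ and $a_j(s)=\pm A^{-1}\mu^{-j}(\chi c_j)(\sigma^{-1}s)$. Total variation is invariant under the reparametrization $s=\sigma t$, so $\int_\R|da_j/ds|\,ds=|\eta|^{-j/n}A^{j/n-1}\int_\R|\tfrac{d}{dt}(\chi c_j)|\,dt$. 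Since $j<n$, this tends to $0$ as $A\to\infty$, at the same rate as the sup norms. With this check added, which is exactly the hypothesis $\int_\R|\dot c_j|<\infty$ of \cref{l-poly-exp}, your argument is complete.
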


The proof of \cref{l-locally-complete-family} occupies \cref{ss-locally-complete-candidate,s-logic,s-focusing-lemmas}. We now show that it implies \cref{main-theorem}, and in fact more:

\begin{theorem}[Completeness is generic]\label{t-baire}
    The set of complete functions in $\Haut$ is a dense $G_\delta$ subset of $\Haut$, and the set of complete functions in $\Hinf$ is a dense $G_\delta$ subset of $\Hinf$. In particular, both sets are nonempty.
    \end{theorem}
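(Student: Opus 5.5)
We plan to write the set of complete functions as a countable intersection of open dense sets and then apply the Baire category theorem in the completely metrizable spaces $\Haut$ and $\Hinf$ (\cref{ss-appendix-continuity}). Completeness is a statement about all times, so openness cannot come directly from $K$-local completeness. Instead we use the finite-time sets supplied by \cref{l-continuity}(b). For integers $i,T\geq1$, let $K_i=\{(z,w):|z|\leq i,\ \sqrt2|w|\leq i\}$. Let $\mathcal O_{i,T}\subset\Hsp$ be the set of $\mathcal f$ such that, for every $t_0\in[-T,T]$ and every $(z,w)\in K_i$, the solution of \cref{e-ode} with data $(z,w)$ at time $t_0$ exists on $[t_0-T,t_0+T]$. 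Applying \cref{l-continuity}(b) with the compact set $[-T,T]\times K_i$ shows that $\mathcal O_{i,T}$ is open in $\Hsp$. Because the inclusions $\Haut\hookrightarrow\Hsp$ and $\Hinf\hookrightarrow\Hsp$ are continuous, $\mathcal O_{i,T}\cap\Haut$ and $\mathcal O_{i,T}\cap\Hinf$ are open in $\Haut$ and $\Hinf$ respectively.

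Next we check that $\bigcap_{i,T}\mathcal O_{i,T}$ is exactly the set of complete functions. One inclusion is clear. For the other, fix $\mathcal f$ in the intersection, a time $t_0$, and data $(z,w)$. Choose $i$ with $(z,w)\in K_i$. For every $T\geq|t_0|$, membership in $\mathcal O_{i,T}$ shows that the maximal solution exists on $[t_0-T,t_0+T]$. Letting $T\to\infty$, the solution is defined on all of $\R$. Hence the complete functions form a $G_\delta$ set in each of the two spaces.

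It remains to prove density of each $\mathcal O_{i,T}$. Every $(i,i)$-locally complete function lies in $\mathcal O_{i,T}$ for all $T$, since its solutions with data in $K_i$ at any initial time exist globally. By \cref{l-locally-complete-family} with $a=b=i$, the $(i,i)$-locally complete functions are dense in $\Haut$ and in $\Hinf$. Therefore $\mathcal O_{i,T}\cap\Haut$ is dense in $\Haut$, and $\mathcal O_{i,T}\cap\Hinf$ is dense in $\Hinf$. The Baire category theorem then says that the countable intersection is a dense $G_\delta$ set, and in particular it is nonempty.

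All of the analytic difficulty is contained in \cref{l-locally-complete-family}, which we assume here. The only point in this argument that needs care is that we intersect over the finite-time conditions $\mathcal O_{i,T}$ rather than over the sets $\mathfrak f_{K_i}$ themselves, because only the former are visibly open. \Cref{main-theorem} then follows from the autonomous case: the set of nonquadratic entire functions is open and dense in $\Haut$, so its intersection with the dense $G_\delta$ set of complete functions is still residual and hence nonempty.
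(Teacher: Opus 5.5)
Your proposal is correct and follows essentially the same route as the paper: the paper likewise writes the complete functions as $\bigcap_{i,j}\mathfrak f_{K_i,j}$, where each finite-time set is open by \cref{l-continuity}(b) and contains the dense set $\mathfrak f_{K_i}$ from \cref{l-locally-complete-family}, and then applies the Baire category theorem in the completely metrizable spaces $\Haut$ and $\Hinf$. Your check that $\bigcap_{i,T}\mathcal O_{i,T}$ is exactly the set of complete functions, by choosing $T\geq|t_0|$, is a correct and slightly more explicit version of the paper's step $\mathfrak f_K=\bigcap_i\mathfrak f_{K,i}$.
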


\begin{proof}
    For a compact set $K\subset\C^2$ and an integer $i\geq1$, let $\mathfrak f_{K,i}\subset\Hsp$ denote the set of all $\mathcal f$ such that, for every $t_0\in[-i,i]$ and every $(z,w)\in K$, the maximal solution of \cref{e-ode} with $(\bm z(t_0),\bm w(t_0))=(z,w)$ is defined on $[t_0-i,t_0+i]$. By \cref{l-continuity}(b), applied with the compact set $[-i,i]\times K$ and $T=i$, the set $\mathfrak f_{K,i}$ is open in $\Hsp$, and clearly $\mathfrak f_K=\bigcap_{i\geq1}\mathfrak f_{K,i}$. Writing $K_i:=\{(z,w):|z|\leq i,\ \sqrt2|w|\leq i\}$, the set of complete functions is therefore
        \[\label{e-FU-countable-intersection}
            \bigcap_{i\geq1}\mathfrak f_{K_i}=\bigcap_{i,j\geq1}\mathfrak f_{K_i,j},
            \]
    a countable intersection of open subsets of $\Hsp$. Since the inclusion $\Haut\hookrightarrow\Hsp$ is continuous, each $\mathfrak f_{K_i,j}\cap\Haut$ is open in $\Haut$; and it contains $\mathfrak f_{K_i}\cap\Haut$, which is dense in $\Haut$ by \cref{l-locally-complete-family}. So the set of complete functions in $\Haut$ is a countable intersection of open dense subsets of $\Haut$, which is dense by the Baire category theorem, because $\Haut$ is completely metrizable (\cref{ss-appendix-continuity}). The same argument applies to $\Hinf$.
\end{proof}

\begin{proof}[Proof of \cref{main-theorem}]
    The polynomials of degree $\leq2$ form a finite-dimensional, hence closed, proper subspace of $\Haut$, and a proper closed subspace has empty interior; so the nonquadratic functions form an open dense subset of $\Haut$. By \cref{t-baire} and the Baire category theorem, this set contains a complete function.
\end{proof}

\begin{proof}[Proof of \cref{t-counterexample}]
    This is \cref{main-theorem} and \cref{c-ek-false}.
\end{proof}

\subsection{Universal pp-waves}\label{ss-universal}
The Baire category argument leaves a lot of room: any other property which is shared by a dense $G_\delta$ set of functions can be combined with completeness. In this subsection we use this to construct complete gravitational pp-waves which are \emph{universal}, in the sense that their translates approximate every other gravitational pp-wave. There are two versions of this notion, one for pp-waves with $H$ independent of $u$ and one for general pp-waves.

\begin{definition}\label{d-universal}~\\
    \begin{enumerate}
        \item[(i)] An entire function $\mathcal f\in\Haut$ is \textit{universal} if, for every entire function $\mathcal g$, every compact set $K\subset\C$ and every $\eps>0$, there exists some $z_0\in\C$ so that
            \[\label{e-universal}
                \abs{\mathcal f(z+z_0)-\mathcal g(z)}\leq\eps\qquad\text{for all }z\in K.
                \]
        \item[(ii)] A function $\mathcal f\in\Hinf$ is \textit{$u$-universal} if, for every $\mathcal g\in\Hinf$, every $S>0$, every integer $k\geq0$ and every $\eps>0$, there exists some $(t_0,z_0)\in\R\times\C$ so that
            \[\label{e-universal-time}
                \big\|\mathcal f(\cdot+t_0;\cdot+z_0)-\mathcal g\big\|_{k,S}\leq\eps,
                \]
            where $\|\cdot\|_{k,S}$ is the seminorm \cref{e-Hinf-seminorms}.
        \end{enumerate}
    We say that the gravitational pp-wave $M_{\mathcal f}$ of \cref{e-g-f} is a \textit{universal pp-wave} if $\mathcal f\in\Haut$ is universal, and a \textit{$u$-universal pp-wave} if $\mathcal f\in\Hinf$ is $u$-universal.
    \end{definition}

\begin{remark}\label{r-universal}
    \begin{enumerate}
        \item[(i)] The existence of universal entire functions is a classical theorem of Birkhoff (\cite{Birkhoff_1929}; see \cref{ss-related-work}); the content of \cref{t-universal} below is that universality is compatible with completeness. A universal function is not a polynomial (a polynomial of degree $d$ and its translates stay at a positive distance from $z^{d+1}$ in $L^\infty(\overline{D_1})$), and if $\mathcal f$ is $u$-universal, then $\mathcal f(t;\cdot)$ is not a polynomial of degree $\leq2$ for some $t$ (apply the definition with $\mathcal g(t;z)=z^3$, $k=0$ and $\eps$ smaller than the distance from $z^3$ to the quadratic polynomials in $L^\infty(\overline{D_1})$). So in both cases $M_{\mathcal f}$ is not a plane wave, by \cref{t-reduction}(d). A $u$-universal function depends on $t$, so neither of the two notions contains the other.
        \item[(ii)] In terms of the metric, \cref{d-universal} says the following. By \cref{t-reduction}(a), every gravitational pp-wave is of the form $M_{\mathcal g}$ for some $\mathcal g\in\Hinf$, and those with $H$ independent of $u$ are of the form $M_{\mathcal g}$ with $\mathcal g\in\Haut$. The translation $\tau:(u,v,z)\mapsto(u+t_0,v,z+z_0)$ satisfies $\tau^*g_{\mathcal f}=g_{\mathcal f(\cdot+t_0;\cdot+z_0)}$, and the metrics $g_{\mathcal f(\cdot+t_0;\cdot+z_0)}$ and $g_{\mathcal g}$ differ only in the coefficient of $du^2$. By the Cauchy estimates, a bound of the form \cref{e-universal} on a neighborhood of $K$ controls all $z$-derivatives of $\mathcal f(\cdot+z_0)-\mathcal g$ on $K$, and a bound of the form \cref{e-universal-time} with $S+1$ in place of $S$ controls all mixed derivatives $\partial_t^m\partial_z^l$, $m\leq k$, on $[-S,S]\times\overline{D_S}$. Hence, if $\mathcal f$ is $u$-universal (resp.\ universal), then for every gravitational pp-wave $g'$ (resp.\ every gravitational pp-wave with $H'$ independent of $u$), every compact $K'\subset\R^4$, every integer $k\geq0$ and every $\eps>0$ there is a translation $\tau$ (resp.\ a translation with $t_0=0$) so that $\|\tau^*g_{\mathcal f}-g'\|_{C^k(K')}\leq\eps$. This is the sense in which the word ``universal'' is used in \cref{t-universal-intro}.
        \end{enumerate}
    \end{remark}

We will use the following density statement.

\begin{lemma}\label{l-density-time}
    \begin{enumerate}
        \item[(a)] Let $\mathcal g\in\Hinf$, and let $c_j(t):=\tfrac1{j!}\partial_z^j\mathcal g(t;0)$ denote the Taylor coefficients of $\mathcal g(t;\cdot)$. Then each $c_j$ is smooth, and for every $k\geq0$, $S>0$ and $\eps>0$ we have $\big\|\mathcal g-\sum_{j\leq N}c_j(t)z^j\big\|_{k,S}\leq\eps$ for $N$ sufficiently large.
        \item[(b)] The polynomials in $z$ with Gaussian rational coefficients (i.e. coefficients whose real and imaginary parts are rational) are dense in $\Haut$, and the polynomials in $(t,z)$ with Gaussian rational coefficients are dense in $\Hinf$.
        \end{enumerate}
    \end{lemma}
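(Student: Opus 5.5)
For (a), the plan is to use the Cauchy integral formula in $z$ and to differentiate under the integral sign in $t$. Fix $S>0$ and $k\geq0$. By the discussion in \cref{ss-appendix-continuity}, each $\partial_t^m\mathcal g(t;\cdot)$ is entire and continuous in $(t,z)$. For $|z|\leq S$ and $R:=2S+1$ we have
\[\mathcal g(t;z)=\frac1{2\pi i}\oint_{|\xi|=R}\frac{\mathcal g(t;\xi)}{\xi-z}\,d\xi,\qquad c_j(t)=\frac1{2\pi i}\oint_{|\xi|=R}\frac{\mathcal g(t;\xi)}{\xi^{j+1}}\,d\xi .\]
Since $\partial_t^m\mathcal g$ is continuous on the compact set $[-S,S]\times\{|\xi|=R\}$, differentiation under the integral is justified. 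Hence $c_j$ is $C^\infty$, with $c_j^{(m)}$ equal to the $j$-th Taylor coefficient of $\partial_t^m\mathcal g(t;\cdot)$.

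Next, expand $1/(\xi-z)=\sum_j z^j/\xi^{j+1}$. The remainder after $N$ terms gives
\[\Big|\partial_t^m\Big(\mathcal g-\sum_{j\leq N}c_jz^j\Big)\Big|\leq \frac{R}{R-S}\Big(\frac SR\Big)^{N+1}\sup_{[-S,S]\times\{|\xi|=R\}}|\partial_t^m\mathcal g|\]
on $[-S,S]\times\overline{D_S}$. Here $S/R<\tfrac12$, and the supremum is bounded by $\|\mathcal g\|_{k,R}$ for $m\leq k$. So the error tends to zero geometrically in $N$, which proves (a).

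For (b), start with $\Haut$. Given $\mathcal g$, a compact set $\overline{D_S}$ and $\eps>0$, take a Taylor polynomial (the autonomous case of (a), or simply the convergence of the Taylor series). Then replace each of its finitely many coefficients by a Gaussian rational within $\eps/(N+1)\max(1,S)^{-N}$; this costs at most $\eps$ on $\overline{D_S}$.

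For $\Hinf$, first use (a) to reduce to $\sum_{j\leq N}c_j(t)z^j$ with $c_j$ smooth. It then suffices to approximate each $c_j$ in $C^k([-S,S])$ by polynomials in $t$ with Gaussian rational coefficients, to accuracy $\eps/((N+1)\max(1,S)^N)$. To do this, apply the Weierstrass theorem to $c_j^{(k)}$, integrate $k$ times, and match the Taylor data $c_j^{(m)}(0)$, $m<k$, exactly. This gives a polynomial $q$ with $\|c_j-q\|_{C^k([-S,S])}$ as small as desired. Finally, perturb the coefficients of $q$ to Gaussian rationals, which is a small change in $C^k([-S,S])$ because $q$ has finitely many monomials.

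Since the seminorms $\|\cdot\|_{k,S}$ generate the topology of $\Hinf$, and the sets \cref{e-Hsp-neighborhood} generate that of $\Haut$, this gives the density. The only mildly delicate step is justifying differentiation under the integral in (a), which uses only the continuity of $\partial_t^m\mathcal g$ recorded in \cref{ss-appendix-continuity}.
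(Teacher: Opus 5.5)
Your proposal is correct and follows essentially the same route as the paper: Cauchy's integral formula differentiated under the integral sign gives smoothness of $c_j$ and geometric decay of the Taylor tail in each seminorm $\|\cdot\|_{k,S}$, and then the Weierstrass theorem applied to the $k$-th derivative (integrated $k$ times) plus a Gaussian-rational perturbation of finitely many coefficients gives (b). The only cosmetic difference is that you bound the tail through the remainder of the expanded Cauchy kernel, while the paper sums the coefficient estimates $|\partial_t^m c_j|\leq M(2S)^{-j}$.
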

\begin{proof}
    (a) Let $M:=\|\mathcal g\|_{k,2S}$. Differentiating $c_j(t)=\frac1{2\pi i}\oint_{|\zeta|=2S}\mathcal g(t;\zeta)\zeta^{-j-1}d\zeta$ under the integral sign, we see that $c_j$ is smooth and that $|\partial_t^mc_j(t)|\leq M(2S)^{-j}$ for $m\leq k$ and $|t|\leq S$. Hence $\big\|\sum_{j>N}c_j(t)z^j\big\|_{k,S}\leq M2^{-N}$. (b) For $\Haut$, this follows from the uniform convergence of the Taylor series on compact sets, since the finitely many coefficients of a Taylor polynomial may be perturbed so that they become Gaussian rational. For $\Hinf$, we use (a) and the fact that polynomials are dense in $C^k([-S,S])$ (approximate the $k$-th derivative uniformly by the Weierstrass approximation theorem, and integrate $k$ times) to replace each $c_j$, $j\leq N$, by a polynomial in $t$, and then perturb the finitely many coefficients so that they become Gaussian rational.
\end{proof}

\begin{theorem}\label{t-universal}
    The universal functions form a dense $G_\delta$ subset of $\Haut$, and the $u$-universal functions form a dense $G_\delta$ subset of $\Hinf$. Consequently, by \cref{t-baire} and the Baire category theorem, there exist complete universal functions $\mathcal f\in\Haut$, and there exist complete $u$-universal functions $\mathcal f\in\Hinf$; that is, by \cref{t-reduction}(c) and \cref{r-universal}(ii), there exist geodesically complete universal pp-waves, and geodesically complete $u$-universal pp-waves.
    \end{theorem}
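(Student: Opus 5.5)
We will follow Birkhoff's argument in its Baire category form, as in \cite{Grosse_Erdmann_1999}, treating $\Haut$ and $\Hinf$ in parallel. The first step is to write the universal functions as a countable intersection of open sets. By \cref{l-density-time}(b), there are countable dense sets: $\{p_m\}\subset\Haut$ of polynomials in $z$ with Gaussian rational coefficients, and $\{q_m\}\subset\Hinf$ of such polynomials in $(t,z)$. For $\Haut$ we set
\[\mathcal U_{m,L,s}:=\bigcup_{z_0\in\C}\left\{\mathcal f\in\Haut:\sup_{\overline{D_L}}\abs{\mathcal f(\cdot+z_0)-p_m}<2^{-s}\right\},\qquad m,L,s\geq1.\]
Each set in the union is open, because $\mathcal f\mapsto\sup_{\overline{D_L}}\abs{\mathcal f(\cdot+z_0)-p_m}$ is continuous for fixed $z_0$. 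Hence $\mathcal U_{m,L,s}$ is open. Now take $\mathcal f\in\bigcap\mathcal U_{m,L,s}$, an entire $\mathcal g$, a compact $K\subset\overline{D_L}$ and $\eps>0$. We pick $p_m$ with $\abs{p_m-\mathcal g}<\eps/2$ on $\overline{D_L}$ and $2^{-s}<\eps/2$; this gives \cref{e-universal}. Conversely, every universal $\mathcal f$ lies in every $\mathcal U_{m,L,s}$. For $\Hinf$ we do the same with the seminorms: $\mathcal V_{m,k,S,s}$ is the union over $(t_0,z_0)$ of $\{\|\mathcal f(\cdot+t_0;\cdot+z_0)-q_m\|_{k,S}<2^{-s}\}$. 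These sets are open because translation by a fixed $(t_0,z_0)$ is a continuous linear map $\Hinf\to\Hinf$; it maps $\|\cdot\|_{k,S}$ into $\|\cdot\|_{k,S'}$ with $S'=S+|t_0|+|z_0|$.

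The second step is density, which is the real content. Fix $\mathcal h$ (a polynomial, by \cref{l-density-time}(b)), a basic neighbourhood determined by a disc $\overline{D_R}$ and tolerance $\eta$, and a target $p_m$ on $\overline{D_L}$. We choose $z_0\in\C$ with $|z_0|>R+L+1$, so that $\overline{D_R}$ and $\overline{D_L(z_0)}$ are disjoint compact sets with connected complement. Define the holomorphic function $\phi$ on a neighbourhood of their union by $\phi=\mathcal h$ near $\overline{D_R}$ and $\phi=p_m(\cdot-z_0)$ near $\overline{D_L(z_0)}$. By Runge's theorem (here just polynomial approximation on a union of two disjoint discs) there is a polynomial $P$ with $\abs{P-\phi}<\min(\eta,2^{-s})$ on both discs. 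Then $P$ lies in the neighbourhood of $\mathcal h$ and in $\mathcal U_{m,L,s}$.

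In the $\Hinf$ case we would not use Runge. We first move the target in time by choosing $t_0$ with $|t_0|>R+S+1$, so that the time intervals $[-R,R]$ and $[t_0-S,t_0+S]$ are disjoint. Take a smooth cutoff $\chi(t)$ that equals $1$ near $[-R,R]$ and $0$ near $[t_0-S,t_0+S]$, and set
\[\mathcal f(t;z):=\chi(t)\mathcal h(t;z)+(1-\chi(t))\,q_m(t-t_0;z).\]
This function lies in $\Hinf$ and agrees exactly with $\mathcal h$ on $[-R,R]\times\C$. Its translate by $(t_0,0)$ agrees exactly with $q_m$ on $[-S,S]\times\C$. No spatial translation or approximation theorem is needed here; this is why $u$-universality is easier in time.

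So we expect the only real obstacle to be the density step in $\Haut$, where holomorphy forbids cutoffs. Runge's theorem, or Mergelyan/Taylor approximation on two disjoint discs, handles it. The Baire category theorem in the completely metrizable spaces $\Haut$ and $\Hinf$ then shows that the universal functions form a dense $G_\delta$, and likewise the $u$-universal functions. Intersecting with the dense $G_\delta$ of complete functions from \cref{t-baire} gives a nonempty dense $G_\delta$. \Cref{t-reduction}(c) and \cref{r-universal} then translate this into the statement about geodesically complete universal and $u$-universal pp-waves.
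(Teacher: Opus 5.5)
Your proposal is correct and follows essentially the same route as the paper. Both write the universal (resp.\ $u$-universal) functions as a countable intersection of sets indexed by Gaussian-rational polynomials; both get openness from the continuity of translation by a fixed $(t_0,z_0)$; both get density from Runge's theorem on two disjoint discs in $\Haut$ and from an exact time cutoff with $z_0=0$ in $\Hinf$; and both finish with the Baire category theorem intersected with \cref{t-baire}. The only differences are cosmetic, such as your reduction to a polynomial $\mathcal h$ in the $\Haut$ density step, which is harmless but unnecessary since Runge applies directly to an entire $\mathcal h$.
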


\begin{proof}
    \emph{The $u$-independent case.} By \cref{l-density-time}(b), an entire function $\mathcal f$ is universal if and only if, for every polynomial $q$ with Gaussian rational coefficients and all integers $S,m\geq1$, there is some $z_0\in\C$ with $\sup_{\overline{D_S}}|\mathcal f(\cdot+z_0)-q|<\frac1m$: indeed, given $\mathcal g,K,\eps$ as in \cref{d-universal}(i), choose $S$ with $K\subset\overline{D_S}$, then $q$ with $\sup_{\overline{D_S}}|q-\mathcal g|\leq\frac\eps2$, and $m$ with $\frac1m\leq\frac\eps2$. There are countably many triples $(q,S,m)$, and for each of them the set
        \[\mathcal U(q,S,m):=\bigcup_{z_0\in\C}\Big\{\mathcal f\in\Haut:\sup_{\overline{D_S}}|\mathcal f(\cdot+z_0)-q|<\tfrac1m\Big\}\]
    is open in $\Haut$, being a union of open sets (the translation $\mathcal f\mapsto\mathcal f(\cdot+z_0)$ is continuous on $\Haut$). So the universal functions form the countable intersection $\bigcap_{(q,S,m)}\mathcal U(q,S,m)$ of open sets, and it remains to show that each $\mathcal U(q,S,m)$ is dense. Let $\mathcal h\in\Haut$, $S'>0$ and $\eta>0$; we have to find $\mathcal f\in\mathcal U(q,S,m)$ with $\sup_{\overline{D_{S'}}}|\mathcal f-\mathcal h|<\eta$. Choose $z_0\in\C$ with $|z_0|>S+S'$, so that the closed discs $\overline{D_{S'}}$ and $\overline{D_S(z_0)}$ are disjoint; their union is a compact set with connected complement, and the function which equals $\mathcal h(z)$ near the first disc and $q(z-z_0)$ near the second disc is holomorphic on a neighborhood of this union. By Runge's theorem, there is a polynomial $\mathcal f$ with $|\mathcal f-\mathcal h|<\eta$ on $\overline{D_{S'}}$ and $|\mathcal f(z)-q(z-z_0)|<\frac1m$ on $\overline{D_S(z_0)}$, i.e. $\sup_{\overline{D_S}}|\mathcal f(\cdot+z_0)-q|<\frac1m$. This $\mathcal f$ is as required.

    \emph{The general case.} In the same way, by \cref{l-density-time}(b), a function $\mathcal f\in\Hinf$ is $u$-universal if and only if, for every polynomial $q$ in $(t,z)$ with Gaussian rational coefficients and all integers $k\geq0$ and $S,m\geq1$, there is some $(t_0,z_0)$ with $\|\mathcal f(\cdot+t_0;\cdot+z_0)-q\|_{k,S}<\frac1m$; and for each of the countably many quadruples $(q,k,S,m)$, the set $\mathcal U(q,k,S,m)$ of these $\mathcal f$ is open in $\Hinf$, because the translations are continuous on $\Hinf$. For the density, let $\mathcal h\in\Hinf$, let $k'\geq0$, $S'>0$ and $\eta>0$; we have to find $\mathcal f\in\mathcal U(q,k,S,m)$ with $\|\mathcal f-\mathcal h\|_{k',S'}<\eta$. Here we may even take $t_0:=S'+S+2$ and $z_0:=0$: let $\chi:\R\to\ui$ be a smooth function which equals $1$ on $(-\infty,S'+\tfrac12]$ and $0$ on $[S'+1,\infty)$, and let
        \[\mathcal f(t;z):=\chi(t)\,\mathcal h(t;z)+(1-\chi(t))\,q(t-t_0;z)\in\Hinf.\]
    Then $\mathcal f=\mathcal h$ on $[-S',S']\times\C$, and $\mathcal f(t+t_0;z)=q(t;z)$ for $t\geq-S-1$, so that $\|\mathcal f-\mathcal h\|_{k',S'}=0$ and $\|\mathcal f(\cdot+t_0;\cdot)-q\|_{k,S}=0$.
\end{proof}

\begin{proof}[Proof of \cref{t-universal-intro}]
    This is \cref{t-universal}, together with \cref{r-universal}(ii). In (b), the translation may be taken with $u_0=0$ since $\mathcal f\in\Haut$ does not depend on $u$.
\end{proof}

\subsection{What the counterexamples look like}\label{ss-construction}
The proof of \cref{t-baire} does not exhibit a complete function, and it hides what the locally complete functions of \cref{l-locally-complete-family} look like. Since the Baire category theorem is itself proved by an iteration, it is worth spelling out this iteration in our case; it is how the counterexamples were originally constructed, and it is the picture that we recommend the reader keep in mind. Nothing in this subsection is used in the proofs.

The building blocks are the functions
    \[\label{e-building-block}
        A-Ae^{-A\inv p(z)},\qquad p\text{ a polynomial of degree }n\geq3,\quad A\gg1,\]
which are $(a,b)$-locally complete for $A$ sufficiently large, by \cref{l-poly-exp} below (applied to $-p$; the additive constant $A$ does not affect the equation). Where $|p(z)|\ll A$, this function is $p(z)+O(|p(z)|^2/A)$, so on any given disc it is as close to $p$ as we like once $A$ is large. Its potential $\tfrac12H=-\Re\p{A-Ae^{-A\inv p}}$ has the following shape (\cref{fig-stage2}, center). Let $c$ be the leading coefficient of $p$. Along the $n$ rays $\arg(cz^n)=0$, on which $\Re\p{p}$ grows fastest, the potential first descends like $-\Re\p{p}$ and then levels off at the terrace $-A$, which it reaches where $\Re\p{p}\sim A$, i.e. at $|z|\sim(A/|c|)^{1/n}$; in between these rays, $e^{-A\inv p}$ oscillates with an amplitude which grows like $e^{|p|/A}$. \Cref{main-lemma,t-focusing} say that a solution with bounded initial data is focused into one of the $n$ valleys by the time it reaches the scale $(A/|c|)^{1/n}$, and that it then travels along the terrace almost freely, without ever entering the oscillating sectors.

Now fix a sequence of times $T_i\to\infty$, and start with any $(1,1)$-locally complete function $\mathcal f_1$ of the form \cref{e-building-block}, for instance a rescaled copy $A_1-A_1e^{-A_1\inv(cz)^3}$ of $1-e^{-z^3}$ (\cref{main-lemma,l-rescaling-locally-complete}). Suppose that $\mathcal f_i$ has been constructed and is $(i,i)$-locally complete. By \cref{l-continuity}(b), the set of $\mathcal g\in\Haut$ for which every solution with $|\bm z(t_0)|\leq i$ and $|\dot{\bm z}(t_0)|\leq i$ exists for $|t-t_0|\leq T_i$ is open, and it contains $\mathcal f_i$; so it contains a basic neighborhood of $\mathcal f_i$, i.e. there are $R_i\geq\max(R_{i-1},i)$ and $\eps_i\leq\tfrac12\eps_{i-1}$ (with $R_0:=\eps_0:=1$) such that every entire function $\mathcal g$ with $|\mathcal g-\mathcal f_i|\leq\eps_i$ on $D_{R_i}$ has this property. Let $p_i$ be a Taylor polynomial of $\mathcal f_i$ with $|p_i-\mathcal f_i|\leq\tfrac14\eps_i$ on $D_{R_i}$, and let
    \[\mathcal f_{i+1}:=A_{i+1}-A_{i+1}e^{-A_{i+1}\inv p_i},\]
where $A_{i+1}$ is so large that $|\mathcal f_{i+1}-p_i|\leq\tfrac14\eps_i$ on $D_{R_i}$ and that $\mathcal f_{i+1}$ is $(i+1,i+1)$-locally complete (\cref{l-poly-exp}). Then $|\mathcal f_{i+1}-\mathcal f_i|\leq\tfrac12\eps_i$ on $D_{R_i}$, and since $R_i\to\infty$ and $\eps_i\leq2^{-i}\eps_1$, the sequence $\mathcal f_i$ converges locally uniformly to an entire function $\mathcal f$ with
    \[|\mathcal f-\mathcal f_i|\leq\sum_{j\geq i}\tfrac12\eps_j\leq\eps_i\qquad\text{on }D_{R_i},\text{ for every }i.\]
Given initial data with $|\bm z(t_0)|,|\dot{\bm z}(t_0)|\leq i_0$, the solution for $\mathcal f$ therefore exists for $|t-t_0|\leq T_i$ for every $i\geq i_0$, and hence for all time: $\mathcal f$ is complete. (It is nonquadratic if $\eps_1$ is smaller than the distance from $\mathcal f_1$ to the quadratic polynomials in $L^\infty(D_{R_1})$.) This is the Baire category argument with the choices made explicit; the proof of \cref{t-baire} does the bookkeeping.

The potential $-\Re\p{\mathcal f}$ of the resulting function is a staircase. On $D_{R_1}$ it is, up to $\eps_1$, the potential of $\mathcal f_1$, i.e. a rescaled copy of the potential of $1-e^{-z^3}$ (\cref{fig-profile}): three valleys which descend like $-\Re\p{(cz)^3}$ and level off at the first terrace, at height $-A_1$. Outside $D_{R_1}$, the Taylor polynomial $p_1$ takes over, and its valleys, of which there are $\deg p_1$, descend to the second terrace at height $-A_2$; then the valleys of $p_2$ descend to the third terrace at height $-A_3$, and so on, with $A_1\ll A_2\ll A_3\ll\cdots$. In between the valleys, the potential oscillates with enormous amplitude. A particle with bounded data is focused into a valley of stage $i$ and then travels along the $i$-th terrace; when it leaves $D_{R_i}$ it enters the region where $\mathcal f$ is governed by $p_i$, and there it is a particle with small data at the scale $(A_{i+1}/|c_i|)^{1/\deg p_i}$ of the next terrace, so that it is focused again, this time into a valley of $p_i$ (the valleys of consecutive stages need not be aligned), and travels along the $(i+1)$-st terrace, and so on. One step of the iteration is pictured in \cref{fig-stage2}; the figure shows, in particular, the descent from the first terrace to the second one along the rays, and the fact that the two functions are indistinguishable on the disc where the first one is locally complete.

\begin{figure}[htbp]
    \centering
    \includegraphics[width=\textwidth]{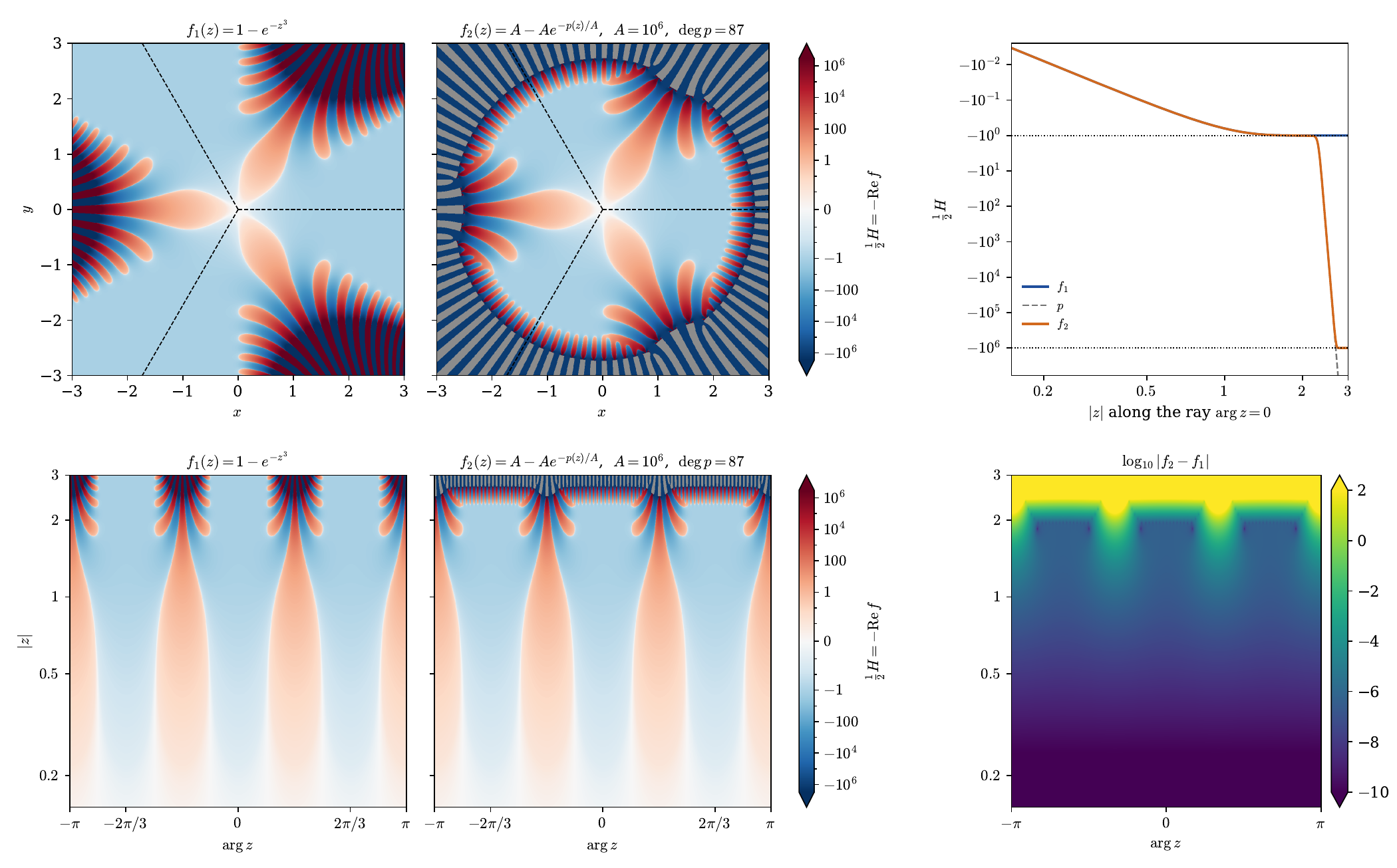}
    \caption{One step of the iteration of \cref{ss-construction}: $\mathcal f_1(z)=1-e^{-z^3}$ and $\mathcal f_2(z)=A-Ae^{-A\inv p(z)}$, where $A=10^6$ and $p(z)=\sum_{k=1}^{29}\frac{(-1)^{k+1}}{k!}z^{3k}$ is the Taylor polynomial of $\mathcal f_1$ of degree $87$. Top left and center: the potentials $\tfrac12H=-\Re\p{\mathcal f_1}$ and $-\Re\p{\mathcal f_2}$ in the $z$-plane; the dashed rays are the directions $\arg(z^3)=0$. Bottom left and center: the same potentials as functions of $\arg z$ and $|z|$, with $|z|$ on a logarithmic scale. Top right: the potentials of $\mathcal f_1$, $p$ and $\mathcal f_2$ along the ray $\arg z=0$. Bottom right: the error $\log_{10}|\mathcal f_2-\mathcal f_1|$. The color scale of the potentials is logarithmic, with linear threshold $0.01$, and saturates at $\pm3\times10^6$; in the gray regions the potential is beyond this range and oscillates faster than the resolution of the plot. For $|z|\lesssim2$ the two functions are indistinguishable at this resolution: numerically, $\sup|\mathcal f_2-\mathcal f_1|$ is approximately $1.5\times10^{-6}$ on $D_1$, $1.8\times10^{-4}$ on $D_{1.44}$, $9\times10^{-3}$ on $D_{1.7}$ and $4.4$ on $D_2$ (where $\sup|\mathcal f_1|\approx3\times10^3$). For these parameters the error is $\approx|\mathcal f_1|^2/2A$; the error from truncating the Taylor series is less than $10^{-5}$ on $D_2$. The approximation breaks down at $|z|\approx2.3$, where $|\mathcal f_1|$ becomes comparable to $A$ in the sectors in which it is large, and where the term $z^{87}/29!$ begins to dominate along the rays $\arg(z^3)=0$. Along these rays the potential of $\mathcal f_2$ then descends from its first terrace at height $-1$ to a second terrace at height $-A$, which it reaches at $|z|\approx2.7$ (top right); the $87$ valleys of $\mathcal f_2$ at height $-A$ are the dark blue stripes outside the circle $|z|\approx2.7$ in the top center panel, and they alternate with the $87$ sectors in which the potential oscillates between values far beyond $\pm A$. The parameters were chosen for legibility; no claim is made that this particular $\mathcal f_2$ is $(a,b)$-locally complete for any specific $a,b$.}
    \label{fig-stage2}
\end{figure}

\subsection{A candidate for a locally complete family}\label{ss-locally-complete-candidate}
Note that locally complete functions satisfy the following rescaling property:

\begin{lemma}\label{l-rescaling-locally-complete}
    Suppose that $\mathcal f\in\Haut$ is $(a,b)$-locally complete, for some $a,b>0$. Then the following are true for any real $R>0$ and any complex $\lambda\neq0$:
        \begin{enumerate}
            \item The function $z\mapsto \mathcal f(\lambda z)$ is $(|\lambda|\inv a,b)$-locally complete.
            \item The function $z\mapsto R\mathcal f(z)$ is $(a,R\ef12b)$-locally complete.
            \item The function $z\mapsto \mathcal f(z)-\lambda$ is $(a,b)$-locally complete.
        \end{enumerate}
    \end{lemma}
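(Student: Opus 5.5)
Each of the three statements follows from a transformation that maps solutions of $\ddot{\bm z}=\overline{\mathcal f'(\bm z)}$ bijectively onto solutions of the ODE for the modified function, preserving global existence in time and transforming the initial data in the indicated way. Since $\mathcal f\in\Haut$ is autonomous, the initial time $t_0$ plays no role. In each case we start from a solution $\tilde{\bm z}$ for the new function with $|\tilde{\bm z}(t_0)|\le a'$ and $|\dot{\tilde{\bm z}}(t_0)|\le b'$. We produce a solution $\bm z$ for $\mathcal f$ whose data lie in the $(a,b)$-ball. Then $\bm z$ exists for all time, and hence so does $\tilde{\bm z}$, because maximal solutions correspond to maximal solutions under the transformation.

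For (3), the derivative of $\mathcal f-\lambda$ equals $\mathcal f'$, so the ODE is literally unchanged and there is nothing to prove.

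For (2), set $\tilde{\mathcal f}=R\mathcal f$, so $\tilde{\mathcal f}'=R\mathcal f'$. If $\tilde{\bm z}$ solves $\ddot{\tilde{\bm z}}=R\,\overline{\mathcal f'(\tilde{\bm z})}$, then $\bm z(t):=\tilde{\bm z}(R^{-1/2}t)$ satisfies $\ddot{\bm z}(t)=R^{-1}\ddot{\tilde{\bm z}}(R^{-1/2}t)=\overline{\mathcal f'(\bm z(t))}$. This time rescaling keeps positions fixed and multiplies velocities by $R^{-1/2}$. Hence $|\tilde{\bm z}(t_0)|\le a$ and $|\dot{\tilde{\bm z}}(t_0)|\le R^{1/2}b$ give $|\bm z(R^{1/2}t_0)|\le a$ and $|\dot{\bm z}(R^{1/2}t_0)|\le b$. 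Global existence of $\bm z$ then yields global existence of $\tilde{\bm z}$.

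For (1), set $\tilde{\mathcal f}(z)=\mathcal f(\lambda z)$, so $\tilde{\mathcal f}'(z)=\lambda\mathcal f'(\lambda z)$. Given a solution $\tilde{\bm z}$, I would try $\bm z(t):=\lambda\tilde{\bm z}(\mu t)$ for a real $\mu>0$ to be chosen. Then
\[\ddot{\bm z}(t)=\lambda\mu^2\,\overline{\lambda\,\mathcal f'(\lambda\tilde{\bm z}(\mu t))}=\mu^2|\lambda|^2\,\overline{\mathcal f'(\bm z(t))},\]
using $\lambda\bar\lambda=|\lambda|^2$. Taking $\mu=|\lambda|^{-1}$ makes $\bm z$ a solution for $\mathcal f$. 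Velocities are multiplied by $|\lambda|\mu=1$, and positions by $|\lambda|$. Therefore $|\tilde{\bm z}(t_0)|\le|\lambda|^{-1}a$ and $|\dot{\tilde{\bm z}}(t_0)|\le b$ translate to $|\bm z(|\lambda| t_0)|\le a$ and $|\dot{\bm z}(|\lambda| t_0)|\le b$, as required.

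The only step needing care is this conjugation bookkeeping in (1). The phase of $\lambda$ cancels only because the right-hand side is conjugated: $\lambda\cdot\overline{\lambda}$ is real. This is exactly why a real time rescaling suffices, and it is where holomorphy and the conjugate structure of the ODE are used. Everything else is a routine check that the reparametrizations are bijections between maximal solutions.
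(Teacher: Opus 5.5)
Your proof is correct and is essentially the paper's argument. The paper uses the rescalings $\zeta(t)=\lambda^{-1}z(|\lambda|t)$ and $\zeta(t)=z(R^{1/2}t)$, which are exactly the inverses of your maps, and notes that (3) is immediate because the derivative is unchanged; starting, as you do, from a solution of the modified equation is the more direct direction for checking local completeness, since the bijection between maximal solutions supplies the rest.
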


The proof of \cref{l-rescaling-locally-complete} is trivial from the rescaling properties of the ODE $\ddot{\bm z}(t)=\overline{\mathcal f'(\bm z(t))}$: if $\bm z(t)$ solves this ODE, then $\bm\zeta(t):=\lambda\inv\bm z(|\lambda|t)$ solves $\ddot{\bm\zeta}=\overline{\mathcal f_\lambda'(\bm\zeta)}$ where $\mathcal f_\lambda(z)=\mathcal f(\lambda z)$, and $|\bm\zeta(0)|=|\lambda|\inv|\bm z(0)|$, $|\dot{\bm\zeta}(0)|=|\dot{\bm z}(0)|$; this gives (1). Similarly $\bm\zeta(t):=\bm z(R\ef12t)$ solves the ODE for $R\mathcal f$, which gives (2), and (3) is immediate.

As we will see, \cref{l-locally-complete-family} will follow from the following remarkable fact that bridges exponentials and polynomials:

\begin{lemma}\label{l-poly-exp}
    Let $n\geq3$, let $a,b>0$, and let
        \[p(t;z)=c_nz^n+\sum_{j<n}c_j(t)z^j,\]
    where $c_n\in\C\setminus\{0\}$ is a \emph{constant} and where each $c_j:\R\to\C$, $j<n$, is a bounded $C^1$ function with $\int_\R|\dot c_j|<\infty$. Then for $A$ sufficiently large, $\mathcal f(t;z)=-Ae^{A\inv p(t;z)}$ is $(a,b)$-locally complete. In particular, this holds when $p(z)$ is any polynomial of degree $\geq3$ with constant coefficients.
\end{lemma}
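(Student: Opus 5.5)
The plan is to reduce \cref{l-poly-exp} by rescaling to the statement that $1-e^{-z^n}$, together with its small (possibly time-dependent) perturbations, is locally complete. This is the role of \cref{main-lemma}, which we take as the analytic input.

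First, the rescaling. Write $p(t;z)=c_nz^n+q(t;z)$ with $\deg_zq<n$. Choose $\lambda\in\C$ with $\lambda^n=-A/c_n$; the sign is chosen so that the dominant term becomes $-w^n$. In the variable $w=z/\lambda$ we get
\[A^{-1}p(t;\lambda w)=-w^n+\sum_{j<n}A^{-1}c_j(t)\lambda^jw^j=:-w^n+\bm\epsilon(t;w).\]
Since $|\lambda|\sim A^{1/n}$, the coefficient of $w^j$ has size $O(A^{j/n-1})\to0$ as $A\to\infty$. The same bound holds for the total variation $\int|\tfrac d{dt}(\cdot)|$, by the hypotheses $\int_\R|\dot c_j|<\infty$. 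Thus $\mathcal f(t;\lambda w)=-Ae^{-w^n+\bm\epsilon}$, and the perturbation is small uniformly in $t$, both in $C^0$ on any disc and in $t$-variation.

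Next, undo the rescaling at the level of the ODE, as in \cref{l-rescaling-locally-complete}. That lemma is stated for autonomous $\mathcal f$, but the same rescaling of time works for time-dependent $\mathcal f$, with the time-dependence rescaled accordingly. We apply three operations:
\begin{itemize}
\item a spatial scaling by $\lambda$, which shrinks $a$ to $|\lambda|^{-1}a$ and, after the induced time change $t\mapsto|\lambda|t$, stretches the coefficients $c_j$ in time; this stretching preserves both their sup norm and their total variation;
\item division by $A$, which turns velocity $b$ into $A^{-1/2}b$;
\item addition of the constant $A$ (or $1$), which does not affect the equation.
\end{itemize}
So $(a,b)$-local completeness of $\mathcal f$ is equivalent to $(|\lambda|^{-1}a,A^{-1/2}b)$-local completeness of $\tilde{\mathcal f}(t;w)=1-e^{-w^n+\bm\epsilon(t;w)}$. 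Both radii tend to $0$ as $A\to\infty$.

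It therefore suffices to prove the following: there exist $r>0$ and $\delta>0$ such that every $\tilde{\mathcal f}=1-e^{-w^n+\bm\epsilon}$, with $\bm\epsilon$ a polynomial in $w$ of degree $<n$ whose coefficients are bounded by $\delta$ and have total variation at most $\delta$, is $(r,r)$-locally complete. Indeed, for $A$ large, $|\lambda|^{-1}a\le r$, $A^{-1/2}b\le r$, and the coefficient bounds are all below $\delta$. This uniform statement is exactly what I expect \cref{main-lemma} to assert, and the order of constants matters: $r$ and $\delta$ must be fixed before $A$.

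The substance lies in that lemma, via the focusing theorem \cref{t-focusing}. Trajectories starting in $D_r$ with small velocity have small energy $\bm\kappa$; the energy is conserved up to $\int|\Re\bm f_t|$, which is controlled by the total variation of $\bm\epsilon$. These trajectories are driven into the valleys $\arg(w^n)\approx0$, where $\tilde{\mathcal f}'$ decays like $e^{-|w|^n}$, and then move essentially freely, so no blow-up occurs. Trajectories that never leave a bounded region trivially exist for all time.

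The main obstacle is uniformity in the perturbation $\bm\epsilon$ and control of the time-dependence. The exponential amplifies the small terms $\bm\epsilon(t;w)$ once $|w|$ is large. The constants must therefore be chosen so that the focusing into an aperture around $\arg(w^n)=0$ happens before $|\bm\epsilon|$ becomes comparable to $\Re(w^n)$. Since $\deg\bm\epsilon<n$, along the valleys the term $w^n$ dominates, and $e^{-w^n+\bm\epsilon}$ still decays. This is where the hypothesis that $c_n$ is a constant is used: a time-dependent leading coefficient would rotate the valleys.
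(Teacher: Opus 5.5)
Your proposal is correct and follows the paper's proof. The paper also substitutes $z=\lambda w$ with $\lambda^n=-A/c_n$ (its $\mu$ is $\lambda^{-1}$) and rescales time by $s=A^{1/2}|\lambda|^{-1}t$. It observes that the sup norms and total variations of the new coefficients scale like $A^{j/n-1}\to0$, and then applies \cref{main-lemma}, which gives exactly the uniform $(R,R^{n/2})$-local completeness you need, with $R,\eps$ fixed before $A$.
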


\begin{proof}[Proof of \cref{l-locally-complete-family} using \cref{l-poly-exp}]
    \emph{The space $\Haut$.} The polynomials of degree $\geq3$ are dense in $\Haut$. Indeed, all polynomials are dense (the Taylor series of an entire function converges uniformly on compact sets), and any polynomial $p$ of degree $\leq2$ is the limit in $\Haut$ of the polynomials $p(z)+\eta z^3$ of degree $3$ as $\eta\to0$. Next, for a polynomial $p$ of degree $\geq3$, note that
        \[p(z)=\lim_{A\to\infty}\p{A-Ae^{-A\inv p(z)}}\]
    uniformly on bounded sets, and that $A-Ae^{-A\inv p(z)}$ is $(a,b)$-locally complete for $A$ sufficiently large by \cref{l-poly-exp} (applied to $-p$) and \cref{l-rescaling-locally-complete}(3). Hence every polynomial of degree $\geq3$ is a limit of $(a,b)$-locally complete entire functions, and the claim follows.

    \emph{The space $\Hinf$.} Let $\mathcal h\in\Hinf$, $k\geq0$, $S>0$ and $\eps>0$; we have to find an $(a,b)$-locally complete $\mathcal f\in\Hinf$ with $\|\mathcal f-\mathcal h\|_{k,S}\leq\eps$. By \cref{l-density-time}(a), there is some $N$ so that the Taylor polynomial $\sum_{j\leq N}c_j(t)z^j$ of $\mathcal h$ is within $\frac13\eps$ of $\mathcal h$ in $\|\cdot\|_{k,S}$. Let $\chi:\R\to\ui$ be a smooth cutoff which equals $1$ on $[-S,S]$ and is supported in $[-S-1,S+1]$. The top coefficient $c_N(t)$ may depend on $t$ (and may vanish), so we add a term of higher degree with a small constant coefficient: let $n:=\max(N+1,3)$, let $\eta>0$ be so small that $\eta S^n\leq\frac13\eps$, and let
        \[p(t;z):=\eta z^n+\sum_{j\leq N}\chi(t)c_j(t)z^j,\qquad\text{so that}\qquad\|p-\mathcal h\|_{k,S}\leq\tfrac23\eps.\]
    The coefficients $\chi c_j$ are smooth and compactly supported, so $-p$ satisfies the hypotheses of \cref{l-poly-exp}. Hence $\mathcal f_A:=A-Ae^{-A\inv p}\in\Hinf$ is $(a,b)$-locally complete for $A$ sufficiently large (the additive constant $A$ does not affect the equation). Finally, write $e^x-1-x=x^2\Theta(x)$ with $\Theta$ entire, so that
        \[\mathcal f_A-p=-A\inv p^2\,\Theta(-A\inv p).\]
    Since $p$ and its $t$-derivatives of order $\leq k$ are bounded on $\R\times D_S$, the chain rule shows that $\|\mathcal f_A-p\|_{k,S}\lesssim A\inv$ for $A\geq1$, with a constant depending on $p$, $k$ and $S$. So $\|\mathcal f_A-\mathcal h\|_{k,S}\leq\eps$ for $A$ sufficiently large.
\end{proof}

It will be more technically convenient to work with the following version, which implies \cref{l-poly-exp} (as we show below). It allows the perturbation to depend on time. This costs very little in the proof, and it is what the $u$-universal case of \cref{t-baire} needs; the reader who is only interested in \cref{main-theorem} may assume throughout that the perturbations do not depend on $t$.

\begin{definition}\label{d-F-eps}
    Let $n\geq3$ be an integer, let $\mathcal f(z)=1-e^{-z^n}$, and let $\eps\geq0$. We denote by $F_\eps$ the set of all functions $\tilde{\mathcal f}:\R\times\C\to\C$ of the form
        \[\label{f-tilde-conditions}
            \tilde{\mathcal f}(t;z)=1-e^{-z^n+\mathcal p(t;z)},\qquad\mathcal p(t;z)=\sum_{j<n}a_j(t)z^j,\qquad a_j\in C^1(\R;\C),
            \]
    whose coefficients satisfy
        \[\label{e-F-eps-bounds}
            \sup_{t\in\R}|a_j(t)|\leq\eps\qquad\text{for all }j<n.
            \]
    For such a function, we write
        \[\label{e-total-variation}
            V(\mathcal p):=\sum_{j<n}\int_\R|\dot a_j(t)|dt\in[0,\infty]
            \]
    for the total variation of the coefficients of $\mathcal p$.
    \end{definition}

Thus $F_0=\{\mathcal f\}$, and $F_\eps\subset F_{\eps_1}$ whenever $\eps\leq\eps_1$. If $\mathcal p$ is a polynomial of degree $<n$ (with constant coefficients) whose coefficients all have modulus $\leq\eps$, then $1-e^{-z^n+\mathcal p(z)}$ belongs to $F_\eps$, and $V(\mathcal p)=0$. Note also that $F_\eps$ is invariant under time translations and under time reversal: if $\tilde{\mathcal f}\in F_\eps$ and $t_0\in\R$, then $\tilde{\mathcal f}(t_0+\cdot\,;\cdot)$ and $\tilde{\mathcal f}(-\cdot\,;\cdot)$ belong to $F_\eps$, and they have the same total variation. The pointwise bound \cref{e-F-eps-bounds} is all that is needed for the focusing results of \cref{s-logic,s-focusing-lemmas}. The total variation only enters in \cref{main-lemma}: when $\mathcal p$ depends on $t$, the energy is no longer conserved, and a solution with small data may remain near the degenerate critical point $z=0$ for an arbitrarily long time, during which $V(\mathcal p)$ is what controls the drift of the energy (see \cref{e-energy-drift} below).

\begin{lemma}\label{main-lemma}
    Let $n\geq3$ be a fixed integer and let ${\mathcal f}(z)=1-e^{-z^n}$. Then there exist $R>0$ and $\eps>0$ so that the following holds for every $\tilde{\mathcal f}\in F_\eps$ with $V(\mathcal p)\leq\eps$ and every $t_*\in\R$: any solution ${\bm z}(t)$ to $\ddot{\bm z}(t)=\overline{\tilde{\mathcal f}'(t;\bm z(t))}$ with $|{\bm z}(t_*)|\leq R$ and $|\dot{\bm z}(t_*)|^2\leq R^n$ exists for all time $t\in\R$. In particular, this holds for $\mathcal f$ itself, and for $\tilde {\mathcal f}(z)=1-e^{-z^n+\mathcal p(z)}$ whenever $\mathcal p(z)$ is a polynomial of degree $<n$ whose coefficients all have modulus $\leq\eps$.
    \end{lemma}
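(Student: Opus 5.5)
The plan is to deduce \cref{main-lemma} from a confinement statement for large $|\bm z|$: once a solution has been focused into one of the $n$ valleys $\arg(z^n)\approx0$, it is globally controlled there. Since $F_\eps$ is invariant under time translation and time reversal, with the same total variation, it suffices to take $t_*=0$ and prove existence for $t\geq0$. I will work with the first-order system $\dot{\bm z}=\sqrt2\bm w$, $\dot{\bm w}=\frac1{\sqrt2}\overline{\tilde{\mathcal f}'}$. Here
\[\tilde{\mathcal f}'(t;z)=(nz^{n-1}-\partial_z\mathcal p)\,e^{-z^n+\mathcal p(t;z)},\]
and the relevant energy is $\bm\kappa=|\bm w|^2-\Re\p{\tilde{\mathcal f}(t;\bm z)}$, which satisfies $\dot{\bm\kappa}=-\Re\p{\partial_t\tilde{\mathcal f}}$, i.e.
\[\dot{\bm\kappa}=\Re\p{e^{-\bm z^n+\mathcal p}\,\partial_t\mathcal p(t;\bm z)}.\]
A maximal solution which does not exist for all $t\geq0$ must have $|\bm z|+|\bm w|\to\infty$ in finite time. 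On any bounded disc $\tilde{\mathcal f}'$ is bounded uniformly in $F_\eps$, so the only danger is $|\bm z|\to\infty$.

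The first step handles bounded radii. With $|\bm z(0)|\leq R$ and $|\dot{\bm z}(0)|^2\leq R^n$, I will show that for $R,\eps\ll1$ the energy stays small, say $|\bm\kappa|\lesssim R^n$, for as long as $|\bm z|\leq\rho$, where $\rho$ is a fixed moderate radius. This works because $|e^{-\bm z^n+\mathcal p}|\lesssim_\rho1$ there, so
\[\int|\dot{\bm\kappa}|\lesssim_\rho V(\mathcal p)\leq\eps.\]
This is exactly where the total-variation hypothesis enters: the solution may linger near the degenerate critical point $0$ for an unbounded time. If the solution never leaves $D_\rho$ it exists for all time and we are done. Otherwise let $t_1$ be the first exit time. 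At $t_1$ the solution has nearly zero energy and is close, uniformly in $F_\eps$, to a zero-energy solution of the model problem.

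The second step, which is the heart of the argument, is a focusing statement in the spirit of \cref{t-focusing}. I will use the complex change of variables of \cref{ss-coordinate-transformation} (roughly, rescaling by powers of $|\bm z|$ and passing to the angle of $\bm z^n$ together with the normalized velocity) and the exactly decaying Lyapunov function of the model case $z^n$ at zero energy from \cref{c-model}. The goal is to show that the angle $\arg(\bm z^n)$ is driven toward $0$ and enters an arbitrarily small aperture $\{|\arg(\bm z^n)|\leq\theta\}$ before $|\bm z|$ becomes large. Once the solution is in that aperture, $\Re(\bm z^n)\gtrsim|\bm z|^n$, so
\[|\tilde{\mathcal f}'|\lesssim|\bm z|^{n-1}e^{-c|\bm z|^n}.\]
Consequently the acceleration is integrable along the trajectory, the velocity stays bounded, $|\bm z|$ grows at most linearly, and the energy drift is controlled by $\sup_t|\partial_t\mathcal p|$ times an exponentially small factor. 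I will run a bootstrap on the aperture to show that the solution never returns to the oscillating sectors. The radial velocity is positive and bounded below (energy is about $|\bm w|^2-1\geq$ small), and the angular forcing is exponentially small, so the angular deviation accumulated is $\lesssim\int|\bm z|^{-1}|\dot{\bm z}_\perp|$, which is small at large radius. With the aperture preserved for all $t\geq t_1$, $|\bm z(t)|$ stays finite on bounded time intervals and the solution is global.

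I expect the main obstacle to be the intermediate range of scales between $\rho$ and the radius at which the exponential decay takes over. There the angular dynamics are genuinely nonlinear, and the perturbation $\mathcal p$ and the small nonzero energy must not destroy the Lyapunov decay. My first attempt would use a continuity argument: the model solutions for $z^n$ at zero energy reach the small aperture in uniformly bounded time, by \cref{c-model}(4). \Cref{l-continuity}(c) applied to first crossings of $\{|z|=\rho'\}$ then transfers this to all $\tilde{\mathcal f}\in F_\eps$ and all data of small energy, provided $\eps,R\lll\theta$. The constants must be chosen in the order $\theta$, then $\rho'$, then $\eps$ and $R$.
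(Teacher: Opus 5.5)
Your overall architecture matches the paper's: time reversal, control of the energy near $0$ through $V(\mathcal p)$, a focusing step, then bounded acceleration at large radii. But the focusing step, which you correctly flag as the main obstacle, is not supplied, and the continuity argument you propose for it cannot work.

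Comparison with the model $z^n$ is only available after the rescaling $\mathcal f_\rho(z)=\rho^{-n}\mathcal f(\rho z)\to z^n$, i.e.\ at radii $\rho\ll1$. Combined with \cref{l-continuity}(c), it yields the aperture $\theta$ only at some radius $\rho_\theta$ with $\rho_\theta\lll\theta$ (this is \cref{l-small-scale-focusing}). It never yields it at your radius $\rho'\gg1$, where the motion in the valleys is essentially free rather than governed by $z^n$. To get from $\rho_\theta$ through the scales $|z|\sim1$ out to $\rho'$ by continuity, you would have to compare with the radial solutions of $\mathcal f$, which stay radial. A compactness argument around them (\cref{l-medium-scale-focusing}) requires the incoming aperture to be chosen after the radii, $\theta\lll\rho_\theta,\rho'$. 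That is incompatible with $\rho_\theta\lll\theta$. As long as only continuity is used, no ordering of $\theta,\rho',\eps,R$ removes this circularity.

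The missing ingredient is a quantitative estimate at small but non-infinitesimal scales whose thresholds do not depend on the size of the aperture. In the coordinates of \cref{l-cov}, the zero-energy system for $1-e^{-z^n}$ differs from \cref{e-model} by terms of relative size $O(e^r)$, and these terms vanish at the radial data. With $\Psi_{\text{small}}=\tfrac12\theta^2+\beta\theta\varphi+\varphi^2$ one gets $\frac d{dr}\Psi_{\text{small}}\leq(-\beta+O(\Psi_{\text{small}}+e^r))\Psi_{\text{small}}\leq0$ whenever $r\ll0$ and $\Psi_{\text{small}}\ll1$ (\cref{l-medium-small-scale-focusing}). This carries an aperture $A_2$ unchanged from $R_2$ to a fixed radius $R_3\ll1$. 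One may therefore choose $R_3$ first, then $A_2\lll R_3$, then $R_2\lll A_2$, which breaks the circularity. The radial comparison and a large-scale monotonicity argument then take over.

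Three related corrections:
\begin{itemize}
\item Your first step must exit from a \emph{tiny} disc, of radius $\lll\theta$, not a moderate one. The exit data have arbitrary angles, and focusing is only gained where $\mathcal f\approx z^n$. After the exit, \cref{l-stable-focusing} handles $\tilde{\mathcal f}$ and the small energy on bounded time intervals, without using $V(\mathcal p)$.
\item At large radii your aperture must also confine $\varphi$, not just $\arg(\bm z^n)$. A straight line changes $\arg(z^n)$ by about $n\varphi$, which is why the paper uses $(|\theta|+n|\varphi|)^2$.
\item The energy drift at large radii should be bounded through $V(\mathcal p)$, or through the energy relative to $\mathcal f$. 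A bound via $\sup_t|\partial_t\mathcal p|$ does not work, since $\sup_t|\dot a_j|$ need not be finite.
\end{itemize}
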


Throughout \cref{s-logic,s-focusing-lemmas}, $n\geq3$ denotes the fixed integer of \cref{main-lemma}. (In the terminology of \cref{d-locally-complete}, \cref{main-lemma} says that every $\tilde{\mathcal f}\in F_\eps$ with $V(\mathcal p)\leq\eps$, and in particular $\mathcal f$ itself, is $(R,R\ef n2)$-locally complete.) The potential associated with $\mathcal f$ is pictured in \cref{fig-profile}.
\begin{figure}[htbp]
    \centering
    \includegraphics[width=\textwidth]{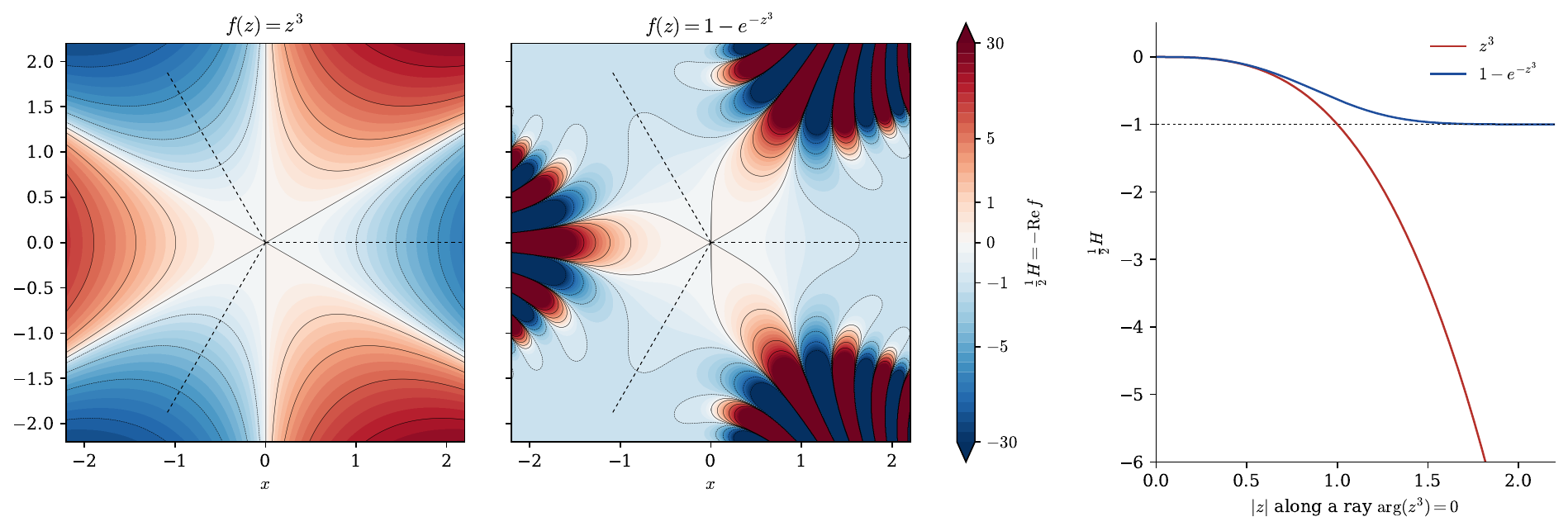}
    \caption{The potential $\tfrac12H=-\Re\p{\mathcal f}$ of $\ddot{\bm z}=\overline{\mathcal f'(\bm z)}$ for $\mathcal f(z)=z^3$ (left) and for the function $\mathcal f(z)=1-e^{-z^3}$ of \cref{main-lemma} (center); the color scale is logarithmic and saturates at $\pm30$, and the dashed rays are the directions $\arg(z^3)=0$. Near the origin the two potentials agree to leading order, and particles are pushed into the three valleys along the dashed rays. For $z^3$ the valleys fall away like $-|z|^3$, which leads to blow-up in finite time (\cref{t-polynomial}); for $1-e^{-z^3}$ they level off at $-1$ (right), so that a particle which stays close to a ray is eventually almost free. In the remaining sectors $1-e^{-z^3}$ oscillates with amplitude $e^{|z|^3}$; the content of \cref{main-lemma} is that solutions with small initial data never see these sectors. We emphasize that $1-e^{-z^3}$ is only a locally complete building block, and is not itself the function of \cref{main-theorem}.}
    \label{fig-profile}
\end{figure}

\begin{proof}[Proof of \cref{l-poly-exp} using \cref{main-lemma}]
    Let $R,\eps$ be as in \cref{main-lemma}. For $A>0$, choose $\mu\in\C$ with $\mu^n=-c_n/A$, so that $|\mu|=(|c_n|/A)\ef1n$, let $\sigma:=A\ef12|\mu|=|c_n|\ef1nA^{\frac12-\frac1n}$, and set
        \[\mathcal p(s;\zeta):=A\inv p(\sigma\inv s;\mu\inv\zeta)+\zeta^n=\sum_{j<n}a_j(s)\zeta^j,\qquad a_j(s):=A\inv\mu^{-j}c_j(\sigma\inv s),\]
    and $\tilde{\mathcal f}(s;\zeta):=1-e^{-\zeta^n+\mathcal p(s;\zeta)}$, so that $\mathcal f(t;z)=A\tilde{\mathcal f}(\sigma t;\mu z)-A$. If $\bm z(t)$ solves $\ddot{\bm z}=\overline{\mathcal f'(t;\bm z)}$, then $\bm\zeta(s):=\mu\bm z(\sigma\inv s)$ satisfies
        \[\frac{d^2\bm\zeta}{ds^2}=\mu\sigma^{-2}\,\overline{A\mu\tilde{\mathcal f}'(s;\bm\zeta)}=\overline{\tilde{\mathcal f}'(s;\bm\zeta)}.\]
    Since the total variation is invariant under reparametrization,
        \[\sup_s|a_j(s)|=|c_n|^{-\frac jn}A^{\frac jn-1}\sup_t|c_j(t)|\qquad\text{and}\qquad\int_\R\abs{\tfrac{da_j}{ds}}ds=|c_n|^{-\frac jn}A^{\frac jn-1}\int_\R|\dot c_j|dt,\]
    and both tend to $0$ as $A\to\infty$, since $j<n$; so $\tilde{\mathcal f}\in F_\eps$ and $V(\mathcal p)\leq\eps$ (see \cref{e-F-eps-bounds,e-total-variation}) for $A$ sufficiently large. Moreover, if $|\bm z(t_0)|\leq a$ and $|\dot{\bm z}(t_0)|\leq b$, then at $s_0=\sigma t_0$ we have $|\bm\zeta(s_0)|\leq|\mu|a=(|c_n|/A)\ef1na\leq R$ and $|\tfrac{d\bm\zeta}{ds}(s_0)|\leq|\mu|\sigma\inv b=A\ief12b\leq R\ef n2$ for $A$ sufficiently large. By \cref{main-lemma} (with $t_*=s_0$), $\bm\zeta$ exists for all $s\in\R$, and hence $\bm z$ exists for all $t\in\R$.
\end{proof}

\section{Computing the coordinate transformation}\label{ss-coordinate-transformation}
Let $I\subset\R$ be an open interval, let $\Omega\subset\C$ be an open subset of $\C$, let $\mathcal f:I\times\Omega\to\C$ be a $C^1$ function holomorphic in the second argument, let $\mathcal f'=\partial_z\mathcal f$, and let $\mathcal f_t=\partial_t\mathcal f$. Fix an integer $n\ge2$ and put $\beta=\tfrac12+\tfrac1n$. We denote by $(\bm z,\bm w):J\to\Omega\times\C$ a solution to \cref{e-ode}, i.e.\ to
\[
\dot{\bm z}=\sqrt2\,\bm w,\qquad \dot{\bm w}=\tfrac1{\sqrt2}\,\overline{\mathcal f'(t;\bm z)} .
\]
Along the solution write $\bm f:=\mathcal f(t;\bm z)$, $\bm f_t:=\mathcal f_t(t;\bm z)$, and
\[
\bm G:=\tfrac1n\,\bm z\,\mathcal f'(t;\bm z),\qquad \bm\kappa:=|\bm w|^2-\Re\p{\bm f},\qquad \bm c:=\frac{\bm\kappa}{\Re\p{\bm f}}\quad(\text{where }\Re\p{\bm f}\ne0).
\]
Note that $\bm G=\bm f$ when $\mathcal f(t;z)=z^n$ (the model case).
\begin{lemma}[Change of variables]\label{l-cov}
Let $(\bm z,\bm w)$ solve \cref{e-ode} on $J$ and suppose
\begin{equation}\label{e-hyp}
\bm z(t)\neq0,\qquad \tfrac d{dt}|\bm z(t)|>0,\qquad \Re\p{\mathcal f(t;\bm z(t))}\ne0\qquad\text{for all }t\in J.
\end{equation}
Let $\bm\zeta=n\log\bm z$ for a continuous branch of $\log\bm z$ on $J$, and $\bm\xi=\log(\bm w/\bm z)$ with the principal branch. Write $\bm\zeta=\bm r+i\bm\theta$, $\bm\xi=\bm s+i\bm\varphi$, so that
\[
\bm z=e^{\bm\zeta/n},\quad \bm w=e^{\bm\xi+\bm\zeta/n},\quad
\bm r=n\log|\bm z|,\quad \bm\theta\equiv\arg(\bm z^n),\quad \bm s=\log\tfrac{|\bm w|}{|\bm z|},\quad \bm\varphi=\arg\bm w-\arg\bm z\in\p{-\tfrac\pi2,\tfrac\pi2}.
\]
Then $\dot{\bm r}=\sqrt2\,n\,\frac{|\bm w|}{|\bm z|}\cos\bm\varphi>0$, so $r=\bm r(t)$ may be used as the independent variable, and with primes denoting $d/dr$,
\begin{align}
\bm\theta'&=\tan\bm\varphi,\label{e-theta-prime}\\
\bm\varphi'&=-\tfrac1n\tan\bm\varphi-\frac{1}{2(1+\bm c)}\p{\frac{\Im\p{\bm G}}{\Re\p{\bm f}}+\frac{\Re\p{\bm G}}{\Re\p{\bm f}}\tan\bm\varphi},\label{e-varphi-prime}\\
\bm t'&=\frac{|\bm z|}{\sqrt2\,n\,|\bm w|\cos\bm\varphi},\label{e-t-prime}\\
\bm c'&=-\bm c\p{\frac{\Re\p{\bm G}}{\Re\p{\bm f}}-\frac{\Im\p{\bm G}}{\Re\p{\bm f}}\tan\bm\varphi}-(1+\bm c)\,\bm t'\,\frac{\Re\p{\bm f_t}}{\Re\p{\bm f}}.\label{e-c-prime}
\end{align}
Moreover $|\bm w|^2=(1+\bm c)\Re\p{\bm f}$ and $\dot{\bm\kappa}=-\Re\p{\bm f_t}$.
\end{lemma}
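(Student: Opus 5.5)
The proof is a direct computation along the solution. I would first record the elementary derivatives, then pass to $r$ as independent variable by dividing by $\dot{\bm r}$. On $J$ we have $\bm z\neq0$, so a continuous branch of $\log\bm z$ exists. From \cref{e-ode},
\[\dot{\bm\zeta}=n\dot{\bm z}/\bm z=\sqrt2\,n\,\bm w/\bm z=\sqrt2\,n\,e^{\bm\xi}.\]
Taking real and imaginary parts gives
\[\dot{\bm r}=\sqrt2\,n\tfrac{|\bm w|}{|\bm z|}\cos\bm\varphi,\qquad\dot{\bm\theta}=\sqrt2\,n\tfrac{|\bm w|}{|\bm z|}\sin\bm\varphi.\]
Since $\dot{\bm r}=n\,\tfrac d{dt}|\bm z|/|\bm z|>0$ by \cref{e-hyp}, we get $\cos\bm\varphi>0$ and $\bm w\neq0$. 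Hence $\bm w/\bm z$ never crosses the negative real axis, the principal branch of $\bm\xi$ is continuous with $\bm\varphi\in(-\frac\pi2,\frac\pi2)$, and $t\mapsto\bm r(t)$ is a $C^1$ diffeomorphism onto its image. Then $\bm t'=1/\dot{\bm r}$ is \cref{e-t-prime}, and $\bm\theta'=\dot{\bm\theta}/\dot{\bm r}=\tan\bm\varphi$ is \cref{e-theta-prime}.

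Next comes the energy identity. Differentiating $\bm\kappa=|\bm w|^2-\Re\p{\bm f}$ gives
\[\dot{\bm\kappa}=2\Re\p{\bar{\bm w}\dot{\bm w}}-\Re\p{\bm f_t}-\Re\p{\mathcal f'(t;\bm z)\dot{\bm z}}.\]
The first and last terms are $\sqrt2\Re\p{\bar{\bm w}\,\overline{\mathcal f'}}$ and $\sqrt2\Re\p{\mathcal f'\bm w}$, which are complex conjugates of each other and cancel, leaving $\dot{\bm\kappa}=-\Re\p{\bm f_t}$. The identity $|\bm w|^2=(1+\bm c)\Re\p{\bm f}$ is simply the definition of $\bm c$.

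For $\bm\varphi'$ I would compute
\[\dot{\bm\xi}=\dot{\bm w}/\bm w-\dot{\bm z}/\bm z=\tfrac1{\sqrt2}\overline{\mathcal f'}/\bm w-\sqrt2\,\bm w/\bm z,\]
and take imaginary parts. The second term contributes $-\sqrt2\frac{|\bm w|}{|\bm z|}\sin\bm\varphi$, which after division by $\dot{\bm r}$ gives $-\frac1n\tan\bm\varphi$. For the first term, the key substitution is $\overline{\mathcal f'}=n\bar{\bm G}/\bar{\bm z}$. Using $\bar{\bm z}\bm w=|\bm z||\bm w|e^{i\bm\varphi}$, this term equals
\[\frac{n\,\bar{\bm G}e^{-i\bm\varphi}}{\sqrt2\,|\bm z||\bm w|}.\]
Dividing by $\dot{\bm r}$ gives $\bar{\bm G}e^{-i\bm\varphi}/(2|\bm w|^2\cos\bm\varphi)$, whose imaginary part is
\[-\frac{\Im\p{\bm G}+\Re\p{\bm G}\tan\bm\varphi}{2|\bm w|^2}.\]
Substituting $|\bm w|^2=(1+\bm c)\Re\p{\bm f}$ yields \cref{e-varphi-prime}.

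Finally, for $\bm c'$, write
\[\bm c'=\bm\kappa'/\Re\p{\bm f}-\bm c\,\p{\Re\p{\bm f}}'/\Re\p{\bm f}.\]
Here $\bm\kappa'=-\bm t'\Re\p{\bm f_t}$, and
\[\p{\Re\p{\bm f}}'=\bm t'\Re\p{\bm f_t}+\bm t'\Re\p{\sqrt2\,\mathcal f'\bm w}.\]
The same substitution gives
\[\bm t'\Re\p{\sqrt2\,\mathcal f'\bm w}=\frac{\Re\p{\bm G e^{i\bm\varphi}}}{\cos\bm\varphi}=\Re\p{\bm G}-\Im\p{\bm G}\tan\bm\varphi.\]
Collecting terms, the two $\Re\p{\bm f_t}$ contributions combine to the factor $(1+\bm c)$, which is \cref{e-c-prime}.

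There is no real obstacle here. The only points needing care are the branch and continuity bookkeeping (that $\bm\varphi$ stays in $(-\frac\pi2,\frac\pi2)$ so the principal logarithm is smooth along the solution, and that $\Re\p{\bm f}\neq0$ makes $\bm c$ well-defined and $C^1$) and keeping track of conjugations in the $\bm G$ substitution.
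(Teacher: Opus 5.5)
Your proposal is correct and follows essentially the same computation as the paper. The shared steps are $\dot{\bm\zeta}=\sqrt2\,n\,e^{\bm\xi}$, the conjugate cancellation giving $\dot{\bm\kappa}=-\Re\p{\bm f_t}$, the imaginary part of $\dot{\bm\xi}$ via $\overline{\mathcal f'}=n\overline{\bm G}/\bar{\bm z}$ and $\bar{\bm z}\bm w=|\bm z||\bm w|e^{i\bm\varphi}$, and the quotient rule for $\bm c=\bm\kappa/\Re\p{\bm f}$. The only cosmetic difference is that you get $\Re\p{\bm w/\bm z}>0$ from $\dot{\bm r}=\sqrt2\,n\Re\p{\bm w/\bm z}>0$ rather than from $\frac d{dt}|\bm z|^2=2\sqrt2\Re\p{\bar{\bm z}\bm w}$; phrase it in that form so that $\cos\bm\varphi$ is not used before $\bm w\neq0$ has been established.
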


\begin{proof}
\emph{Coordinates.} Since $\bm z$ is continuous and nonvanishing on the interval $J$, a $C^1$ branch of $\log\bm z$ exists. Next, $\frac d{dt}|\bm z|^2=2\Re\p{\bar{\bm z}\dot{\bm z}}=2\sqrt2\,\Re\p{\bar{\bm z}\bm w}$, so the second condition in \cref{e-hyp} says $\Re\p{\bar{\bm z}\bm w}>0$; hence $\bm w/\bm z=\bar{\bm z}\bm w/|\bm z|^2$ lies in the open right half-plane and $\bm\xi=\log(\bm w/\bm z)$ (principal branch) is $C^1$ with $|\Im\p{\bm\xi}|<\frac\pi2$. The identities $\bm z=e^{\bm\zeta/n}$, $\bm w=\bm z e^{\bm\xi}$ and the formulas for $\bm r,\bm s,\bm\varphi$ follow. Note also
\begin{equation}\label{e-zbarw}
\bar{\bm z}\bm w=|\bm z|^2e^{\bm\xi}=|\bm z|^2e^{\bm s}e^{i\bm\varphi}=|\bm z|\,|\bm w|\,e^{i\bm\varphi}.
\end{equation}

\emph{Equations for $\bm\zeta,\bm\xi$.} From \cref{e-ode}, $\dot{\bm\zeta}=n\dot{\bm z}/\bm z=\sqrt2\,n\,\bm w/\bm z=\sqrt2\,n\,e^{\bm\xi}$. Since $\mathcal f'(t;\bm z)=n\bm G/\bm z$, \cref{e-ode} and \cref{e-zbarw} give
\[
\frac{\dot{\bm w}}{\bm w}=\frac{\overline{\mathcal f'(t;\bm z)}}{\sqrt2\,\bm w}=\frac{n\,\overline{\bm G}}{\sqrt2\,\bar{\bm z}\bm w}=\frac{n\,\overline{\bm G}}{\sqrt2\,|\bm z||\bm w|}e^{-i\bm\varphi},
\qquad
\dot{\bm\xi}=\frac{\dot{\bm w}}{\bm w}-\frac{\dot{\bm z}}{\bm z}=\frac{n\,\overline{\bm G}}{\sqrt2\,|\bm z||\bm w|}e^{-i\bm\varphi}-\sqrt2\,e^{\bm\xi}.
\]

\emph{Energy.} Using \cref{e-ode} and $\frac d{dt}\Re\p{\bm f}=\Re\p{\mathcal f'\dot{\bm z}}+\Re\p{\bm f_t}$,
\[
\dot{\bm\kappa}=2\Re\p{\bar{\bm w}\dot{\bm w}}-\Re\p{\mathcal f'\dot{\bm z}}-\Re\p{\bm f_t}
=\sqrt2\Re\p{\overline{\bm w\mathcal f'}}-\sqrt2\Re\p{\mathcal f'\bm w}-\Re\p{\bm f_t}=-\Re\p{\bm f_t},
\]
and $|\bm w|^2=\Re\p{\bm f}+\bm\kappa=(1+\bm c)\Re\p{\bm f}$ by definition of $\bm c$.

\emph{Reparameterization.} Real and imaginary parts of $\dot{\bm\zeta}=\sqrt2\,n\,e^{\bm\xi}$ give
\[
\dot{\bm r}=\sqrt2\,n\,e^{\bm s}\cos\bm\varphi=\sqrt2\,n\,\frac{|\bm w|}{|\bm z|}\cos\bm\varphi>0,\qquad
\dot{\bm\theta}=\sqrt2\,n\,e^{\bm s}\sin\bm\varphi,
\]
whence $\bm\theta'=\dot{\bm\theta}/\dot{\bm r}=\tan\bm\varphi$ and $\bm t'=1/\dot{\bm r}$, i.e.\ \cref{e-theta-prime}, \cref{e-t-prime}. Taking the imaginary part of $\dot{\bm\xi}$ and using $\Im\p{\overline{\bm G}e^{-i\bm\varphi}}=-\Im\p{\bm Ge^{i\bm\varphi}}=-(\Im\p{\bm G}\cos\bm\varphi+\Re\p{\bm G}\sin\bm\varphi)$,
\[
\dot{\bm\varphi}=-\frac{n\,(\Im\p{\bm G}\cos\bm\varphi+\Re\p{\bm G}\sin\bm\varphi)}{\sqrt2\,|\bm z||\bm w|}-\sqrt2\,e^{\bm s}\sin\bm\varphi .
\]
Dividing by $\dot{\bm r}$ and using $|\bm z|e^{\bm s}=|\bm w|$ and $|\bm w|^2=(1+\bm c)\Re\p{\bm f}$,
\[
\bm\varphi'=-\frac{\Im\p{\bm G}+\Re\p{\bm G}\tan\bm\varphi}{2|\bm w|^2}-\tfrac1n\tan\bm\varphi
=-\tfrac1n\tan\bm\varphi-\frac{1}{2(1+\bm c)}\p{\frac{\Im\p{\bm G}}{\Re\p{\bm f}}+\frac{\Re\p{\bm G}}{\Re\p{\bm f}}\tan\bm\varphi},
\]
which is \cref{e-varphi-prime}.

\emph{Normalized energy.} Put $F=\Re\p{\bm f}$. Along the solution, by \cref{e-ode} and $\bm w/\bm z=e^{\bm s}e^{i\bm\varphi}$,
\[
\dot F=\Re\p{\mathcal f'\dot{\bm z}}+\Re\p{\bm f_t}=\sqrt2\,\Re\p{\tfrac{n\bm G}{\bm z}\bm w}+\Re\p{\bm f_t}
=\sqrt2\,n\,e^{\bm s}\Re\p{\bm Ge^{i\bm\varphi}}+\Re\p{\bm f_t} .
\]
Dividing by $\dot{\bm r}=\sqrt2\,n\,e^{\bm s}\cos\bm\varphi$ and using $\Re\p{\bm Ge^{i\bm\varphi}}=\Re\p{\bm G}\cos\bm\varphi-\Im\p{\bm G}\sin\bm\varphi$,
\[
F'=\Re\p{\bm G}-\Im\p{\bm G}\tan\bm\varphi+\bm t'\Re\p{\bm f_t},\qquad \bm\kappa'=-\bm t'\Re\p{\bm f_t} .
\]
Therefore
\[
\bm c'=\frac{\bm\kappa'}{F}-\frac{\bm\kappa F'}{F^2}
=-\frac{\bm t'\Re\p{\bm f_t}}{F}-\bm c\p{\frac{\Re\p{\bm G}}{F}-\frac{\Im\p{\bm G}}{F}\tan\bm\varphi}-\bm c\,\frac{\bm t'\Re\p{\bm f_t}}{F},
\]
which is \cref{e-c-prime}.
\end{proof}

\begin{remark}
Only the values of $\mathcal f$ along the trajectory enter, so $\Omega$ may be any open set containing $\bm z(J)$, e.g.\ a sector. The lemma is purely kinematic: whether \cref{e-hyp} persists is a property of the particular $\mathcal f$.
\end{remark}
\begin{remark}
    As stated in the introduction, \cref{l-cov} and its consequence for the model case (\cref{c-model} below) do most of the work necessary to prove the Ehlers--Kundt conjecture for polynomials; we work out the rest of the details in \cref{s-polynomial-conjecture}.
\end{remark}

\bigskip
\begin{corollary}[Model case]\label{c-model}
Let $n\ge2$ and $\mathcal f(t;z)=z^n$. Let $\Qsq=(-\frac\pi2,\frac\pi2)^2$, and let $\overline\Qsq=[-\frac\pi2,\frac\pi2]^2$ denote its closure. Let $(\bm z,\bm w)$ solve \cref{e-ode} with
    \[\label{e-model-hypotheses}
        \bm z(t_0)\neq0,\qquad\Re\p{\overline{\bm z(t_0)}\bm w(t_0)}\geq0,\qquad\text{and}\qquad\bm\kappa(t_0)=0,\text{ i.e. }|\bm w(t_0)|^2=\Re\p{\bm z(t_0)^n}.\]
Then the following are true:
\begin{enumerate}
\item[(1)] When $t>t_0$, $\frac d{dt}|\bm z|>0$, $\bm\kappa=0$, $\Re\p{\bm z^n}>0$, and $(\bm\theta,\bm\varphi)\in\Qsq$.
\item[(2)] When $t>t_0$, $\bm\theta,\bm\varphi$ satisfy
\begin{equation}\label{e-model}
\bm\theta'=\tan\bm\varphi,\qquad \bm\varphi'=-\tfrac12\tan\bm\theta-\beta\tan\bm\varphi,\qquad \beta=\tfrac12+\tfrac1n .
\end{equation}
\item[(3)] There is a continuous function $\Phi:\overline\Qsq\to[0,\infty)$, which is $C^\infty$ on $\Qsq$, such that the following hold:
\begin{enumerate}
\item[(a)] We have $\frac d{dr}\Phi(\bm\theta,\bm\varphi)=-\beta\,\Phi(\bm\theta,\bm\varphi)$,
    \[\label{e-Phi-decay}
        \frac{\Phi(\bm\theta(t),\bm\varphi(t))}{\Phi(\bm\theta(t_0),\bm\varphi(t_0))}=\p{\frac{|\bm z(t)|}{|\bm z(t_0)|}}^{-\beta n},\]
        and
    \[\frac{|\bm\theta(t)|+|\bm\varphi(t)|}{|\bm\theta(t_0)|+|\bm\varphi(t_0)|}\lesssim\p{\frac{|\bm z(t)|}{|\bm z(t_0)|}}\ief{\beta n}2.\]
\item[(b)] $\Phi(\theta,\varphi)=\tfrac12\theta^2+\beta\theta\varphi+\varphi^2+O(\theta^4+\varphi^4).$
\item[(c)] $\Phi(\theta,\varphi)\sim\theta^2+\varphi^2$ on all of $\overline\Qsq$, with constants depending only on $n$; in particular, $\Phi$ is bounded.
\end{enumerate}
\item[(4)] The forward solution ceases to exist within time $\lesssim|\bm z(t_0)|\ief{n-2}2$ if $n\geq3$.
\end{enumerate}
\end{corollary}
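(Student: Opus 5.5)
The first step is to check that \cref{l-cov} applies for all $t>t_0$, working in the original time variable because at $t_0$ the hypotheses \cref{e-hyp} may hold only weakly. Since $\mathcal f=z^n$ does not depend on $t$, the energy computation in the proof of \cref{l-cov} (which uses only \cref{e-ode}) gives $\dot{\bm\kappa}=0$. Hence $\bm\kappa\equiv0$, i.e. $|\bm w|^2=\Re\p{\bm z^n}\geq0$ on the whole interval of existence. Next I would compute directly from \cref{e-ode}:
\[\tfrac d{dt}\Re\p{\bar{\bm z}\bm w}=\sqrt2|\bm w|^2+\tfrac n{\sqrt2}\Re\p{\bm z^n}=\p{\sqrt2+\tfrac n{\sqrt2}}\Re\p{\bm z^n}\geq0 .\]
So $\Re\p{\bar{\bm z}\bm w}$ is nondecreasing, and in particular $\bm z\neq0$ for $t\geq t_0$.

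To get strict positivity for $t>t_0$, I would exclude the degenerate case in which $\Re\p{\bm z^n}$ vanishes on an interval $[t_0,t_0+\delta)$. On such an interval $\bm w=0$, hence $\dot{\bm w}=0$, hence $\overline{n\bm z^{n-1}}=0$, contradicting $\bm z\neq0$. Therefore $\Re\p{\bar{\bm z}\bm w}>0$ for $t>t_0$, and then $\Re\p{\bm z^n}=|\bm w|^2>0$ as well. This gives (1), with $|\bm\theta|<\frac\pi2$ from $\Re\p{\bm z^n}>0$ and $|\bm\varphi|<\frac\pi2$ from $\Re\p{\bar{\bm z}\bm w}>0$.

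For (2) I would substitute into \cref{l-cov}, using $\bm G=\bm f=\bm z^n$, $\bm f_t=0$ and $\bm c=0$. The ratios become $\Im\p{\bm G}/\Re\p{\bm f}=\tan\bm\theta$ and $\Re\p{\bm G}/\Re\p{\bm f}=1$. Then \cref{e-varphi-prime} reads $\bm\varphi'=-\frac1n\tan\bm\varphi-\frac12\tan\bm\theta-\frac12\tan\bm\varphi$, which is \cref{e-model}.

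For (3), call $X=(\tan\varphi,-\frac12\tan\theta-\beta\tan\varphi)$ the vector field of \cref{e-model} on $\Qsq$. A direct computation shows that the quadratic form $Q=\frac12\theta^2+\beta\theta\varphi+\varphi^2$ (positive definite, since $\beta^2<2$) satisfies $\mathcal L Q=-\beta Q$ exactly for the linearization $L(\theta,\varphi)=(\varphi,-\frac12\theta-\beta\varphi)$ at the origin. The eigenvalues of $L$ are complex with real part $-\beta/2$, so they lie in the Poincaré domain and satisfy no resonance of order $\geq2$. By Poincaré's theorem, $X$ is therefore analytically conjugate near $0$ to $L$ by a map $h=\mathrm{id}+O(|\cdot|^2)$. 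Odd symmetry of $X$ removes the quadratic terms, which should give (b) for $\Phi_{\rm loc}:=Q\circ h$ near $0$.

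To globalize $\Phi$, I would show that $\Qsq$ lies in the basin of attraction of $0$. The function $E(\theta,\varphi)=-\frac12\log\cos\theta-\log\cos\varphi$ satisfies
\[\tfrac d{dr}E=\tfrac12\tan\theta\tan\varphi+\tan\varphi\p{-\tfrac12\tan\theta-\beta\tan\varphi}=-\beta\tan^2\varphi\leq0 .\]
Its sublevel sets are compact in $\Qsq$, so LaSalle's principle gives convergence to $0$. I would then set $\Phi(p):=e^{\beta\tau}\Phi_{\rm loc}(\phi_\tau(p))$ for any $\tau$ large enough that the flow $\phi_\tau(p)$ lies in the local chart; this is independent of $\tau$, is smooth, and satisfies $\frac d{dr}\Phi=-\beta\Phi$. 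Integrating along the solution with $r=n\log|\bm z|$ gives \cref{e-Phi-decay}. The bound on $|\bm\theta|+|\bm\varphi|$ in (a) then follows from (c).

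The main obstacle is (c) together with the continuous extension of $\Phi$ to $\overline\Qsq$. For this I would show that the $r$-time a trajectory starting in $\overline\Qsq$ needs to reach the fixed compact set $\{E\leq E_0\}$ is uniformly bounded and depends continuously on the starting point. Near $\varphi=\pm\frac\pi2$ one has $\varphi'\approx-\beta\tan\varphi$, so $\varphi$ leaves the boundary layer in $r$-time $O(\cos\varphi)$. At $\theta=\pm\frac\pi2$ with $\varphi$ not near $\pm\frac\pi2$, $\varphi'\approx-\frac12\tan\theta$ is huge, which turns $\varphi$ so that $\theta'=\tan\varphi$ pushes $\theta$ inward. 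The corners need a short case check. Given this uniform bounded entrance time, $\Phi$ is continuous on $\overline\Qsq$ and positive away from $0$, and compactness combined with (b) gives $\Phi\sim\theta^2+\varphi^2$.

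For (4) I would use \cref{e-t-prime} with $|\bm w|=|\bm z|^{n/2}\cos^{1/2}\bm\theta$ and $dr=n\,d|\bm z|/|\bm z|$, which gives
\[t-t_0=\int\frac{|\bm z|^{-n/2}}{\sqrt2\cos^{1/2}\bm\theta\cos\bm\varphi}\,d|\bm z| .\]
By (3), after a uniformly bounded $r$-time (i.e. after $|\bm z|$ has grown by a bounded factor) both cosines are $\geq\frac12$. The remaining integral is then $\lesssim|\bm z(t_0)|^{-(n-2)/2}$, using $n\geq3$. In the initial window, $\int dr/(\cos^{1/2}\bm\theta\cos\bm\varphi)$ is uniformly bounded by the same boundary-layer analysis as in (c): $dr/\cos\varphi\approx d\varphi/\beta$, and $\cos^{-1/2}\theta$ is integrable in $\theta$. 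So the total time is $\lesssim|\bm z(t_0)|^{-(n-2)/2}$, and since $|\bm z|\to\infty$ as $r\to\infty$, the forward solution ceases to exist within that time.
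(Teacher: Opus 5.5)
There is a genuine gap, and it sits exactly at the step you flag as the main obstacle. Your arguments for (1) and (2) are correct. Your construction of $\Phi$ on the open square is also sound. Poincar\'e linearization (Poincar\'e domain, no resonances, and oddness of the vector field forcing $h=\mathrm{id}+O(|\cdot|^3)$) is a fine substitute for the paper's appeal to Sternberg. Your $E$ is just $-\tfrac12\log(1-\Lambda)$ for the paper's Lyapunov function $\Lambda=1-\cos\theta\cos^2\varphi$. One small fix: for $\tau$-independence of $e^{\beta\tau}\Phi_{\rm loc}(\phi_\tau(p))$, the orbit must stay in a forward-invariant sublevel set of $\Phi_{\rm loc}$, not merely enter the chart.

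The gap is continuity up to $\overline\Qsq$, which is not a short case check. Uniformly bounded entrance times for starting points in $\Qsq$ neither define $\Phi$ on $\partial\Qsq$ nor make it continuous there. For that you need $\lim_{p\to p_0}\phi_\tau(p)$ to exist for $p_0\in\partial\Qsq$, i.e. a continuous extension of the flow of a vector field that is singular on all of $\partial\Qsq$. Your boundary-layer heuristics also miss the actual structure. On $|z|=1$, the whole edge $\theta=\frac\pi2$ corresponds to the single datum $w=0$, $z^n=i$. There $\varphi$ is undefined, yet (3a) must hold at $t_0$ for such data, so $\Phi$ has to be constant on the edges. For that datum, $\bar{\bm z}\bm w\approx-\frac{in}{\sqrt2}(t-t_0)$, so $\bm\varphi\to-\frac\pi2$ as $t\to t_0^+$. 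Hence, as $\theta_0\to\frac\pi2$, the forward orbits through $(\theta_0,\varphi_0)$ converge, uniformly in $\varphi_0$, to a single orbit emanating from the corner $(\frac\pi2,-\frac\pi2)$, where $\tan\theta$ and $\tan\varphi$ blow up with opposite signs. So every orbit near an edge is funnelled through a singular corner, and the corners are the whole difficulty. The same unproved analysis is what you invoke for the initial window $\int dr/(\cos^{1/2}\bm\theta\cos\bm\varphi)$ in (4).

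The missing idea is to leave the singular $(\theta,\varphi)$ chart and work in the original variables, where $\partial\Qsq$ consists of perfectly regular data. The map $(\theta,\varphi)\mapsto\p{e^{i\theta/n},\cos^{1/2}\theta\,e^{i(\theta/n+\varphi)}}$ is continuous from $\overline\Qsq$ into the compact set $S=\{|z|=1,\ \Re(\bar zw)\geq0,\ |w|^2=\Re(z^n)\}$. By (1), solutions from $S$ cross every circle $|z|=\rho>1$ transversally. By \cref{l-continuity}(c) the crossing point then depends continuously on the data. This yields a continuous extension of $\phi_\tau$ to $\overline\Qsq$ with $\phi_\tau(\overline\Qsq)\subset\Qsq$ for $\tau>0$. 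Compactness of $\phi_1(\overline\Qsq)$ gives a uniform $b$ with $\phi_b(\overline\Qsq)$ inside a forward-invariant sublevel set of the local function. Then $\Phi:=e^{\beta b}\Phi_{\rm loc}\circ\phi_b$ is continuous on $\overline\Qsq$, and (c) follows by compactness. Likewise, (4) follows from continuity of the first crossing time of $|z|=2$ on $S$ (hence boundedness) plus the scaling symmetry, with no $r$-integrals and no boundary-layer analysis needed.
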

\begin{proof}
    Throughout, $\phi_\tau$ denotes the flow of the autonomous system \cref{e-model} on $\Qsq$.
    \begin{enumerate}
        \item
            Since $\mathcal f_t=0$, the energy is conserved, so $\bm\kappa=0$, i.e. $|\bm w|^2=\Re\p{\bm z^n}$ wherever the solution exists. Put $\bm\psi:=\tfrac12|\bm z|^2$. Then
                \[\ddot{\bm\psi}=|\dot{\bm z}|^2+\Re\p{\bar{\bm z}\ddot{\bm z}}=2|\bm w|^2+n\Re\p{\bm z^n}=(n+2)|\bm w|^2\geq0,\]
            with equality iff $\bm\theta=\pm\tfrac\pi2$ (and in this case, $|\dot{\bm w}|>0$, so $\bm\psi^{(4)}>0$). In particular, since $\dot{\bm\psi}(t_0)\geq0$, we conclude that $\dot{\bm\psi}(t)>0$ when $t>t_0$, so $\frac d{dt}|\bm z|>0$ and $|\bm w|^2=\Re\p{\bm z^n}>0$. This implies that $(\bm\theta,\bm\varphi)\in\Qsq$ for $t>t_0$, and the hypotheses of \cref{l-cov} are met. Furthermore, $\phi_\tau$ has a continuous extension to $\bar\Qsq$ which satisfies $\phi_\tau(\bar\Qsq)\subset\Qsq$ for all $\tau>0$.
        \item
            With $\mathcal f=z^n$ we have $\bm G=\tfrac1n\bm z\cdot n\bm z^{n-1}=\bm z^n=e^{\bm\zeta}$ and $\Re\p{\bm f}=\Re\p{\bm z^n}=e^{\bm r}\cos\bm\theta$, so
            \[
            \frac{\Re\p{\bm G}}{\Re\p{\bm f}}=1,\qquad \frac{\Im\p{\bm G}}{\Re\p{\bm f}}=\frac{e^{\bm r}\sin\bm\theta}{e^{\bm r}\cos\bm\theta}=\tan\bm\theta,
            \]
            and $\bm c=0$ by (1). Substituting into \cref{e-varphi-prime} gives $\bm\varphi'=-\tfrac1n\tan\bm\varphi-\tfrac12(\tan\bm\theta+\tan\bm\varphi)$, which is \cref{e-model}.
            
        \item
            By \cite{Sternberg_1957} (a refinement of Grobman-Hartman), there is a local $C^\infty$ conjugacy to the linearized version of the problem near the $(0,0)$ equilibrium; it follows that there is some function $\tilde\Phi(\theta,\varphi)=\tfrac12\theta^2+\beta\theta\varphi+\varphi^2+O(\theta^4+\varphi^4)$ which satisfies (a) on some open set $U\owns(0,0)$. Let
            \[
            \Lambda(\theta,\varphi):=1-\cos\theta\cos^2\varphi\in[0,1]\quad\text{on }\bar\Qsq .
            \]
            and note that $\Lambda'=-2\beta\cos\theta\sin^2\varphi\le0$ along integral curves of \cref{e-model}. That is, the sublevel sets of $\Lambda$ are forward-invariant under $\phi$ and the monotone convergence theorem implies that $\Lambda$ has some limit along integral curves of \cref{e-model}; thus, $\Lambda'\to0$ and the limit points of the integral curves of \cref{e-model} lie in the zero set $\{(\theta,\varphi)\in\Qsq:\Lambda'(\theta,\varphi)=0\}=\{(\theta,\varphi)\in\Qsq:\varphi=0\}$ of $\Lambda'$, which contains only one equilibrium, namely $(0,0)$. In other words, every integral curve of \cref{e-model} converges to $(0,0)$ as $r\to\infty$.
            Let $a>0$ be so small that $K:=\tilde\Phi\inv([0,a])$ is a compact subset of $U$ on which $\tilde\Phi\sim\theta^2+\varphi^2$ (the quadratic part of $\tilde\Phi$ is positive definite, since $\beta^2<2$). Then $K$ is forward invariant, because $\tilde\Phi$ decreases along the flow in $U$ and an orbit could only leave $K$ through the level set $\{\tilde\Phi=a\}$. By (1), $\phi_1(\bar\Qsq)$ is a compact subset of $\Qsq$. Every point of it is carried into the open set $\operatorname{int}K$ at some time, and hence, by the continuity of the flow, so is a neighborhood of it; by compactness and the forward invariance of $K$, there is some $b>0$ so that $\phi_b(\bar\Qsq)\subset K$. We define
                \[\Phi(\theta,\varphi):=e^{\beta b}\tilde\Phi(\phi_b(\theta,\varphi)),\]
            which is continuous on $\bar\Qsq$ and smooth on $\Qsq$, since $\phi_b$ is. Since $\phi_b$ commutes with $\phi_r$ and the forward orbit of $\phi_b(\theta,\varphi)$ stays in $K\subset U$, where (a) holds for $\tilde\Phi$, we have $\Phi(\phi_r(\theta,\varphi))=e^{-\beta r}\Phi(\theta,\varphi)$ for all $(\theta,\varphi)\in\bar\Qsq$ and $r>0$, and $\Phi=\tilde\Phi$ on $K$; this gives \cref{e-Phi-decay} and (b). Finally, $\Phi$ is continuous and positive on the compact set $\bar\Qsq\setminus\operatorname{int}K$ (it vanishes only at $(0,0)$, which lies in $\operatorname{int}K$), on which $\theta^2+\varphi^2\sim1$; together with $\Phi=\tilde\Phi$ on $K$, this gives (c), and the second bound in (a) follows from \cref{e-Phi-decay} and (c).
        \item 
            It suffices to show that $|\bm z|$ doubles within time $\lesssim|\bm z(t_0)|\ief{n-2}2$. This follows from the rescaling symmetry of \cref{e-ode} (if $(\bm z,\bm w)$ is a solution, then so is $(R\inv\bm z(R\ief{n-2}2\tau),R\ief n2\bm w(R\ief{n-2}2\tau))$, and \cref{e-model-hypotheses} is invariant under this rescaling), along with the fact that the doubling time is uniformly bounded when $|\bm z(t_0)|=1$: by (1) and \cref{l-continuity}, the first crossing time of $\{|z|=2\}$ is a continuous function of the data on the compact set $\{|z|=1,\ \Re(\bar zw)\geq0,\ |w|^2=\Re(z^n)\}$, hence bounded.
        \end{enumerate}
\end{proof}

The model system \cref{e-model} is illustrated in \cref{fig-model}.

\begin{figure}[htbp]
    \centering
    \includegraphics[width=.5\textwidth]{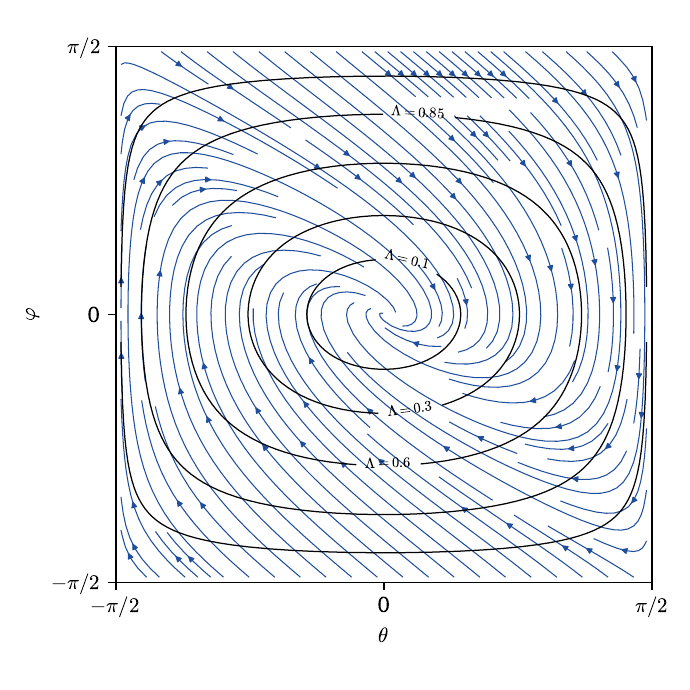}
    \caption{The phase portrait of the model system \cref{e-model} on the square $\Qsq$, for $n=3$ (so $\beta=\tfrac56$), together with some level sets of the function $\Lambda=1-\cos\theta\cos^2\varphi$ from the proof of \cref{c-model}. The origin is a spiral sink; $\Lambda$ is non-increasing along the flow (so that $\Qsq$ is forward invariant).}
    \label{fig-model}
\end{figure}

\section{Strong focusing and local completeness}\label{s-logic}
\def\PsiS{{\Psi_{\text{small}}}}
\def\PsiL{{\Psi_{\text{large}}}}

In this section we would like to prove \cref{main-lemma}. This will require studying the following ODE, where $\mathcal f(z)=1-e^{-z^n}$:
    \[\label{z-w-ode}
        \dot{\bm z}=\sqrt2\bm w,\qquad\dot{\bm w}=\frac1{\sqrt2}\overline{{\mathcal f}'(\bm z)}
        \]
We include the $\sqrt2$ factor to simplify the conservation of energy law:
    \[\label{e-conservation-of-energy-focusing-section}
        \frac{d}{dt}\p{|\bm w|^2-\Re\p{\mathcal f(\bm z)}}=0
        \]
We will also be interested in solutions to equations of the form
    \[\label{z-w-tilde-f-ode}
        \dot{\tilde{\bm z}}=\sqrt2\tilde{\bm w},\qquad\dot{\tilde{\bm w}}=\frac1{\sqrt2}\overline{\tilde{\mathcal f}'(t;\tilde{\bm z})}
        \]
where $\tilde{\mathcal f}\in F_\eps$ (see \cref{d-F-eps}; recall that $\tilde{\mathcal f}$ is allowed to depend on $t$). We will fix the constant $\eps$ later on in this section. For \cref{z-w-tilde-f-ode}, the energy is no longer conserved. Instead, by \cref{l-cov} (this part of \cref{l-cov} does not use \cref{e-hyp}; see also \cref{r-energy}),
    \[\label{e-energy-drift}
        \frac d{dt}\p{|\tilde{\bm w}|^2-\Re\p{\tilde{\mathcal f}(t;\tilde{\bm z})}}=-\Re\p{\tilde{\mathcal f}_t(t;\tilde{\bm z})},\qquad\text{where}\qquad\tilde{\mathcal f}_t(t;z)=-\Big(\sum_{j<n}\dot a_j(t)z^j\Big)e^{-z^n+\mathcal p(t;z)}.
        \]
In particular, if $|z^je^{-z^n+\mathcal p(t;z)}|\leq M_0$ for all $j<n$ along the solution on some time interval, then the energy changes by at most $M_0V(\mathcal p)$ on that time interval, where $V(\mathcal p)$ is the total variation \cref{e-total-variation}; we will only use this in the proof of \cref{main-lemma}. Alternatively, one can measure the energy with respect to the unperturbed function $\mathcal f$, as we do in \cref{d-S-region} below. By \cref{z-w-tilde-f-ode},
    \[\label{e-energy-drift-f}
        \frac d{dt}\p{|\tilde{\bm w}|^2-\Re\p{\mathcal f(\tilde{\bm z})}}=\sqrt2\,\Re\p{\tilde{\bm w}\,\p{\tilde{\mathcal f}'(t;\tilde{\bm z})-\mathcal f'(\tilde{\bm z})}}=\Re\p{\dot{\tilde{\bm z}}\,\p{\tilde{\mathcal f}'(t;\tilde{\bm z})-\mathcal f'(\tilde{\bm z})}},
        \]
which does not involve any derivatives in $t$: on any time interval, this energy changes by at most $\sup|\tilde{\mathcal f}'-\mathcal f'|$ (along the solution) times the length of the curve $\tilde{\bm z}$. This is how we control the energy in \cref{l-large-scale-focusing}, where $\tilde{\mathcal f}'-\mathcal f'$ decays superexponentially along the solution.

Generally, we will work with the ``zero-energy'' solutions to \cref{z-w-ode}, i.e. those for which $|\bm w|^2-\Re\p{\mathcal f(\bm z)}=0$. We will generalize from these simple solutions to not-necessarily zero-energy solutions to \cref{z-w-tilde-f-ode} in a single step (\cref{l-stable-focusing}), by using the continuous dependence of the solutions, and of their first crossing times, on the function $\mathcal f$ and on the initial data (\cref{ss-appendix-continuity}).

Note that the function $z\mapsto -e^{-z^n}$ is exponentially decaying whenever $\arg(z^n)\approx0$. Our goal will be to show that solutions to \cref{z-w-ode} with ``small'' initial data will stay close to radial trajectories. The radial trajectories will then enter the exponentially-decaying region, where we can hope for completeness.

More specifically, given some solution $(\bm z,\bm w)$ to \cref{z-w-ode}, let $\bm\theta=\arg(\bm z^n)$ and let $\bm\varphi=\arg(\bm w)-\arg(\bm z)$. Suppose that, whenever $(\bm z,\bm w)$ exists, at least one of the following is true:
    \begin{enumerate}
        \item $|\bm z|$ is small.
        \item $|\bm\theta|$ is small.
    \end{enumerate}
We will formalize and prove this later. The trajectories of any incomplete solutions must escape to $\infty$ in finite time, so if $(\bm z,\bm w)$ does not exist for all time, (1) will not be satisfied at some point. Therefore, we will have (2) for sufficiently large $\bm z$. When $|\bm\theta|$ is sufficiently small, we will automatically find that $\mathcal f$ is bounded uniformly along the trajectory. By \cref{e-conservation-of-energy-focusing-section}, we know that $|\bm w|\sim|\dot{\bm z}|$ must be bounded; hence, $|\bm z|$ grows at most linearly and cannot blow up in finite time. This behavior is illustrated in \cref{fig-focusing}.

\begin{figure}[htbp]
    \centering
    \includegraphics[width=\textwidth]{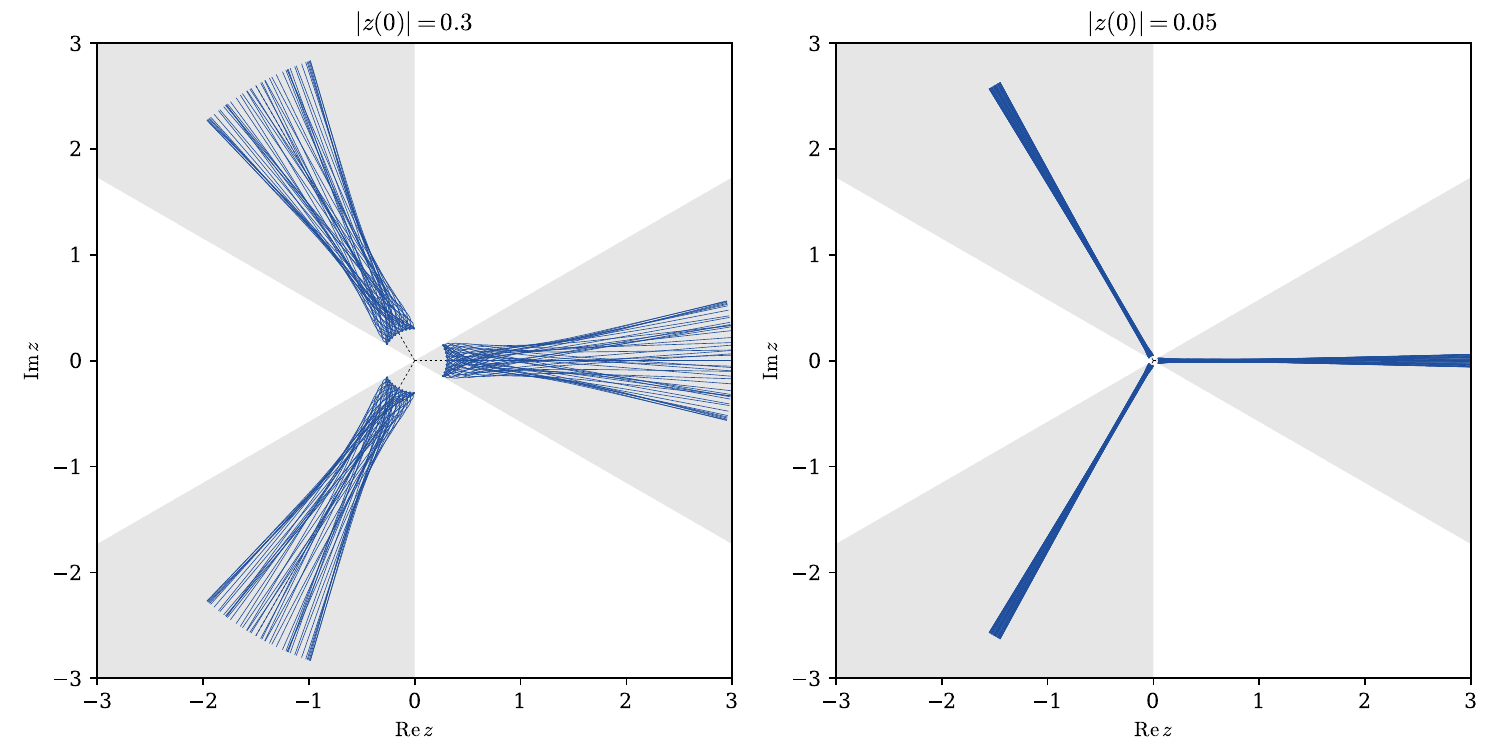}
    \caption{Numerically computed zero-energy solutions to \cref{z-w-ode} for $\mathcal f(z)=1-e^{-z^n}$ with $n=3$, plotted in the $z$-plane until they reach $|z|=3$. The initial data lie on the circle $|\bm z(0)|=0.3$ (left) and $|\bm z(0)|=0.05$ (right), with $\bm\theta(0)=\arg(\bm z(0)^n)$ ranging over $9$ equally spaced values in $[-1.5,1.5]$ and $\bm\varphi(0)\in\{0,\pm0.6,\pm1.2\}$. The shaded sectors are the regions $\Re(z^n)>0$, in which $e^{-z^n}$ decays, and the dashed rays are the radial directions $\arg(z^n)=0$. The trajectories oscillate about the rays while $|z|\lesssim1$ (this is the spiralling of the model system, \cref{fig-model}) and then continue along almost straight lines, since $\mathcal f'$ is negligible for $|z|\gg1$ inside a small aperture. The largest value of $|\bm\theta|$ at $|z|=3$ is approximately $0.57$ on the left and $0.06$ on the right; the ratio is consistent with the rate $s^{\beta n/2}=(1/6)^{5/4}\approx0.11$ coming from \cref{c-model}(3a), cf.\ \cref{Phi-bound-R-Ri0}. This figure is an illustration only and is not used in any proof.}
    \label{fig-focusing}
\end{figure}

The preceding paragraphs give a rough outline of how we intend to use the ``focusing'' phenomenon, i.e. that solutions with small initial data stay near radial lines. At some scales, the solutions to \cref{z-w-ode} will become more focused, and at some scales it will be the opposite. To adequately describe how these two forces interact, it will be convenient to fix a notation that describes the exact type of angular confinement that we are hoping to prove:

\begin{definition}\label{d-aperture}
    Let $\Psi:(-\frac\pi2,\frac\pi2)^2\to[0,\infty)$ be a function that satisfies the following properties:
        \begin{enumerate}
            \item $\Psi$ is continuous
            \item $\Psi(\theta,\varphi)\sim\theta^2+\varphi^2$
        \end{enumerate}
    Then, for any $A\in[0,\infty]$, we say that
        \[\mathcal A(\Psi,A):=\left\{(\theta,\varphi)\in(-\tfrac\pi2,\tfrac\pi2)^2:\Psi(\theta,\varphi)\leq A^2\right\}\]
    is an \textit{aperture}. We call $\Psi$ the \textit{aperture function} and $A$ the \textit{aperture size}.
\end{definition}

The function $\Psi$ will always be chosen to satisfy $\Psi(\theta,\varphi)\sim\theta^2+\varphi^2$, so $\Psi\ef12$ is a norm-like object, although in general we will not assume subadditivity. Together, $\Psi$ and $A$ represent the shape and size of the angular region that $\bm z$ and $\bm w$ are confined to. Note that, since $\Psi\sim\theta^2+\varphi^2$, the aperture $\mathcal A(\Psi,A)$ is a compact subset of the open square $(-\frac\pi2,\frac\pi2)^2$ when $A\ll1$, and $\mathcal A(\Psi,0)=\{(0,0)\}$.

\begin{definition}\label{d-S-region}
    Let $\mathcal A(\Psi,A)$ be an aperture, let $R>0$ be a radius, and let $\delta\geq0$. We denote by $\tilde{\mathcal S}_\Psi(R,A,\delta)$ the set of all $(z,w)\in\C^2$ that satisfy the following conditions:
        \begin{enumerate}
            \item $|z|=R$
            \item $\Re\p{\bar zw}\geq0$ (that is, either $w=0$, or $|\varphi|\leq\frac\pi2$ where $\varphi=\arg(w)-\arg(z)$)
            \item $\abs{|w|^2-\Re({\mathcal f}(z))}\leq\delta$, where $\mathcal f(z)=1-e^{-z^n}$
            \item If $A<\infty$: $w\neq0$ and $(\theta,\varphi)\in\mathcal A(\Psi,A)$, where $\theta=\arg(z^n)$ and $\varphi=\arg(w)-\arg(z)$. (If $A=\infty$, no further condition is imposed.)
            \end{enumerate}
    We write $\mathcal S_\Psi(R,A):=\tilde{\mathcal S}_\Psi(R,A,0)$ for the corresponding set of ``zero-energy'' data.
\end{definition}

Note that the energy in condition (3) is always computed with respect to the unperturbed function $\mathcal f$, even when we consider solutions to \cref{z-w-tilde-f-ode}; in this way, the sets $\tilde{\mathcal S}_\Psi(R,A,\delta)$ depend neither on $\tilde{\mathcal f}$ nor on $t$. The number $\delta$ is the \textit{energy tolerance}.

Let us emphasize the convention in (4). When $A\ll1$, the angles $(\theta,\varphi)$ lie in the \textit{open} square $(-\frac\pi2,\frac\pi2)^2$; in particular $\Re(z^n)>0$, $w\neq0$ and $\Re\p{\bar zw}>0$, so that (2) is redundant. When $A=0$, the set $\mathcal S_\Psi(R,0)$ consists of the $n$ radial data $(z,w)=(R\mu,\mathcal f(R)\ef12\mu)$, where $\mu^n=1$. When $A=\infty$, the sets $\tilde{\mathcal S}_\Psi(R,\infty,\delta)$ do not depend on $\Psi$, and they are \textit{not} contained in the sets defined through the open square: $w$ may vanish, $\varphi$ may equal $\pm\frac\pi2$, and $\theta$ may lie slightly outside of $(-\frac\pi2,\frac\pi2)$. The reason for this convention is that (1)--(3) are exactly the conditions that are satisfied when a trajectory which starts inside the disc $D_R$ reaches the circle $|z|=R$ for the first time (see the proof of \cref{main-lemma} in \cref{ss-main-lemma-proof}); we cannot assume more than this about the data of \cref{main-lemma}. In all cases, $\tilde{\mathcal S}_\Psi(R,A,\delta)\subset\tilde{\mathcal S}_\Psi(R,A_1,\delta_1)$ whenever $A\leq A_1$ and $\delta\leq\delta_1$.

We will use the following compactness properties. The set $\tilde{\mathcal S}_\Psi(R,\infty,\delta)$ is compact for every $\delta\geq0$, since (1)--(3) are closed conditions which bound $|z|$ and $|w|$. If $A\ll1$ (so that $|\theta|\leq1$ on $\mathcal A(\Psi,A)$, and $\mathcal A(\Psi,A)$ is a compact subset of the open square) and $|z|=R$, then
    \[\label{e-Re-f-lower-bound}
        \Re\p{\mathcal f(z)}=1-e^{-R^n\cos\theta}\cos\p{R^n\sin\theta}\geq1-e^{-R^n/2}=:m(R)>0;
        \]
hence, if $\delta\leq\frac12m(R)$, then $|w|^2\geq\frac12m(R)$ on $\tilde{\mathcal S}_\Psi(R,A,\delta)$, so that the condition $w\neq0$ in (4) may be replaced by the closed condition $|w|^2\geq\frac12m(R)$, and $\tilde{\mathcal S}_\Psi(R,A,\delta)$ is compact as well. Finally, for $A<\infty$ we have
    \[\label{e-S-intersection}
        \bigcap_{\delta>0}\tilde{\mathcal S}_\Psi(R,A+\delta,\delta)=\mathcal S_\Psi(R,A),\qquad\text{and}\qquad\bigcap_{\delta>0}\tilde{\mathcal S}_\Psi(R,\infty,\delta)=\mathcal S_\Psi(R,\infty).
        \]

\begin{definition}\label{defn-S-statement}
    Let $\mathcal A(\Psi,A)$ and $\mathcal A(\Psi',A')$ be two apertures, let $R<R'$ denote two radii, and let $\delta,\delta'\geq0$. Then we write
        \[\tilde{\mathcal S}_\Psi(R,A,\delta)\prec\tilde{\mathcal S}_{\Psi'}(R',A',\delta')\]
    if the following statement is true for some $T$. Let $\tilde{\mathcal f}\in F_\eps$, let $t_0\in\R$, and let $(\tilde{\bm z}(t),\tilde{\bm w}(t))$ denote any solution to \cref{z-w-tilde-f-ode} whose data at the time $t_0$ satisfies $(\tilde{\bm z}(t_0),\tilde{\bm w}(t_0))\in\tilde{\mathcal S}_\Psi(R,A,\delta)$. Then there exists some time $t_1\in(t_0,t_0+T)$ such that $(\tilde{\bm z}(t_1),\tilde{\bm w}(t_1))\in\tilde{\mathcal S}_{\Psi'}(R',A',\delta')$. Furthermore, $t_1$ is the smallest time greater than $t_0$ for which $|\tilde{\bm z}(t_1)|=R'$.

    This statement depends on $\eps$. For the sets of zero-energy data, we use the following convention: the statement $\mathcal S_\Psi(R,A)\prec\mathcal S_{\Psi'}(R',A')$ always refers to the case $\eps=0$, i.e. to the solutions of \cref{z-w-ode}. (In this case, condition (3) of \cref{d-S-region} holds automatically at the time $t_1$, by \cref{e-conservation-of-energy-focusing-section}, and the initial time plays no role.)

    The above applies when $R<R'$; when $R=R'$, we define $\prec$ to mean inclusion, i.e. $\tilde{\mathcal S}_\Psi(R,A,\delta)\prec\tilde{\mathcal S}_{\Psi'}(R,A',\delta')$ means $\tilde{\mathcal S}_\Psi(R,A,\delta)\subset\tilde{\mathcal S}_{\Psi'}(R,A',\delta')$ (this corresponds to $t_1=t_0$).
\end{definition}

Note that $\prec$ is transitive (for a fixed value of $\eps$). Also, all statements involving $\tilde{\mathcal S}$ with a positive energy tolerance will depend on $\eps$; if such a statement holds for some value of $\eps$, then it also holds for all smaller values, because $F_\eps$ increases with $\eps$. We will not fix the value of $\eps$ until later.

The following theorem formalizes the type of angular confinement we will need:
    \begin{theorem}[The focusing theorem]\label{t-focusing}
        For any aperture functions $\Psi,\Psi'$, if
            \[\label{e-focusing-requirement}
                \eps\lll\kappa\lll R\lll A'\ll1,\quad\quad {R'}\inv\lll A',\quad\text{and}\quad A=\infty\]
        then
            \[\tilde{\mathcal S}_\Psi(R,A,\kappa)\prec\tilde{\mathcal S}_{\Psi'}(R',A',1)\]
        Moreover, if $\tilde{\mathcal f}\in F_\eps$, if $(\tilde{\bm z},\tilde{\bm w})$ and $t_1$ are as in \cref{defn-S-statement}, and if $[t_0,T_{\max})$ is the maximal forward interval of existence of the solution, then $|\tilde{\bm z}|$ is strictly increasing on $[t_1,T_{\max})$ and $(\tilde{\bm\theta}(t),\tilde{\bm\varphi}(t))\in\mathcal A(\Psi',A')$ for all $t\in[t_1,T_{\max})$.
    \end{theorem}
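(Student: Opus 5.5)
The plan is to prove the statement first for zero-energy solutions of the unperturbed system \cref{z-w-ode}, as a chain of $\prec$-statements covering successive ranges of $|\bm z|$. I would then pass to $\tilde{\mathcal f}\in F_\eps$ and to energy tolerance $\kappa$ in one step, by continuity, and finally handle the unbounded range $|\bm z|\geq R'$ directly. Transitivity of $\prec$ glues the pieces together, and the real content is the order in which the constants are chosen. I would fix them from right to left: first $A'$; then an intermediate aperture $A_2\lll A'$ at a large radius $R_2$; then a tiny aperture $A_1$ at a small radius $r_0$; then $R$; then $\kappa$; then $\eps$.

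\emph{Small scales.} For $|z|\leq r_0\ll1$ we have $\mathcal f(z)=z^n+O(|z|^{2n})$, so $\bm G/\Re\p{\bm f}$ differs from its model values $1$ and $\tan\bm\theta$ by $O(|\bm z|^n)$. I would compare \cref{e-varphi-prime} with the model system \cref{e-model}. By the rescaling in the proof of \cref{c-model}(4), data on $|z|=R$ at zero energy rescale to data on $|z|=1$ for the function $R^{-n}\mathcal f(Rz)=z^n+O(R^n)$. By \cref{c-model}(1) and the fact that $\phi_b(\overline\Qsq)\subset K$, the model flow carries the closed set of admitting data (\emph{including} the degenerate ones with $w=0$, $\varphi=\pm\frac\pi2$, or $\theta$ at the boundary) into a compact subset of $\Qsq$ before the radius has grown by a fixed factor $C_0$. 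Note that $A=\infty$ only imposes conditions (1)--(3) of \cref{d-S-region}. Applying \cref{l-continuity}(c) on this compact data set for $R\lll1$ then gives $\mathcal S_\Psi(R,\infty)\prec\mathcal S_\Phi(C_0R,A_0)$ for some fixed $A_0<1$. From there I would use $\Phi$ as a Lyapunov function: the perturbation adds $O(|\bm z|^n)$ to $\frac d{dr}\Phi$, which is summable in $r=n\log|\bm z|$. So $\Phi$ still decays like $(|\bm z|/R)^{-\beta n}$ up to a factor $1+O(r_0^n)$, which gives $\mathcal S_\Phi(C_0R,A_0)\prec\mathcal S_\Phi(r_0,A_1)$ with $A_1\lesssim(R/r_0)^{\beta n/2}$. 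This can be made as small as desired by taking $R\lll r_0$.

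\emph{Intermediate and large scales.} On $r_0\leq|z|\leq R_2$ I would use compactness only. The $n$ radial zero-energy solutions exist and reach $|z|=R_2$ transversally, since $|\bm z|$ is increasing along them. By \cref{l-continuity}(c), data in $\mathcal S_\Phi(r_0,A_1)$ with $A_1\lll r_0,R_2,A_2$ reach $|z|=R_2$ inside $\mathcal A(\Psi',A_2)$. For $|z|\geq R_2$ inside a small aperture, $|\mathcal f'(z)|\lesssim|z|^{n-1}e^{-|z|^n/2}$, and $\Re\p{\bm f}\to1$, so both $\bm G/\Re\p{\bm f}$ and the energy drift \cref{e-energy-drift-f} are superexponentially small in $|\bm z|$. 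This drift bound holds for all $\tilde{\mathcal f}\in F_\eps$, because $\tilde{\mathcal f}'-\mathcal f'$ is bounded by the same exponential factor times $O(\eps|z|^{n-1})$. Equation \cref{e-varphi-prime} then reads $\bm\varphi'=-\frac1n\tan\bm\varphi+O(e^{-e^{\bm r}/3})$. Hence $|\bm\varphi|$ decays essentially like $e^{-\bm r/n}$, and $|\bm\theta|$ changes by at most $\int|\tan\bm\varphi|\,dr\lesssim|\bm\varphi(R_2)|$ plus a superexponentially small error. A bootstrap (a continuity argument on the first exit time from $\mathcal A(\Psi',A')$) then shows that, for $A_2\lll A'$ and $R_2\ggg 1$, the solution stays in $\mathcal A(\Psi',A')$ for all later times. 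It also shows $\cos\bm\varphi>0$ throughout, so $|\bm z|$ is strictly increasing, and that the energy stays within $1$ of zero. This gives the crossing of $|z|=R'$ for any $R'\geq R_2$, i.e.\ the condition ${R'}\inv\lll A'$, together with the ``moreover'' clause on $[t_1,T_{\max})$. The time to reach $R'$ is bounded because $|\bm w|^2\geq\frac12m(R_2)$.

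\emph{Perturbation, and the main obstacle.} The step from zero-energy solutions of \cref{z-w-ode} to solutions of \cref{z-w-tilde-f-ode} with data in $\tilde{\mathcal S}_\Psi(R,\infty,\kappa)$ would go through \cref{l-continuity}(c) and \cref{e-crossing-open} on the compact set $\tilde{\mathcal S}_\Psi(R,\infty,\kappa)$ for the finite radius $R_2$. Using \cref{e-S-intersection} and the fact that $F_\eps$ is compact modulo time translation on the relevant disc, for $\kappa,\eps\lll R$ every such solution reaches $|z|=R_2$ before a uniform time, transversally, inside $\tilde{\mathcal S}_{\Psi'}(R_2,2A_2,\frac12)$. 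From there the large-scale bootstrap, which already tolerates $\tilde{\mathcal f}$, finishes the argument. I expect the main obstacle to be the first, small-scale step with $A=\infty$. The data may have $w=0$ or lie on the boundary of $\overline\Qsq$, where \cref{l-cov} does not apply, so the needed transversality and uniform entry into $K$ must come from the $\ddot{\bm\psi}\geq0$ argument of \cref{c-model}(1) and the extension of the flow to $\overline\Qsq$. I would also have to check that the $O(R^n)$ perturbation and the energy tolerance $\kappa$ do not destroy this at the scale $|z|\sim R$, where the energy itself is only of size $R^n$; this is why I would require $\kappa\lll R$.
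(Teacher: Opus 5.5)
Your proposal is correct, but it takes a different route from the paper between $R$ and a fixed small radius. From $r_0$ on it is essentially the paper's argument. Your intermediate compactness step is \cref{l-medium-scale-focusing}, and the single continuity step to $\tilde{\mathcal f}\in F_\eps$ with tolerance $\kappa$ is \cref{l-stable-focusing}. Your large-scale bootstrap (decay of $|\bm\varphi|$, with $\int|\tan\bm\varphi|\,dr$ bounding the drift of $\bm\theta$) is the monotonicity of $|\bm\theta|+n|\bm\varphi|=\Psi_{\text{large}}^{1/2}$ in \cref{l-large-scale-focusing}.

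For the small-radius range the paper uses two lemmas:
\begin{itemize}
\item \cref{l-small-scale-focusing} is purely soft (rescaling, the identity \cref{e-Phi-decay}, continuity of first crossings). It is iterated over many scale steps of ratio $s$, with the aperture reset to $\infty$ after each step.
\item \cref{l-medium-small-scale-focusing} only shows that the explicit quadratic form $\Psi_{\text{small}}$ is non-increasing once $r\ll0$ and $\Psi_{\text{small}}\ll1$.
\end{itemize}
The paper needs this split because the soft lemma requires $R'\lll A'$, while \cref{l-medium-scale-focusing} requires the aperture to be chosen after the radii.

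You instead make one soft step into a fixed compact sublevel set of $\Phi$, then run the exact model Lyapunov function $\Phi$ perturbatively up to $r_0$. This gives the explicit gain $A_1\lesssim(R/r_0)^{\beta n/2}$. It removes the iteration, the medium-small lemma, and the switch of aperture function at $R_3=R_4$. The price is that you use more about $\Phi$ than its decay identity and $\Phi\sim\theta^2+\varphi^2$: you need its smoothness on $\Qsq$ and $|\nabla\Phi|\lesssim\Phi^{1/2}$ on compact subsets.

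Two points have to be stated correctly for your route to close.

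First, you write that the perturbation adds $O(|\bm z|^n)$ to $\frac d{dr}\Phi$. What your conclusion actually uses is $O(|\bm z|^n\Phi)$. With only an additive error you would get $\Phi\lesssim(R/r_0)^{\beta n}+r_0^n$ at radius $r_0$. An aperture of size $r_0^{n/2}$ is not as small relative to $r_0$ as your intermediate step requires. The multiplicative form does hold. On compact subsets of $\Qsq$, the deviation of \cref{e-varphi-prime} from \cref{e-model} is $O(e^{\bm r}(|\tan\bm\theta|+|\tan\bm\varphi|))$, since radial solutions stay radial (compare \cref{e-Psi-derivative-computing}). Combined with $\nabla\Phi(0,0)=0$, this gives the bound you need.

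Second, the constants in these expansions blow up as $\cos\bm\theta\to0$. So the sublevel set $\mathcal A(\Phi,A_0)$ itself must be compact in $\Qsq$; it is not enough that the crossing angles at $|z|=C_0R$ form a compact set. You can arrange this by taking $A_0^2<\min_{\partial\Qsq}\Phi$ and then $C_0$ so large that $C_0^{-\beta n}\sup_{\overline\Qsq}\Phi<A_0^2$, as in \cref{Phi-bound-R-Ri0}.

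A minor correction: the perturbation step does not need $F_\eps$ to be ``compact modulo time translation''. It only needs that $F_\eps$, together with all its time translates, lies in any prescribed $\Hsp$-neighbourhood of $\mathcal f$ once $\eps$ is small.
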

(Since $A=\infty$, the set $\tilde{\mathcal S}_\Psi(R,A,\kappa)$ does not actually depend on $\Psi$. The value $1$ of the energy tolerance at the radius $R'$ is of no importance; what we will use is the last assertion of the theorem.) Note the order in which the constants are chosen (recall the conventions about chains of $\lll$ in \cref{ss-notation}): first $A'$, then $R$ and a lower bound for $R'$, then $\kappa$, and finally $\eps$; in particular, $\eps$, $\kappa$ and $R$ do not depend on $R'$, so the same constants work for every sufficiently large $R'$. In words, \cref{t-focusing} says more or less the following: ``Given any aperture $\mathcal A$, any trajectory of \cref{z-w-tilde-f-ode} will lie inside $\mathcal A$ when $|z|$ is sufficiently large, as long as $|\mathcal f-\tilde{\mathcal f}|$ is small and the initial data is small.''

\subsection{The proof of \texorpdfstring{\cref{main-lemma}}{Lemma \ref{main-lemma}} using \texorpdfstring{\cref{t-focusing}}{Theorem \ref{t-focusing}}}\label{ss-main-lemma-proof}
Let $\Psi'(\theta,\varphi)=\theta^2+\varphi^2$ (the choice of $\Psi$ is irrelevant since $A=\infty$) and fix $A'\in(0,\frac1{10}]$ small enough for \cref{t-focusing} to apply. By \cref{t-focusing}, there are constants $\eps_*,\kappa>0$, $R_*\in(0,1]$ and $R'_*\geq3n$ such that the conclusion of \cref{t-focusing} holds with $R=R_*$, for every $R'\geq R'_*$ and every $\eps\leq\eps_*$. Let $\tilde{\mathcal f}\in F_\eps$ for some $\eps\leq1$. Since $\deg\mathcal p<n$ and the coefficients of $\mathcal p$ have modulus $\leq1$, for $|z|\geq R'_*\geq3n$ we have $|\mathcal p(t;z)|\leq n|z|^{n-1}\leq\frac13|z|^n$, and hence, uniformly in $t$,
    \[\label{e-main-lemma-decay}
        \Re\p{z^n-\mathcal p(t;z)}\geq\p{\cos\tfrac1{10}-\tfrac13}|z|^n\geq\tfrac12|z|^n\qquad\text{whenever }|z|\geq R'_*\text{ and }|\arg(z^n)|\leq\tfrac1{10}.
        \]
Next, for $|z|\leq1$ we have $|{-z^n}+\mathcal p(t;z)|\leq|z|^n+n\eps\leq n+1$ and $|e^{-z^n}|\leq e$. Using $|e^u-1|\leq|u|e^{|u|}$ and \cref{e-energy-drift}, we obtain for $|z|\leq1$, uniformly in $t$,
    \[\label{e-main-lemma-small-z}
        |\tilde{\mathcal f}(t;z)|\leq e^{n+1}(|z|^n+n\eps),\qquad|\tilde{\mathcal f}(t;z)-\mathcal f(z)|\leq e^{n+1}n\eps,\qquad|\tilde{\mathcal f}_t(t;z)|\leq e^{n+1}\sum_{j<n}|\dot a_j(t)|.
        \]
(The last bound follows from the formula for $\tilde{\mathcal f}_t$ in \cref{e-energy-drift}.) We now choose the constants $R$ and $\eps$ of \cref{main-lemma} as follows: $R<R_*$ and $\eps\leq\min(\eps_*,1)$ are so small that
    \[\label{e-main-lemma-energy}
        \tfrac12R^n+e^{n+1}\p{R^n+(2n+1)\eps}\leq\kappa.
        \]

Let $\tilde{\mathcal f}\in F_\eps$ with $V(\mathcal p)\leq\eps$, let $t_*\in\R$, and let $(\bm z,\bm w)$ denote any solution to \cref{z-w-tilde-f-ode} with $|\bm z(t_*)|\leq R$ and $|\dot{\bm z}(t_*)|^2=2|\bm w(t_*)|^2\leq R^n$. Since $(\bm z(-t),-\bm w(-t))$ is a solution for the function $\tilde{\mathcal f}(-t;z)$, which also belongs to $F_\eps$ and has the same total variation, and since it has the same properties at the time $-t_*$, it suffices to show that the solution exists for all $t\geq t_*$. Let $[t_*,T_{\max})$ be its maximal forward interval of existence, and let $\bm\kappa(t):=|\bm w(t)|^2-\Re\p{\tilde{\mathcal f}(t;\bm z(t))}$ denote its energy. By \cref{e-main-lemma-small-z}, we have $|\bm\kappa(t_*)|\leq\frac12R^n+e^{n+1}(R^n+n\eps)$. By \cref{e-energy-drift,e-main-lemma-small-z} and the hypothesis $V(\mathcal p)\leq\eps$, if $|\bm z|\leq R_*\leq1$ on $[t_*,\tau]$, then
    \[\label{e-main-lemma-energy-drift}
        |\bm\kappa(t)|\leq|\bm\kappa(t_*)|+e^{n+1}V(\mathcal p)\leq\tfrac12R^n+e^{n+1}\p{R^n+(n+1)\eps}\qquad\text{for all }t\in[t_*,\tau].
        \]
In particular, $|\bm w|^2=\bm\kappa+\Re\p{\tilde{\mathcal f}(t;\bm z)}$ is bounded as long as $|\bm z|\leq R_*$. So, if $|\bm z(t)|<R_*$ for all $t\in[t_*,T_{\max})$, then $T_{\max}=\infty$ and we are done. Otherwise, let $t_0>t_*$ be the first time with $|\bm z(t_0)|=R_*$. Since $|\bm z|<R_*$ on $[t_*,t_0)$, we have $\frac d{dt}|\bm z|^2(t_0)=2\sqrt2\Re\p{\bar{\bm z}\bm w}(t_0)\geq0$. Moreover, by \cref{e-main-lemma-energy-drift,e-main-lemma-small-z,e-main-lemma-energy},
    \[\abs{|\bm w(t_0)|^2-\Re\p{\mathcal f(\bm z(t_0))}}\leq|\bm\kappa(t_0)|+\abs{\tilde{\mathcal f}(t_0;\bm z(t_0))-\mathcal f(\bm z(t_0))}\leq\tfrac12R^n+e^{n+1}\p{R^n+(2n+1)\eps}\leq\kappa.\]
This says precisely that $(\bm z(t_0),\bm w(t_0))\in\tilde{\mathcal S}_\Psi(R_*,\infty,\kappa)$; recall that for $A=\infty$, \cref{d-S-region} imposes no condition on the angles. By \cref{t-focusing} with $R'=R'_*$, there is a first time $t_1>t_0$ with $|\bm z(t_1)|=R'_*$, and for all $t\in[t_1,T_{\max})$ we have $|\bm z(t)|\geq R'_*$ and $|\bm\theta(t)|\leq A'\leq\frac1{10}$. By \cref{e-main-lemma-decay},
    \[
        |\tilde{\mathcal f}'(t;\bm z(t))|=\abs{n\bm z^{n-1}-\mathcal p'(t;\bm z)}e^{-\Re\p{\bm z^n-\mathcal p(t;\bm z)}}\lesssim|\bm z|^{n-1}e^{-|\bm z|^n/2}\lesssim1\qquad\text{for }t\in[t_1,T_{\max}),
        \]
while on $[t_*,t_1]$ the trajectory stays in the compact set $\{|z|\leq R'_*\}$, on which $\tilde{\mathcal f}'$ is bounded uniformly in $t$ (because the coefficients of $\mathcal p$ are bounded by $1$). Therefore $|\ddot{\bm z}|=|\tilde{\mathcal f}'(t;\bm z)|$ is uniformly bounded on $[t_*,T_{\max})$; this implies that $|\bm z(t)|$ grows at most quadratically in $t$, and hence cannot blow up in finite time. So $T_{\max}=\infty$, which proves \cref{main-lemma}.

\subsection{Proof of the focusing theorem using the focusing lemmas.}\label{ss-focusing-theorem-logic}

In order to prove \cref{t-focusing}, we will need to be able to understand the $\mathcal f$ and $\tilde{\mathcal f}$ at many different scales. Sometimes, the choice of one scale will depend on others, in a complicated way (e.g. \cref{chain0}). We have broken up the problem into 5 lemmas, which will help us understand $\mathcal f$ or $\tilde{\mathcal f}$ in a different way at different scales.

More specifically, we will let $R_1=R$, $A_1=A$, $A_6=A'$, and $R_6=R'$, and let $R_2,\cdots,R_5$ and $A_2,\cdots,A_5$ be defined later. Suppose that we would like to prove something akin to the following:
    \[\mathcal S_\Psi(R_1,A_1)\prec\dots\prec\mathcal S_\Psi(R_6,A_6)\]
This would be sufficient to show that $\mathcal S_\Psi(R_1,A_1)\prec\mathcal S_\Psi(R_6,A_6)$. However, in order to prove anything involving $\tilde{\mathcal S}$, we will need the following lemma (or at least something similar):

\begin{lemma}[Stable focusing]\label{l-stable-focusing}
    Let $\Psi,\Psi'$ be aperture functions, let $R<R'$ and let $\kappa'>0$. Suppose that $0\leq A'<\tilde A'\ll1$, and that either $0\leq A\ll1$ or $A=\infty$. If
        \[\label{e-stable-focusing}
            \eps,\kappa,\alpha\lll A,A',\tilde A',R,R',\kappa'
            \]
    (where the values $\eps=0$, $\kappa=0$ and $\alpha=0$ are allowed), then
        \[\mathcal S_\Psi(R,A)\prec\mathcal S_{\Psi'}(R',A')\]
    implies
        \[\tilde{\mathcal S}_\Psi(R,A+\alpha,\kappa)\prec\tilde{\mathcal S}_{\Psi'}(R',\tilde A',\kappa')\]
\end{lemma}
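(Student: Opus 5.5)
The plan is a compactness-and-contradiction argument built on \cref{l-continuity}. Throughout, $\Psi,\Psi',R<R',\kappa',A,A',\tilde A'$ are fixed. For these fixed values I only have to produce \emph{some} positive thresholds for $\eps,\kappa,\alpha$, which is exactly what $\lll$ asks for. Let $T_0$ be the time furnished by the hypothesis $\mathcal S_\Psi(R,A)\prec\mathcal S_{\Psi'}(R',A')$, and put $T:=T_0+1$.

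Suppose the conclusion fails with this $T$ for every choice of thresholds. Then there are sequences $\eps_k,\kappa_k,\alpha_k\to0$, functions $\tilde{\mathcal f}_k\in F_{\eps_k}$, times $t_k$, and data $(z_k,w_k)\in\tilde{\mathcal S}_\Psi(R,A+\alpha_k,\kappa_k)$ at time $t_k$ for which the statement of \cref{defn-S-statement} fails. Using the invariance of $F_{\eps}$ under time translation, I replace $\tilde{\mathcal f}_k$ by $\mathcal g_k:=\tilde{\mathcal f}_k(t_k+\cdot\,;\cdot)$ and take initial time $0$. Since the coefficients of $\mathcal p_k$ are bounded by $\eps_k\to0$, we get $\mathcal g_k\to\mathcal f=1-e^{-z^n}$ uniformly on compact subsets of $\R\times\C$, i.e.\ in $\Hsp$; no control of $t$-derivatives is needed.

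The data lie in a compact set. If $A=\infty$, this is the compactness of $\tilde{\mathcal S}_\Psi(R,\infty,\kappa_k)$. If $A\ll1$, it follows because $A+\alpha_k\ll1$ and $\kappa_k\le\frac12m(R)$, by \cref{e-Re-f-lower-bound}. Passing to a subsequence, $(z_k,w_k)\to(z,w)$, and by \cref{e-S-intersection} we have $(z,w)\in\mathcal S_\Psi(R,A)$. By hypothesis, the solution $\gamma=\mathcal F(\mathcal f,z,w)$ first meets $M=\{|z|=R'\}$ at a time $t_1<T_0$, with $\gamma(t_1)\in\mathcal S_{\Psi'}(R',A')$.

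The key point is transversality at $t_1$, so that \cref{l-continuity}(c) applies. Since $A'<\tilde A'\ll1$, the angles $(\theta,\varphi)$ at $t_1$ lie in the open square; in particular $\Re(z^n)>0$. Zero energy then gives $|w|^2=\Re\mathcal f(z)>0$, and so $\Re(\bar zw)=|z||w|\cos\varphi>0$. This also covers $A'=0$, where the data are radial. Moreover $\gamma(0)\notin M$ because $R<R'$. Hence $\gamma\in\mathcal P(M)$.

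By \cref{l-continuity}(a),(c), for large $k$ the perturbed solution $\gamma_k=\mathcal F(\mathcal g_k,z_k,w_k)$ lies in $\mathcal P(M)$, with $t_M(\gamma_k)\to t_1$ and $p_M(\gamma_k)\to\gamma(t_1)$. Thus $t_M(\gamma_k)<T$ for large $k$, and $t_M(\gamma_k)$ is the first time $>0$ with $|\bm z|=R'$.

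It remains to check that $p_M(\gamma_k)\in\tilde{\mathcal S}_{\Psi'}(R',\tilde A',\kappa')$ for large $k$. The limit point satisfies the following.
\begin{itemize}
\item[] $|z|=R'$, which holds exactly along $M$.
\item[] $\Re(\bar zw)>0$, an open condition.
\item[] Its energy with respect to $\mathcal f$ is $0<\kappa'$, and this energy is continuous.
\item[] $w\ne0$ and $\Psi'(\theta,\varphi)\le A'^2<\tilde A'^2$.
\end{itemize}
Near a point with $w\neq0$ the angles $\theta=\arg z^n$ and $\varphi$ depend continuously on $(z,w)$, and $\Psi'$ is continuous, so the last condition persists as $\Psi'<\tilde A'^2$. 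All four conditions therefore hold at $p_M(\gamma_k)$ for large $k$, contradicting the choice of the sequence.

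When $\eps=\kappa=\alpha=0$ are allowed, the statement reduces to the hypothesis together with the inclusion $\mathcal S_{\Psi'}(R',A')\subset\tilde{\mathcal S}_{\Psi'}(R',\tilde A',\kappa')$, so it holds trivially.

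The main obstacle I expect is the transversality and openness bookkeeping at $|z|=R'$. This is exactly why the strict gap $A'<\tilde A'$ and the smallness $\tilde A'\ll1$ are assumed: they keep the limiting angles in the open square, which makes the crossing transversal and the target condition open. A secondary point is uniformity in the initial time $t_0$, which the time translation handles because the $F_\eps$ bounds are uniform in $t$.
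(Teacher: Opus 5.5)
Your proof is correct and is essentially the paper's argument. Both use the same ingredients: compactness of $\mathcal S_\Psi(R,A)$ as the decreasing limit of the compact sets $\tilde{\mathcal S}_\Psi(R,A+\delta,\delta)$, transversality at $|z|=R'$ from the open-square condition, continuity of the flow and of the first crossing (\cref{l-continuity}), and time-translation invariance of $F_\eps$. The only difference is packaging: the paper notes that the open set $\mathcal W(M,V,T)$ contains $\{\mathcal f\}\times\mathcal S_\Psi(R,A)$, and hence a product $\mathcal N\times\tilde{\mathcal S}_\Psi(R,A+\delta,\delta)$, whereas you run the same argument sequentially by contradiction.
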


Throughout this section, we often use the letters $R$, $A$, $R'$, and $A'$ to reference the radii and apertures for the lemma at hand. Thus, the meaning of $R,R',A,A'$ in \cref{l-stable-focusing} are not the same as they are in \cref{t-focusing}.

\Cref{l-stable-focusing} shows us how we can reduce to the case of zero-energy solutions to \cref{z-w-ode} without much loss (we may take $\tilde A'=2A'$). The parameter $\alpha$, which allows us to enlarge the aperture of the initial data slightly, will only be used once, in the proof of \cref{l-medium-scale-focusing}; everywhere else, $\alpha=0$.

We will also need the following lemma for small scales:

\begin{lemma}[Small-scale focusing]\label{l-small-scale-focusing}
    For any $\Psi$, if
        \[\label{e-small-scale-requirement}
            R\lll R'\lll A'<\infty=A
            \]
    then
        \[\mathcal S_\Psi(R,A)\prec\mathcal S_\Psi(R',A')\]
\end{lemma}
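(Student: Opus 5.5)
We argue by rescaling to the model case $\mathcal f(z)=z^n$ near the origin. The trajectory is split into two stages: a compact stage, from $|z|=R$ to $|z|=LR$ for a large fixed ratio $L$, where we use \cref{c-model} together with the continuity of the flow (\cref{l-continuity}); and a perturbative stage, from $|z|=LR$ to $|z|=R'$, where we run the Lyapunov function $\Phi$ of \cref{c-model}(3) through the exact equations of \cref{l-cov}. The constants are chosen in the order $A'$, then $\eta$ and $L$, then $R'$, then $R$, matching $R\lll R'\lll A'$.

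\emph{Choice of $\eta$ and $L$.} Since $\Psi\sim\theta^2+\varphi^2\sim\Phi$ (\cref{c-model}(3c)), we fix $\eta\ll A'^2$ such that $\Phi\le 3\eta$ implies $(\theta,\varphi)\in\mathcal A(\Psi,A')$. We also take $\eta$ so small that $\{\Phi\le 3\eta\}$ lies in a small neighbourhood of $(0,0)$, say $|\theta|,|\varphi|\le\frac1{10}$. By \cref{e-Phi-decay} and the boundedness of $\Phi$, we then choose $L$ with $L^{-\beta n}\sup\Phi\le\eta$.

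\emph{Stage 1 (compact stage).} Set $\bm\zeta(\tau):=R^{-1}\bm z(R^{-\frac{n-2}2}\tau)$, as in the proof of \cref{c-model}(4). This is a solution for $g_R(\zeta):=R^{-n}(1-e^{-R^n\zeta^n})$, and $g_R\to\zeta^n$ in $\Haut$ as $R\to0$. The data set $\mathcal S_\Psi(R,\infty)$ becomes the zero-energy set for $g_R$ on $|\zeta|=1$, which converges to the compact set $K_0:=\{|\zeta|=1,\ \Re(\bar\zeta w)\ge0,\ |w|^2=\Re(\zeta^n)\}$.

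For data in $K_0$ and the function $\zeta^n$, \cref{c-model}(1) gives $\frac d{d\tau}|\bm\zeta|>0$ after the initial time, including for boundary data with $w=0$ or $\varphi=\pm\frac\pi2$. So the first crossing of $\{|\zeta|=L\}$ exists and is transversal. At that crossing $\Phi\le\eta$, by the choice of $L$. Moreover $|\bm\zeta|$ stays at least $1$ before the crossing.

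By \cref{l-continuity}(c), the set of $(\mathcal g,\zeta,w)$ whose solution crosses $\{|\zeta|=L\}$ first, transversally, before time $T$, and with $\Phi<2\eta$ at the crossing, is open. Here we use that the crossing point depends continuously on the data and that $\Phi$ is continuous. This open set contains $\{\zeta^n\}\times K_0$. By compactness of $K_0$ (tube lemma), it contains $\mathcal N\times U$ for a neighbourhood $\mathcal N$ of $\zeta^n$ and a neighbourhood $U$ of $K_0$. For $R$ small enough we have $g_R\in\mathcal N$ and the rescaled data set lies in $U$. Note also $\Re(\bar\zeta w)\ge0$ before the crossing is not needed; only the first crossing is used.

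The main obstacle is a subtle point here: the first time $\bm\zeta$ reaches $|\zeta|=L$ must be preceded by no return to a smaller radius in a way that spoils the later monotonicity. This is handled by Stage 2, which starts exactly at that first crossing.

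\emph{Stage 2 (perturbative stage).} From $|z|=LR$ onwards we use \cref{l-cov} with $\mathcal f=1-e^{-z^n}$ and $\bm c=0$ (zero energy). Then $\bm G=\bm z^ne^{-\bm z^n}$ and $\Re(\bm f)=1-\Re(e^{-\bm z^n})$. For $|z|\le R'$ and $|\theta|\le\frac1{10}$ we have
\[\frac{\Re(\bm G)}{\Re(\bm f)}=1+O(|\bm z|^n),\qquad\frac{\Im(\bm G)}{\Re(\bm f)}=\tan\bm\theta+O(|\bm z|^n).\]
Hence $(\bm\theta,\bm\varphi)$ solves \cref{e-model} up to an error $O(e^{\bm r})$ in the $r$-variable, where $e^{\bm r}=|\bm z|^n$. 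Since $\Phi$ is smooth with bounded gradient near $(0,0)$, this gives
\[\Phi'\le-\beta\Phi+Ce^{\bm r},\]
and integration in $r$ up to $e^{\bm r}=R'^n$ yields $\Phi\le 2\eta+C'R'^n\le3\eta$ once $R'\lll\eta$.

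A continuity (bootstrap) argument then shows three things on the whole stage, as long as $|\bm z|\le R'$: the bound $\Phi\le3\eta$ holds, the angles stay in the small region where the estimates above are valid, and $\varphi\in(-\frac\pi2,\frac\pi2)$, so that $|\bm z|$ is strictly increasing and \cref{e-hyp} persists. In particular the solution reaches $|z|=R'$, this is its first time at that radius, and its angles lie in $\mathcal A(\Psi,A')$. The time taken is finite and bounded uniformly, since by \cref{e-t-prime} $\bm t'$ is bounded on this region. Finally, condition (3) of \cref{d-S-region} holds by conservation of energy.
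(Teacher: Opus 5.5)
Your proof is correct, but it is organized differently from the paper's.

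\emph{The paper's route.} The paper takes $\Psi=\Phi$ and proves only a fixed-ratio step, $\mathcal S_\Phi(sR',\infty)\prec\mathcal S_\Phi(R',A')$ for $R'\lll s\lll A'$. It rescales by the \emph{target} radius, so that $\mathcal f_\rho\to z^n$. Then \cref{c-model}(1), \cref{e-Phi-decay} and the continuity of the first crossing do all the work. The unbounded ratio $R'/R$ is bridged by composing $k$ such steps with $s=(R/R')^{1/k}$. Since $s^{\beta n}\sup\Phi<{A'}^2$, each step refocuses from arbitrary data with $A=\infty$. So nothing is propagated between steps, and no estimate on the perturbed angular equations is ever needed.

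\emph{Your route.} You do a single soft step of fixed ratio $L$, rescaled by the initial radius. You then cover the whole unbounded range $[LR,R']$ with the differential inequality $\frac{d}{dr}\Phi\le-\beta\Phi+Ce^{\bm r}$ obtained from \cref{l-cov}. The error is harmless because it is geometric in $r$, and its total contribution $O({R'}^n)$ is absorbed by $R'\lll A'$.

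\emph{What each buys.} Your approach removes the iteration trick. In exchange, it needs the expansion of $\Re(\bm G)/\Re(\bm f)$ and $\Im(\bm G)/\Re(\bm f)$ near the origin, and the smoothness of $\Phi$ on a compact subset of $\Qsq$. This is the kind of computation the paper reserves for \cref{l-medium-small-scale-focusing}, in a cruder form. Your error term $O(e^{\bm r})$ does not vanish at $(\theta,\varphi)=(0,0)$. That is acceptable here only because $R'$ is chosen after $A'$. In the medium-small lemma, $R'$ is not small relative to $A$, and the multiplicative form $(-\beta+O(\Psi_{\mathrm{small}}+e^r))\Psi_{\mathrm{small}}$ is really needed. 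The paper's argument is computation-free and uses nothing about $\mathcal f$ beyond $\mathcal f_\rho\to z^n$ in $\Hsp$.

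\emph{Minor points.}
\begin{itemize}
\item The relatively open target set on $\{|\zeta|=L\}$ should explicitly require $w\neq0$, $\Re(\zeta^n)>0$ and $\Re(\bar\zeta w)>0$. Otherwise $(\theta,\varphi)$, and hence $\Phi$, are not defined and continuous there.
\item The ``subtle point'' you flag is not an obstacle. Before the first crossing of $|z|=LR$ one automatically has $|z|<LR\le R'$.
\item Your bound on $\bm t'$ is $\lesssim(LR)^{1-n/2}$, which depends on $R$. This is permitted, since $T$ in \cref{defn-S-statement} may depend on all parameters.
\end{itemize}
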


\cref{l-small-scale-focusing} gives us the most possible freedom over the amount of focusing, and the least amount of freedom in terms of the scales $R,R'$; this is manifested in the fact that we may choose $A$ as large as we want: that is, there are no focusing requirements for the initial data. In spite of this, we may prove that $(\bm\theta,\bm\varphi)$ will become ``arbitrarily well-focused'' for sufficiently small $R$. \Cref{l-small-scale-focusing} is in fact the only one of these 5 lemmas that will allow $A'<A$, so all other lemmas result in no focusing gain.

The following is the lemma in the opposite regime, i.e. for $R,R'$ large:
\begin{lemma}[Large scale focusing]\label{l-large-scale-focusing}
    There exist an aperture function $\PsiL$ and a constant $\kappa_0\in(0,1]$ (depending only on $n$) so that, if
        \[\label{e-large-scale-requirement}
            {R'}\inv\leq R\inv\lll A\lll A'\ll1,\qquad\eps\leq\kappa_0\qquad\text{and}\qquad\kappa\leq\kappa_0,
            \]
    then
        \[\tilde{\mathcal S}_{\PsiL}(R,A,\kappa)\prec\tilde{\mathcal S}_{\PsiL}(R',A',1)\]
    is true. Moreover, if $\tilde{\mathcal f}\in F_\eps$, if $(\tilde{\bm z},\tilde{\bm w})$ is any solution to \cref{z-w-tilde-f-ode} with $(\tilde{\bm z}(t_0),\tilde{\bm w}(t_0))\in\tilde{\mathcal S}_\PsiL(R,A,\kappa)$, and if $[t_0,T_{\max})$ is its maximal forward interval of existence, then $|\tilde{\bm z}|$ is strictly increasing on $[t_0,T_{\max})$ and $(\tilde{\bm\theta}(t),\tilde{\bm\varphi}(t))\in\mathcal A(\PsiL,A')$ for all $t\in[t_0,T_{\max})$.
\end{lemma}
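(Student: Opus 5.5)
The plan is to prove \cref{l-large-scale-focusing} by a bootstrap argument in the coordinates of \cref{l-cov}, applied to $\tilde{\mathcal f}$ itself, which is allowed to depend on $t$. In these coordinates the equations become an almost exactly solvable linear system once $|z|\geq R\ggg1$ and $|\theta|$ is small. The reason is that
\[\tilde{\bm G}=\tfrac1n\tilde{\bm z}\tilde{\mathcal f}'=\p{\tilde{\bm z}^n-\tfrac1n\tilde{\bm z}\mathcal p'}e^{-\tilde{\bm z}^n+\mathcal p}\]
is superexponentially small there, while $\Re\p{\tilde{\bm f}}\approx1$. Dropping the $\bm G$-terms, \cref{e-theta-prime,e-varphi-prime} read $\bm\theta'=\tan\bm\varphi$ and $\bm\varphi'=-\frac1n\tan\bm\varphi$. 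Hence $\bm\theta+n\bm\varphi$ is exactly conserved, and $|\bm\varphi|$ decreases. This suggests the choice
\[\PsiL(\theta,\varphi):=(\theta+n\varphi)^2+\varphi^2,\]
which is continuous and satisfies $\PsiL\sim\theta^2+\varphi^2$ with constants depending only on $n$. So it is an aperture function.

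The key steps are as follows.

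(i) On the set where $|z|\geq R$, $|\theta|\leq CA'$ and $\eps\leq\kappa_0\leq1$, I would bound $|\mathcal p|\leq n|z|^{n-1}\leq\frac13|z|^n$, as in \cref{e-main-lemma-decay}. This gives $|\tilde{\bm G}|+|\Re\p{\tilde{\bm f}}-1|\lesssim|z|^{n}e^{-|z|^n/2}$, uniformly in $t$. I would also get $|\tilde{\mathcal f}'-\mathcal f'|\lesssim|z|^{n}e^{-|z|^n/2}$, since $|e^{\mathcal p}-1|$ is at most polynomially large times $\eps$ here.

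(ii) I would control the energy through \cref{e-energy-drift-f}, which involves no $t$-derivatives. Along a curve with $d|\tilde{\bm z}|/dt\gtrsim1$, the length element is $\lesssim d|z|$. Hence the $\mathcal f$-energy drifts by at most $\int_R^\infty\rho^ne^{-\rho^n/2}d\rho\ll1$. So $\big||\tilde{\bm w}|^2-\Re\p{\mathcal f(\tilde{\bm z})}\big|\leq\kappa_0+o(1)$, and with $\kappa_0\leq\frac14$ this gives $|\tilde{\bm w}|^2\in[\frac12,2]$. The quantity $1+\bm c=|\tilde{\bm w}|^2/\Re\p{\tilde{\bm f}}$ is then comparable to $1$, and the final energy tolerance $1$ at $R'$ holds.

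(iii) From \cref{e-varphi-prime} I would derive
\[\Big|\tfrac{d}{dr}(\bm\theta+n\bm\varphi)\Big|+\Big|\bm\varphi'+\tfrac1n\tan\bm\varphi\Big|\lesssim e^{\bm r}e^{-e^{\bm r}/2}=:\eta(\bm r).\]
Since $\int_{n\log R}^\infty\eta\,dr$ is superexponentially small in $R$, integrating in $r$ shows that $|\bm\varphi|$ grows by at most this amount and $\bm\theta+n\bm\varphi$ changes by at most this amount. So $\PsiL\leq(A+o_R(1))^2\leq A'^2$ as long as the hypotheses hold, using $R\inv\lll A\lll A'$.

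(iv) For the bootstrap, I would let $t_2$ be the supremum of times in $[t_0,T_{\max})$ on which $(\tilde{\bm\theta},\tilde{\bm\varphi})\in\mathcal A(\PsiL,2A')$ and $\frac d{dt}|\tilde{\bm z}|>0$. The hypotheses \cref{e-hyp} hold there, because $\cos\bm\varphi\geq\frac12$ and $\Re\p{\tilde{\bm f}}\approx1$, so (i)--(iii) apply. Steps (i)--(iii) improve the angular bound to $\mathcal A(\PsiL,A')$. They also give $\frac d{dt}|\tilde{\bm z}|=\sqrt2|\tilde{\bm w}|\cos\bm\varphi\geq\frac12$. By continuity, $t_2=T_{\max}$. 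In particular $|\tilde{\bm z}|$ increases at rate $\geq\frac12$, so $R'$ is reached, for the first time and transversally, before $T:=2(R'-R)+1$. At that time the data lie in $\tilde{\mathcal S}_\PsiL(R',A',1)$.

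The main obstacle is the initial step of the bootstrap. At $t_0$ we only know $\Re\p{\bar zw}\geq0$ and $(\theta,\varphi)\in\mathcal A(\PsiL,A)$ with $A\ll1$. Since this aperture lies strictly inside the open square, $\cos\varphi>0$ strictly, so $\frac d{dt}|\tilde{\bm z}|>0$ at $t_0$ and the continuity argument can start. A second point needing care is the singular factor $\frac{1}{1+\bm c}$ in \cref{e-varphi-prime}. It is tamed by (ii), which is why energy is measured against $\mathcal f$ rather than $\tilde{\mathcal f}$: this avoids any dependence on $V(\mathcal p)$.
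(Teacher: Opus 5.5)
Your proposal is correct and is essentially the paper's argument. It uses the change of variables of \cref{l-cov} applied to $\tilde{\mathcal f}$, the superexponential smallness of $\tilde{\bm G}/\Re(\tilde{\bm f})$ inside a small aperture, and control of the energy measured against the unperturbed $\mathcal f$ through \cref{e-energy-drift-f}, integrated in $|\tilde{\bm z}|$ so that $V(\mathcal p)$ never enters. It closes with a continuity argument up to $T_{\max}$. The only difference is your aperture function $(\theta+n\varphi)^2+\varphi^2$, built from the exactly conserved $\theta+n\varphi$ and the decreasing $|\varphi|$. The paper instead uses $(|\theta|+n|\varphi|)^2$ together with the sign inequality $\operatorname{sgn}(\theta)\tan\varphi\leq\operatorname{sgn}(\varphi)\tan\varphi$. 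Yours is a smooth positive definite quadratic form, so it avoids one-sided derivatives.

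One justification in step (i) is wrong as written: $|e^{\mathcal p}-1|$ is not polynomially bounded there, since it can be as large as $e^{n\eps|z|^{n-1}}$. The claimed bound $|\tilde{\mathcal f}'-\mathcal f'|\lesssim|z|^ne^{-|z|^n/2}$ still holds if you estimate $\tilde{\mathcal f}'$ and $\mathcal f'$ separately, because each is $\lesssim|z|^{n-1}e^{-|z|^n/2}$ in that region.
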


Superficially, it appears that these two lemmas will suffice, when combined with the continuity of the flow:

\begin{lemma}[Medium-scale focusing]\label{l-medium-scale-focusing}
    For any aperture function $\Psi$, if
        \[\label{e-medium-scale-requirement}
            A\lll A',R,R'\quad\text{and}\quad R\leq R'
            \]
    then
        \[\label{e-medium-scale-conclusion}
            \mathcal S_\Psi(R,A)\prec\mathcal S_\Psi(R',A')
            \]
\end{lemma}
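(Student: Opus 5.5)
The plan is a perturbation argument around the radial solutions, using the continuity of the first crossing (\cref{l-continuity}(c)) together with a compactness argument in the aperture size $A$. Here $\eps=0$, so we only deal with zero-energy solutions of \cref{z-w-ode} for $\mathcal f(z)=1-e^{-z^n}$. The case $R=R'$ is immediate: $\prec$ then means inclusion, and $\mathcal S_\Psi(R,A)\subset\mathcal S_\Psi(R,A')$ as soon as $A\leq A'$. So assume $R<R'$.

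\emph{Step 1: the case $A=0$.} By the remark after \cref{d-S-region}, $\mathcal S_\Psi(R,0)$ consists of the $n$ data $(R\mu,\mathcal f(R)\ef12\mu)$ with $\mu^n=1$. For such data I will check that the solution is $\bm z=\bm\rho(t)\mu$ with $\bm\rho$ real. Indeed, $\overline{\mathcal f'(\rho\mu)}=n\rho^{n-1}e^{-\rho^n}\bar\mu^{n-1}=n\rho^{n-1}e^{-\rho^n}\mu$, so $\ddot{\bm\rho}=n\bm\rho^{n-1}e^{-\bm\rho^n}>0$. Conservation of energy gives $\tfrac12\dot{\bm\rho}^2=1-e^{-\bm\rho^n}\geq1-e^{-R^n}>0$ for $\bm\rho\geq R$. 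Hence $\bm\rho$ increases with speed bounded below, reaches $R'$ at a finite time $T_0$, and crosses $\{|z|=R'\}$ transversally, since $\Re(\bar{\bm z}\bm w)>0$ there. At the crossing we have $\bm\theta=\bm\varphi=0$, which lies in the interior of $\mathcal A(\Psi,A')$ because $\Psi(0,0)=0<A'^2$.

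\emph{Step 2: openness.} Let $M=\{(z,w):|z|=R'\}$, and let $V\subset M$ be the relatively open set of points with $w\neq0$, angles in the open square, and $\Psi(\theta,\varphi)<A'^2$. This set is open because $\Psi$ is continuous. By \cref{e-crossing-open}, the set $\mathcal W(M,V,T_0+1)$ is open in $\Hsp\times\C^2$. Fixing $\mathcal f$, its slice $W\subset\C^2$ is an open set containing the $n$ radial data. Every datum in $W$ reaches $|z|=R'$ for the first time before time $T_0+1$, transversally, and at a point of $V$. At that point the angles lie in $\mathcal A(\Psi,A')\subset(-\frac\pi2,\frac\pi2)^2$, so $\Re(\bar zw)>0$ and condition (2) holds. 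The zero-energy condition (3) persists by \cref{e-conservation-of-energy-focusing-section}. So every datum of $\mathcal S_\Psi(R,A)\cap W$ lands in $\mathcal S_\Psi(R',A')$ at its first crossing time.

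\emph{Step 3: compactness in $A$.} For $A\ll1$ the sets $\mathcal S_\Psi(R,A)$ are compact: the bound $|w|^2\geq m(R)$ of \cref{e-Re-f-lower-bound} replaces $w\neq0$ by a closed condition. They are also decreasing in $A$, and their intersection over $A>0$ is $\mathcal S_\Psi(R,0)$, since $\Psi\sim\theta^2+\varphi^2$ vanishes only at the origin. A decreasing family of compact sets whose intersection lies in the open set $W$ is eventually contained in $W$. Hence there is $A_0>0$, depending only on $(\Psi,A',R,R')$ (and $n$), with $\mathcal S_\Psi(R,A)\subset W$ for $A\leq A_0$; this gives \cref{e-medium-scale-conclusion} with $T=T_0+1$.

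The main point to verify carefully is Step 1. The first-crossing machinery of \cref{l-continuity}(c) requires transversality, and this is exactly what the strictly positive radial speed provides. The rest is the continuity of the first crossing plus a nested-compact-sets argument. One more check is needed: data with $|z|=R<R'$ are not on $M$ at time $0$, so the hypothesis $\gamma(0)\notin M$ of $\mathcal P(M)$ is satisfied.
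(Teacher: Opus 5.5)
Your proof is correct and is essentially the paper's. The paper deduces the lemma from the radial statement $\mathcal S_\Psi(R,0)\prec\mathcal S_\Psi(R',0)$ combined with \cref{l-stable-focusing} at $A=A'=0$ (with $\alpha=A$, $\eps=\kappa=0$ and $\tilde A'=A'$, and energy conservation restoring zero energy at the landing point); your Steps 2 and 3 are precisely the openness-plus-nested-compact-sets argument by which that lemma is proved, written out in this special case.
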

These four lemmas are not sufficient to prove \cref{t-focusing}. The reason why is because \cref{l-small-scale-focusing} requires the radii to be chosen last, and \cref{l-medium-scale-focusing} requires the apertures to be chosen after the radii. Therefore, we will need the following ``medium small'' scale focusing lemma:
\begin{lemma}[Medium-small scale focusing]\label{l-medium-small-scale-focusing}
    There exists some aperture function $\PsiS$ so that, if
        \[\label{e-medium-small-scale-requirement}
            R\leq R'\ll1\quad\text{and}\quad A=A'\ll1
            \]
    then
        \[\label{e-medium-small-scale-conclusion}
            \mathcal S_\PsiS(R,A)\prec\mathcal S_\PsiS(R',A')
            \]
\end{lemma}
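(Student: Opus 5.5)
The plan is to take $\PsiS:=\Phi$, the Lyapunov function of \cref{c-model}(3), and to show that for $|z|\ll1$ the angle equations for $\mathcal f(z)=1-e^{-z^n}$ are a perturbation of the model system \cref{e-model} that vanishes \emph{linearly} at $(\theta,\varphi)=(0,0)$, with a coefficient of order $|z|^n=e^{\bm r}$. Then $\Phi$ still decays along trajectories, and every sublevel set $\{\Phi\le A^2\}$ is forward invariant as long as $|\bm z|\le R'$. By \cref{c-model}(3c), $\Phi$ is continuous and satisfies $\Phi\sim\theta^2+\varphi^2$ on the open square, so it is an admissible aperture function. It is independent of $R,R'$, which is exactly the uniformity that \cref{l-small-scale-focusing,l-medium-scale-focusing} lack.

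First I fix a solution of \cref{z-w-ode} with zero energy and data in $\mathcal S_\PsiS(R,A)$ at time $t_0$. By conservation of energy, $\bm\kappa\equiv0$, so $\bm c\equiv0$ and $\mathcal f_t=0$ in \cref{l-cov}.

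Next I compute the coefficients. With $x:=\bm z^n=e^{\bm r+i\bm\theta}$ we have $\bm G=xe^{-x}$ and $\Re\p{\bm f}=\Re\p{1-e^{-x}}$. For $|x|\ll1$ and $|\theta|\le\frac12$,
\[
\frac{\Re\p{\bm G}}{\Re\p{\bm f}}=1+O(|x|),\qquad \frac{\Im\p{\bm G}}{\Re\p{\bm f}}=\tan\bm\theta+O(|x|\,|\bm\theta|).
\]
The second bound holds because the error is real-analytic and odd in $\theta$: for real $x$, both $\bm G$ and $\bm f$ are real. Substituting into \cref{e-varphi-prime} gives
\[
\bm\varphi'=-\tfrac12\tan\bm\theta-\beta\tan\bm\varphi+E,\qquad |E|\lesssim e^{\bm r}\p{|\bm\theta|+|\bm\varphi|},
\]
while $\bm\theta'=\tan\bm\varphi$ holds exactly.

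Along the flow, $\frac d{dr}\Phi=-\beta\Phi+\partial_\varphi\Phi\cdot E$. Since $\Phi$ is smooth with $\Phi\sim\theta^2+\varphi^2$ and $\nabla\Phi=O(|\theta|+|\varphi|)$ near the origin by \cref{c-model}(3b), we get
\[
\frac{d}{dr}\Phi\le-\p{\beta-Ce^{\bm r}}\Phi\le-\tfrac\beta2\Phi
\]
as long as $|\bm z|\le R'\ll1$ and $(\bm\theta,\bm\varphi)$ stays in a fixed small neighbourhood of $(0,0)$. That neighbourhood contains $\mathcal A(\Phi,A)$ once $A\ll1$.

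Then I run a continuity argument. Let $t^*$ be the supremum of times in $[t_0,\infty)$ up to which the solution exists, $|\bm z|\le R'$, and $(\bm\theta,\bm\varphi)\in\mathcal A(\Phi,A)$. On $[t_0,t^*)$ the hypotheses \cref{e-hyp} hold:
\begin{itemize}
\item $\bm\theta$ is small, so $\Re\p{\bm f}\ge\frac12|\bm z|^n\cos\bm\theta>0$;
\item $|\bm w|^2=\Re\p{\bm f}>0$;
\item $|\bm\varphi|<\frac\pi2$, so $\frac d{dt}|\bm z|>0$.
\end{itemize}
So \cref{l-cov} applies, $\Phi$ is non-increasing, and the trajectory stays in $\mathcal A(\Phi,A)$. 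Since $\mathcal A(\Phi,A)$ is compact in the open square, this persists up to and including $t^*$. The solution remains bounded while $|\bm z|\le R'$ (by energy, $|\bm w|^2=\Re\p{\bm f}$ is bounded), so $t^*$ can only be the time at which $|\bm z|=R'$.

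Finally I need $t^*<t_0+T$ with $T$ uniform. From $\dot{\bm r}=\sqrt2n\frac{|\bm w|}{|\bm z|}\cos\bm\varphi$, together with $|\bm w|^2\gtrsim|\bm z|^n$ and $\cos\bm\varphi\ge\frac12$, we get $\dot{\bm r}\gtrsim|\bm z|^{n/2-1}\ge\min(R,R')^{n/2-1}$ for $n\ge2$. Hence $|\bm z|$ reaches $R'$ after an $r$-increment of $n\log(R'/R)$ within a bounded time. It is the first such time because $|\bm z|$ is increasing. At that time the data lies in $\mathcal S_\PsiS(R',A)$, which proves \cref{e-medium-small-scale-conclusion}.

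I expect the main obstacle to be the perturbation estimate: it must hold uniformly in $\theta$ on the aperture, not merely at $\theta=0$, and the error must vanish linearly at the origin. Otherwise the decay of $\Phi$ would only give confinement to a region of size $O(R')$ rather than invariance of $\{\Phi\le A^2\}$ for every small $A$. I would verify this by writing $\Im\p{\bm G}/\Re\p{\bm f}-\tan\bm\theta$ as $e^{\bm r}$ times a smooth function of $(e^{\bm r},\bm\theta)$ that is odd in $\bm\theta$, using the Taylor expansions of $xe^{-x}$ and $1-e^{-x}$.
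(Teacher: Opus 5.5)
Your proof is correct and follows essentially the paper's argument: treat the angle equations as the model system \cref{e-model} up to errors of relative size $O(e^{\bm r})$, show that a Lyapunov function decays at rate about $\beta$, close with a continuity argument, and bound the exit time using $|\bm w|^2\gtrsim|\bm z|^n$. The only difference is the choice of aperture function. The paper takes the explicit quadratic form $\tfrac12\theta^2+\beta\theta\varphi+\varphi^2$ and absorbs the cubic errors from linearizing $\tan$ into an $O(\Psi^{3/2})$ term. You take the full $\Phi$ of \cref{c-model}, which decays exactly along the model flow, so only the $O(e^{\bm r})$ error remains; the price is relying on the smoothness of $\Phi$ coming from Sternberg's theorem.
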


We are now ready to prove \cref{t-focusing} using the focusing lemmas above.

Let $\kappa_0$ be the constant of \cref{l-large-scale-focusing}, which depends only on $n$, and let $\eps$, $\kappa$, the radii $R_1,\cdots,R_6$ and the aperture sizes $A_1,\cdots,A_6$ and $\tilde A_5$ be chosen as in \cref{chain0} below (the aperture functions will be $\PsiS$ and $\PsiL$, as indicated in \cref{e-S-chain}; the thresholds implicit in \cref{chain0} are allowed to depend on $\kappa_0$). The first row is meant to serve as a definition, with each row after that being a property of some subset of the quantities at hand; it is easier to verify the hypotheses of the focusing lemmas if we can visually separate which quantities we should focus on.

        \begin{center}
    \def\nl#1{&#1&}
    \newcommand{\therowcntr}{\thesection.\arabic{equation}}
    \newcolumntype{L}{>{\hspace{0.7cm}(\refstepcounter{equation}\therowcntr)}r}
    \newcolumntype{N}{>{\refstepcounter{equation}\therowcntr}c}
    \newcolumntype{C}{>{$\hspace{-4pt}}l<{\hspace{-4pt}$}}
    \vspace{1em}
    \resizebox{\textwidth}{!}{\begin{tabular}{%
        C   CC  CC  CC  CC  CC  CC  CC  CC  CC  CC  CC  CC  CC  CC  CC  CC CC  L}
        \eps
            \nl\lll \kappa
            \nl\lll R_1
            \nl\lll R_2
            \nl\lll A_2
            \nl= A_3
            \nl\lll A_4
            \nl\lll R_3
            \nl= R_4
            \nl\lll R_5\inv
            \nl\lll A_5
            \nl\lll \tilde A_5
            \nl\lll A_6
            \nl\ll 1
            \nl\ll R_5
            \nl\leq R_6
            \nl< \infty
            \nl= A_1
                &\label{chain0}
                \\\\
        
            \nl~
            \nl~ R_1
            \nl\lll R_2
            \nl\lll A_2
            \nl~
            \nl~
            \nl~
            \nl~
            \nl~
            \nl~
            \nl~
            \nl~
            \nl\ll 1
            \nl~
            \nl~
            \nl~ \infty
            \nl= A_1
                &\label{chain1}
                \\\\
        
            \nl~
            \nl~
            \nl~ R_2
            \nl\leq A_2
            \nl= A_3
            \nl~
            \nl\leq R_3
            \nl~
            \nl~
            \nl~
            \nl~
            \nl~
            \nl\ll 1
            \nl~
            \nl~
            \nl~
            \nl~
                &\label{chain2}
                \\\\
        
            \nl~
            \nl~
            \nl~
            \nl~
            \nl~ A_3
            \nl\ll A_4
            \nl\leq R_3
            \nl= R_4
            \nl~
            \nl~
            \nl~
            \nl~
            \nl~
            \nl~
            \nl~
            \nl~
            \nl~
                &\label{chain3}
                \\\\
        
            \nl~
            \nl~
            \nl~
            \nl~
            \nl~
            \nl~ A_4
            \nl~
            \nl\lll R_4
            \nl\leq R_5\inv
            \nl\leq A_5
            \nl~
            \nl~
            \nl\leq 1
            \nl\leq R_5
            \nl~
            \nl~
            \nl~
                &\label{chain4}
                \\\\
        \eps
            \nl\lll \kappa
            \nl\lll R_1
            \nl~
            \nl~
            \nl~
            \nl~
            \nl~
            \nl~
            \nl\lll R_5\inv
            \nl\lll A_5
            \nl\lll \tilde A_5
            \nl~
            \nl~
            \nl~
            \nl~
            \nl~ \infty
            \nl= A_1
                &\label{chain5}
                \\\\
        \eps
            \nl\lll \kappa
            \nl~
            \nl~
            \nl~
            \nl~
            \nl~
            \nl~
            \nl~
            \nl\lll R_5\inv
            \nl~
            \nl\lll \tilde A_5
            \nl\lll A_6
            \nl\ll 1
            \nl\ll R_5
            \nl\leq R_6
            \nl~
            \nl~
                &\label{chain6}
                \\
    \end{tabular}}
    \vspace{1em}
    \end{center}

Note that, according to \cref{chain0} and the conventions of \cref{ss-notation}, the constants are chosen in the order $A_6$, $\tilde A_5$, $A_5$, $R_5$, $R_4=R_3$, $A_4$, $A_3=A_2$, $R_2$, $R_1$, $\kappa$, $\eps$. In particular, none of them depends on $R_6=R'$: the only requirement on $R'$ is $R'\geq R_5$, where $R_5$ depends only on $A_6=A'$. This is the hypothesis ${R'}\inv\lll A'$ of \cref{t-focusing}, and the first row \cref{chain0} encodes the hypothesis $\eps\lll\kappa\lll R\lll A'\ll1$. A schematic picture of the resulting aperture sizes is given in \cref{fig-apertures}.

\begin{figure}[htbp]
    \centering
    \includegraphics[width=\textwidth]{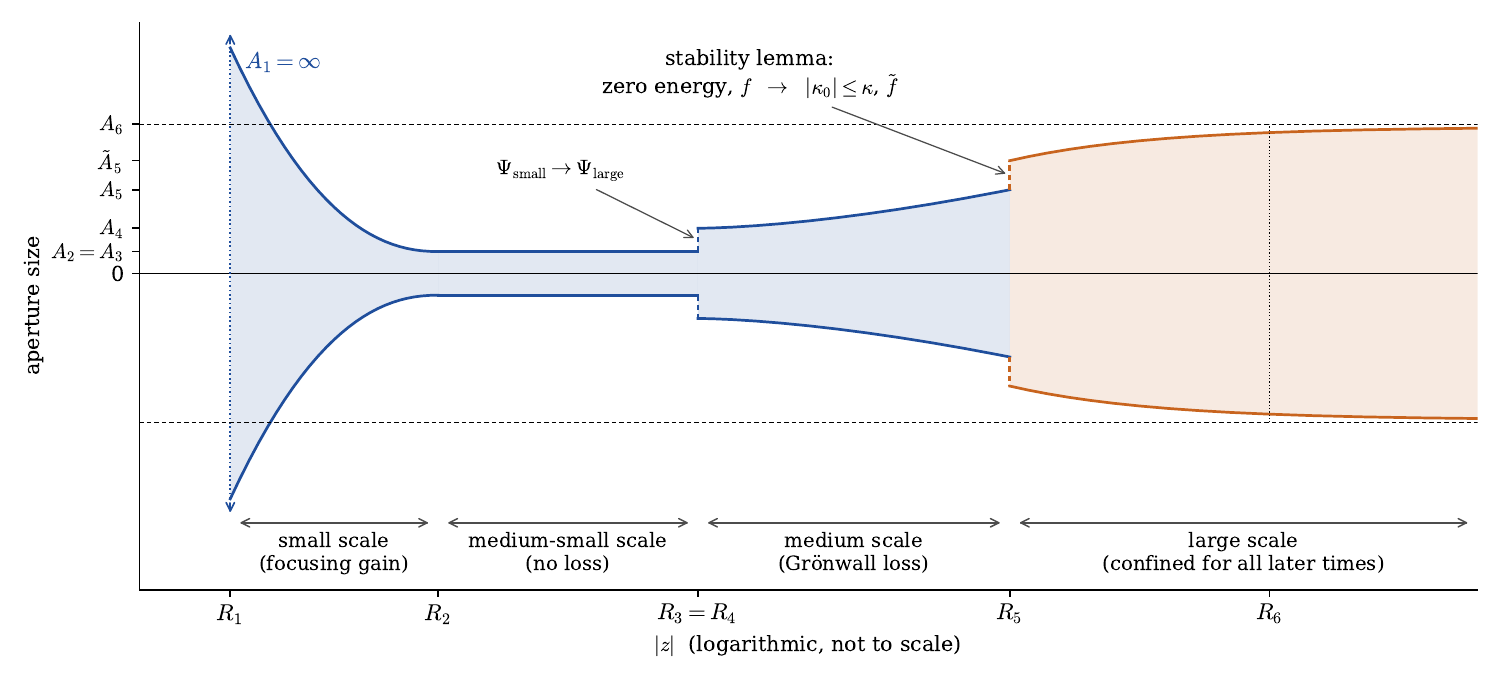}
    \caption{A schematic picture (not to scale) of the aperture sizes in \cref{chain0,e-S-chain}. Focusing is only gained at small scales (\cref{l-small-scale-focusing}); it is preserved at medium-small scales (\cref{l-medium-small-scale-focusing}), and at medium scales \cref{l-medium-scale-focusing} only controls the loss. At $R_3=R_4$ the aperture function changes from $\PsiS$ to $\PsiL$, so the jump from $A_3$ to $A_4$ is in part a change of units. Blue: zero-energy solutions to \cref{z-w-ode}. Orange: all solutions to \cref{z-w-tilde-f-ode}, for all $\tilde{\mathcal f}\in F_\eps$, whose energy at the radius $R_1$ is at most $\kappa$ in modulus. By \cref{l-stable-focusing}, the statement represented by the whole blue part also holds for \cref{z-w-tilde-f-ode}, with $A_5$ replaced by $\tilde A_5$. By \cref{l-large-scale-focusing}, the trajectory then stays inside the aperture of size $A_6$ for all larger radii, not only up to $R_6$.}
    \label{fig-apertures}
\end{figure}

The other rows of the table are there for reference only and they will help us prove the following chain, in which the two arrows $\Downarrow$ indicate that the bottom left relation is deduced from the (whole) top row by means of \cref{l-stable-focusing}:

\begin{center}
    \def\nl#1{&#1&}
    \newcounter{rowcntr}[table]
    \renewcommand{\therowcntr}{\thesection.\arabic{equation}}
    \newcolumntype{N}{>{\refstepcounter{equation}\therowcntr}c}
    \vspace{1em}
    \newcolumntype{A}{>{$\hspace{-4pt}}c<{\hspace{-4pt}$}}
    \newcommand{\precn}{\Downarrow}
    \def\SS{\mathcal S_{\PsiS}}
    \def\SL{\mathcal S_{\PsiL}}
    \def\tprec{&\prec&}
    \resizebox{\textwidth}{!}{\begin{tabular}{AAAAAAAAAAAr}
        \SS(R_1,A_1)\tprec\SS(R_2,A_2)\tprec\SS(R_3,A_3)\tprec\SL(R_4,A_4)\tprec\SL(R_5,A_5)&&&\\
        \precn&&&&&&&&\precn&&&(\refstepcounter{equation}\therowcntr)\label{e-S-chain}\\
        \tilde{\mathcal S}_{\PsiS}(R_1,A_1,\kappa)&&&&\prec&&&&\tilde{\mathcal S}_\PsiL(R_5,\tilde A_5,\kappa_0)\tprec\tilde{\mathcal S}_\PsiL(R_6,A_6,1)&
        \end{tabular}}
    \vspace{1em}
    \end{center}
    
From \cref{chain0,chain1,chain2,chain3,chain4,chain5,chain6}, one can simply prove each part of \cref{e-S-chain} by reading through the appropriate row among \cref{chain1,chain2,chain3,chain4,chain5,chain6}, and using the appropriate focusing lemma. We perform this below, showing each $\prec$ from left to right and top to bottom:
    \begin{enumerate}
        \item To show $\mathcal S_{\PsiS}(R_1,A_1)\prec\mathcal S_{\PsiS}(R_2,A_2)$, apply \cref{l-small-scale-focusing} with $R=R_1$, $R'=R_2$, $A=A_1$, and $A'=A_2$. The hypothesis \cref{e-small-scale-requirement} is precisely \cref{chain1}.
        \item To show $\mathcal S_{\PsiS}(R_2,A_2)\prec\mathcal S_{\PsiS}(R_3,A_3)$, apply \cref{l-medium-small-scale-focusing} with $R=R_2$, $R'=R_3$, $A=A_2$, and $A'=A_3$. The hypothesis \cref{e-medium-small-scale-requirement} is precisely \cref{chain2}.
        \item To show $\mathcal S_{\PsiS}(R_3,A_3)\prec\mathcal S_{\PsiL}(R_4,A_4)$, note that by \cref{chain3} we have $R_3=R_4$, so by \cref{defn-S-statement} we need to show that $\mathcal S_{\PsiS}(R_3,A_3)\subset\mathcal S_{\PsiL}(R_3,A_4)$, i.e. that $\PsiS(\theta,\varphi)\leq A_3^2$ implies $\PsiL(\theta,\varphi)\leq A_4^2$. This follows from \cref{d-aperture} and $A_3\ll A_4$.
        \item To show $\mathcal S_{\PsiL}(R_4,A_4)\prec\mathcal S_{\PsiL}(R_5,A_5)$, apply \cref{l-medium-scale-focusing} with $R=R_4$, $R'=R_5$, $A=A_4$, and $A'=A_5$. The hypotheses \cref{e-medium-scale-requirement} follow from \cref{chain4}.
        \item To show $\tilde{\mathcal S}_\PsiS(R_1,A_1,\kappa)\prec\tilde{\mathcal S}_\PsiL(R_5,\tilde A_5,\kappa_0)$, apply \cref{l-stable-focusing} with $\Psi=\PsiS$, $\Psi'=\PsiL$, $A=A_1=\infty$, $\alpha=0$, $R=R_1$, $A'=A_5$, $R'=R_5$, $\tilde A'=\tilde A_5$ and $\kappa'=\kappa_0$. Its main hypothesis, $\mathcal S_\PsiS(R_1,A_1)\prec\mathcal S_\PsiL(R_5,A_5)$, is the top row of \cref{e-S-chain}, which we have already shown (recall that $\prec$ is transitive). The hypotheses \cref{e-stable-focusing} follow from \cref{chain5}, and $A_5<\tilde A_5\ll1$ follows from \cref{chain0}.
        \item To show $\tilde{\mathcal S}_{\PsiL}(R_5,\tilde A_5,\kappa_0)\prec\tilde{\mathcal S}_{\PsiL}(R_6,A_6,1)$, apply \cref{l-large-scale-focusing} with $R=R_5$, $R'=R_6$, $A=\tilde A_5$, $A'=A_6$ and $\kappa=\kappa_0$. The hypotheses \cref{e-large-scale-requirement} follow from \cref{chain6} (which in particular gives $\eps\leq\kappa_0$).
    \end{enumerate}

Therefore, we conclude
    \[\label{e-prec-finally}
        \tilde{\mathcal S}_\PsiS(R_1,A_1,\kappa)\prec\tilde{\mathcal S}_\PsiL(R_6,A_6,1)\]
Moreover, by the last assertion of \cref{l-large-scale-focusing} (applied as in (6)), from the first time at which $|\tilde{\bm z}|=R_5$ onwards, $|\tilde{\bm z}|$ is strictly increasing and $(\tilde{\bm\theta},\tilde{\bm\varphi})$ stays in $\mathcal A(\PsiL,A_6)$ for as long as the solution exists; since $R_6\geq R_5$, this holds in particular from the first time at which $|\tilde{\bm z}|=R_6$ onwards.

Finally, since any aperture function $\Psi'$ satisfies $\Psi'\lesssim\PsiL$ by \cref{d-aperture}, we have $\mathcal A(\PsiL,A_6)\subset\mathcal A(\Psi',CA_6)$ and hence $\tilde{\mathcal S}_\PsiL(R_6,A_6,1)\subset\tilde{\mathcal S}_{\Psi'}(R_6,CA_6,1)$ for some constant $C$ depending on $\Psi'$; replacing $A'$ by $A'/C$ throughout gives the statement of \cref{t-focusing} for a general $\Psi'$, including its last assertion. This completes the proof.
\section{Proofs of the focusing lemmas}\label{s-focusing-lemmas}
In this section, we seek to prove the 5 ``focusing'' lemmas of the previous section. Three of them (\cref{l-stable-focusing,l-small-scale-focusing,l-medium-scale-focusing}) are soft: they follow from the continuity of the flow and of the first crossing (\cref{l-continuity}), together with the model case \cref{c-model} and the scaling symmetry of the equation. To prove the other two, we will use the change of variables of \cref{l-cov} (with $\mathcal f$ independent of $t$, so that $\bm f_t=0$ and $\bm\kappa$ is constant, except at the end of \cref{ss-large-scale}). Throughout, $\bm\theta=\arg(\bm z^n)$ and $\bm\varphi=\arg\bm w-\arg\bm z\in(-\tfrac\pi2,\tfrac\pi2)$ are as in \cref{l-cov}; recall that condition (4) of \cref{d-S-region} forces $|\bm\theta|,|\bm\varphi|<\frac\pi2$ (in particular $\bm w\neq0$) for data in $\mathcal S_\Psi(R,A)$ or $\tilde{\mathcal S}_\Psi(R,A,\delta)$ when $A<\infty$, whereas for $A=\infty$ only conditions (1)--(3) of \cref{d-S-region} are imposed (so that, for instance, $\bm w=0$ is allowed). The case $A=\infty$ only occurs in \cref{ss-stability,ss-small-scale}. We will also use repeatedly that, for $\mathcal f(z)=1-e^{-z^n}$,
    \[\label{e-Re-f-positive}
        \Re\p{\mathcal f(z)}=1-e^{-\Re(z^n)}\cos\p{\Im(z^n)}>0\qquad\text{whenever }\Re(z^n)>0,
        \]
since then $e^{-\Re(z^n)}<1$.

\subsection{Stability (\texorpdfstring{\cref{l-stable-focusing}}{Lemma \ref{l-stable-focusing}})}\label{ss-stability}
First, recall the statement of \cref{l-stable-focusing}:

\begin{restatedlemma}{l-stable-focusing}[Stable focusing]
    Let $\Psi,\Psi'$ be aperture functions, let $R<R'$ and let $\kappa'>0$. Suppose that $0\leq A'<\tilde A'\ll1$, and that either $0\leq A\ll1$ or $A=\infty$. If
        \[
            \eps,\kappa,\alpha\lll A,A',\tilde A',R,R',\kappa'
            \]
    (where the values $\eps=0$, $\kappa=0$ and $\alpha=0$ are allowed), then
        \[\label{e-stable-focusing-h}
            \mathcal S_\Psi(R,A)\prec\mathcal S_{\Psi'}(R',A')
            \]
    implies
        \[\label{e-stable-conclusion}
            \tilde{\mathcal S}_\Psi(R,A+\alpha,\kappa)\prec\tilde{\mathcal S}_{\Psi'}(R',\tilde A',\kappa')
            \]
    \end{restatedlemma}

(Since the hypothesis \cref{e-stable-focusing-h} concerns fixed values of $R,R',A,A'$, the condition on $\eps,\kappa,\alpha$ is to be read as follows: they are smaller than a positive threshold which depends on $R,R',A,A',\tilde A',\kappa',\Psi,\Psi'$. If $A=\infty$, then $A+\alpha=\infty$.)
\begin{proof}
Let $M=\{(z,w)\in\C^2:|z|=R'\}$, and let $V\subset M$ be a relatively open set with $\mathcal S_{\Psi'}(R',A')\subset V\subset\tilde{\mathcal S}_{\Psi'}(R',\tilde A',\kappa')$, which exists since $A'<\tilde A'$ and $\kappa'>0$. By \cref{e-stable-focusing-h}, every solution for $\mathcal f$ with data in $\mathcal S_\Psi(R,A)$ reaches $M$ before the time $T$ of \cref{defn-S-statement}, at a point of $\mathcal S_{\Psi'}(R',A')$, where $\Re\p{\bar zw}>0$ since $A'<\infty$; so the crossing is transversal. In other words, the set $\mathcal W(M,V,T)$ of \cref{e-crossing-open}, which is open by \cref{l-continuity}, contains $\{\mathcal f\}\times\mathcal S_\Psi(R,A)$. Since $\mathcal S_\Psi(R,A)$ is compact and the compact sets $\tilde{\mathcal S}_\Psi(R,A+\delta,\delta)$ decrease to it as $\delta\to0$ (\cref{e-S-intersection}), $\mathcal W(M,V,T)$ contains $\mathcal N\times\tilde{\mathcal S}_\Psi(R,A+\delta,\delta)$ for some neighborhood $\mathcal N$ of $\mathcal f$ in $\Hsp$ and some $\delta>0$. Finally, $F_\eps\subset\mathcal N$ for $\eps\lll\mathcal N$, since $\mathcal N$ only constrains $\tilde{\mathcal f}$ on a compact subset of $\R\times\C$; as $F_\eps$ is invariant under time translations, this gives \cref{e-stable-conclusion} for $\kappa,\alpha\leq\delta$.
\end{proof}

\subsection{Small scale (\texorpdfstring{\cref{l-small-scale-focusing}}{Lemma \ref{l-small-scale-focusing}})}\label{ss-small-scale}
Recall the statement of \cref{l-small-scale-focusing}: if $R\lll R'\lll A'<\infty=A$, then $\mathcal S_\Psi(R,\infty)\prec\mathcal S_\Psi(R',A')$. After rescaling by $R'$, the function $\mathcal f(z)=1-e^{-z^n}$ becomes a small perturbation of the model function $z^n$, for which \cref{c-model} gives focusing at a definite rate; since the rescaling is an exact symmetry of the problem, no estimates are needed beyond the continuity of the first crossing (\cref{l-continuity}).

Since any two aperture functions are comparable (\cref{d-aperture}), we may assume without loss of generality that $\Psi=\Phi$ as in \cref{c-model} ($\Phi$ is an aperture function by \cref{c-model}(3c)). It suffices to prove that if $R'\lll s\lll A'$, then $\mathcal S_\Phi(sR',\infty)\prec\mathcal S_\Phi(R',A')$. Indeed, given $R\lll R'\lll A'$, there is an integer $k\geq1$ for which $s:=(R/R')^{1/k}$ satisfies $R'\lll s\lll A'$ (take the smallest $k$ for which $(R/R')^{1/k}$ is at least the square of the threshold in $s\lll A'$; then $s$ is at most that threshold, using $R\lll R',A'$ when $k=1$, and $R'\lll s$ follows from $R'\lll A'$). Since $s^jR'\leq R'$ for $j\geq0$, the radii $R^{(j)}:=s^{k-j}R'$, $0\leq j\leq k$, all satisfy $R^{(j)}\lll s$, so $\mathcal S_\Phi(R^{(j-1)},\infty)\prec\mathcal S_\Phi(R^{(j)},A')\subset\mathcal S_\Phi(R^{(j)},\infty)$ for each $j$, and \cref{l-small-scale-focusing} follows from the transitivity of $\prec$. For any $(\bm z,\bm w)$ that solves \cref{z-w-ode} and any $\rho>0$, the functions
    \[\label{e-small-scale-rescaling}
        \bm z_\rho(\tau):=\rho\inv\bm z\p{\rho\ief{n-2}2\tau},\qquad\bm w_\rho(\tau):=\rho\ief n2\bm w\p{\rho\ief{n-2}2\tau}\]
solve \cref{z-w-ode} with $\mathcal f$ replaced by $\mathcal f_\rho(z):=\rho\ie n\mathcal f(\rho z)$, and $\lim_{\rho\to0}\mathcal f_\rho=\mathcal f_0$ in $\Hsp$, where $\mathcal f_0(z):=z^n$. Denoting by $\mathcal S_\Phi(\rho;R,A)$ the corresponding region for $\mathcal f_\rho$ as in \cref{d-S-region}, and using $\prec$ for these regions as well, the problem reduces to showing $\mathcal S_\Phi(\rho;s,\infty)\prec\mathcal S_\Phi(\rho;1,A')$ whenever $\rho\lll s\lll A'$. Note that $\mathcal S_\Phi(0;s,\infty)$ is precisely the set of data \cref{e-model-hypotheses} of \cref{c-model} with $|z|=s$. By \cref{c-model}(1), every solution for $\mathcal f_0$ with such data reaches $M:=\{(z,w)\in\C^2:|z|=1\}$ transversely, with $w\neq0$, $\Re(z^n)>0$ and $\Re\p{\bar zw}>0$, and by \cref{e-Phi-decay} and $s\lll A'$, at this point
    \[\label{Phi-bound-R-Ri0}
        \Phi(\theta,\varphi)=s^{\beta n}\Phi\p{\bm\theta(0),\bm\varphi(0)}\leq s^{\beta n}\sup_{\overline\Qsq}\Phi<{A'}^2.\]
Thus, as in \cref{ss-stability}, there is an open set $W\subset\Hsp\times\C^2$ containing $\{\mathcal f_0\}\times\mathcal S_\Phi(0;s,\infty)$ so that every solution with data in $W$ reaches $M$ before a fixed time, at a point of $\{w\neq0,\ \Re(z^n)>0,\ \Re(\bar zw)>0,\ \Phi(\theta,\varphi)<{A'}^2\}$; by conservation of energy, this point lies in $\mathcal S_\Phi(\rho;1,A')$ for a solution for $\mathcal f_\rho$. This completes the proof, since for $\rho\lll s,A'$ we have $\{\mathcal f_\rho\}\times\mathcal S_\Phi(\rho;s,\infty)\subset W$: indeed, $\mathcal S_\Phi(\rho;s,\infty)$ is contained in the compact set $\{|z|=s,\ \Re(\bar zw)\geq0,\ \abs{|w|^2-\Re(z^n)}\leq\sup_{D_1}|\mathcal f_\rho-\mathcal f_0|\}$, and these sets decrease to $\mathcal S_\Phi(0;s,\infty)$ as $\rho\to0$.

\subsection{Medium-small scale (\texorpdfstring{\cref{l-medium-small-scale-focusing}}{Lemma \ref{l-medium-small-scale-focusing}})}\label{ss-medium-small-scale}
        We set
            \[\label{def-Psi-2}
                \PsiS(\theta,\varphi):=\tfrac12\theta^2+\beta\theta\varphi+\varphi^2
                \]
        where, as in \cref{ss-coordinate-transformation}, $\beta=\tfrac12+\tfrac1n$; this is the quadratic form $\tilde\Phi$ of \cref{c-model}(3). Let $(\bm z(t),\bm w(t))$ denote some solution to \cref{z-w-ode} with $(\bm z(t_0),\bm w(t_0))\in\mathcal S_\PsiS(R,A)$, and let $\bm r,\bm\theta,\bm\varphi$ be as in \cref{l-cov}. The hypotheses \cref{e-hyp} hold at $t_0$: $\bm z(t_0)\neq0$, $|\bm\varphi(t_0)|<\frac\pi2$ gives $\frac d{dt}|\bm z|(t_0)>0$, and $|\bm\theta(t_0)|<\frac\pi2$ gives $\Re\p{\mathcal f(\bm z(t_0))}>0$ by \cref{e-Re-f-positive}; they persist as long as $|\bm\theta|,|\bm\varphi|<\frac\pi2$, which will be guaranteed by the estimate $\PsiS\ll1$ that we prove below (by a continuity argument). Since the data has zero energy, $\bm\kappa\equiv0$ and $\bm c\equiv0$. Writing $\bm\omega:=\bm z^n=e^{\bm r+i\bm\theta}$, we have for $\mathcal f(z)=1-e^{-z^n}$
            \[
                \bm G=\tfrac1n\bm z\mathcal f'(\bm z)=\bm\omega e^{-\bm\omega},\qquad \Re\p{\bm f}=1-\Re\p{e^{-\bm\omega}},
                \]
        so that \cref{e-theta-prime,e-varphi-prime} read
            \begin{align}\label{theta-varphi-ode-u}
                \bm\theta'&=\tan\bm\varphi\\
                \bm\varphi'&=-\tfrac1n\tan\bm\varphi-\tfrac12\p{\frac{\Im\p{\bm\omega e^{-\bm\omega}}}{1-\Re\p{e^{-\bm\omega}}}+\frac{\Re\p{\bm\omega e^{-\bm\omega}}}{1-\Re\p{e^{-\bm\omega}}}\tan\bm\varphi}\nonumber
                \end{align}
        We would like to show that
            \[\PsiS(\bm\theta(r),\bm\varphi(r))\ll1,\quad\text{and}\quad r\ll0\]
        implies
            \[\label{e-PsiS-derivative}
                \frac{d}{dr}\PsiS(\bm\theta(r),\bm\varphi(r))\leq0
                \]
        This would guarantee $\PsiS(\bm\theta(r),\bm\varphi(r))\leq A^2$ along the whole trajectory, since $R'\ll1$ implies $r\ll0$. To see why \cref{e-PsiS-derivative} is true, we expand the coefficients in \cref{theta-varphi-ode-u} for $|\bm\omega|=e^{r}\ll1$. Since $\bm\omega e^{-\bm\omega}=\sum_{k\geq0}\frac{(-1)^k}{k!}\bm\omega^{k+1}$, $\Im\p{\bm\omega^{k+1}}=e^{(k+1)r}\sin((k+1)\bm\theta)$ and $|\sin((k+1)\bm\theta)|\leq(k+1)|\sin\bm\theta|$, we have $\Im\p{\bm\omega e^{-\bm\omega}}=e^{r}\sin\bm\theta\,(1+O(e^r))$ and, similarly, $\Re\p{\bm\omega e^{-\bm\omega}}=e^r\cos\bm\theta+O(e^{2r})$ and $1-\Re\p{e^{-\bm\omega}}=e^r\cos\bm\theta+O(e^{2r})$. Hence, when $|\bm\theta|\ll1$ (so that $\cos\bm\theta\sim1$),
            \begin{align}\label{e-Psi-derivative-computing}
                \bm\varphi'
                    =-&\tfrac1n\tan(\bm\varphi)-\tfrac12\p{\frac{e^r\sin\bm\theta\,(1+O(e^r))}{e^r\cos\bm\theta\,(1+O(e^r))}+\frac{e^r\cos\bm\theta\,(1+O(e^r))}{e^r\cos\bm\theta\,(1+O(e^r))}\tan(\bm\varphi)}\\
                    =-&\tfrac12\tan(\bm\theta)-\beta\tan(\bm\varphi)+O(e^r\tan(\bm\theta))+O(e^r\tan(\bm\varphi))\\
                    =-&\tfrac12\bm\theta-\beta\bm\varphi+O((\PsiS(\bm\theta,\bm\varphi))\ef32)+O(e^r(\PsiS(\bm\theta,\bm\varphi))\ef12)\\
                \bm\theta'
                    =&\tan(\bm\varphi)\\
                    =&\bm\varphi+O((\PsiS(\bm\theta,\bm\varphi))\ef32)
                \end{align}
        which holds whenever $r\ll0$ and $|\bm\theta|,|\bm\varphi|\ll1$, i.e. whenever $r\ll0$ and $\PsiS(\bm\theta,\bm\varphi)\ll1$. In other words, \cref{theta-varphi-ode-u} is the model system \cref{e-model} up to errors of relative size $O(e^r)$. Therefore, using $\nabla\PsiS\cdot(\varphi,-\tfrac12\theta-\beta\varphi)=-\beta\PsiS$ and $|\nabla\PsiS|\lesssim\PsiS\ef12$, we have
            \[\aline\label{e-still-comuting-d-dr-PsiS}
                \frac d{dr}(\PsiS(\bm\theta(r),\bm\varphi(r)))
                    =&-\beta\PsiS+O(\PsiS\ef12)\p{O(\PsiS\ef32)+O(e^r\PsiS\ef12)}\\
                    =&(-\beta+O(\PsiS(\bm\theta(r),\bm\varphi(r))+e^r))\PsiS(\bm\theta(r),\bm\varphi(r))\\
                    \leq&0
                \]
        which holds whenever $r\ll0$ and $\PsiS(\bm\theta(r),\bm\varphi(r))\ll1$. Since $A\ll1$, this shows that $\PsiS(\bm\theta(r),\bm\varphi(r))\leq A^2$ for all $r\in[n\log R,n\log R']$, and in particular that the trajectory reaches $|\bm z|=R'$ with $(\bm\theta,\bm\varphi)\in\mathcal A(\PsiS,A')$. To show that $|\bm z(t_1)|=R'$ for some time $t_1\leq t_0+T$, for some constant time $T$, note that $\PsiS(\bm\theta,\bm\varphi)\ll1$ implies that $|\bm\varphi|\ll1$ and hence $\frac d{dt}|\bm z|=\sqrt2|\bm w|\cos\bm\varphi\sim|\bm w|$ (see \cref{e-zbarw}). Since $|\bm\theta|\ll1$, $|\bm z|\leq R'\ll1$ and $|\bm z|$ is increasing, we have $|\bm w|^2=\Re\p{\mathcal f(\bm z)}\gtrsim|\bm z|^n\geq|\bm z(t_0)|^n=R^n$, i.e. $|\bm w|\gtrsim R\ef n2$. We conclude that $|\bm z(t)|-R\gtrsim(t-t_0)R\ef n2$, so we may take $T\sim R'R\ief n2$, for example.
\subsection{Medium scale (\texorpdfstring{\cref{l-medium-scale-focusing}}{Lemma \ref{l-medium-scale-focusing}})}\label{ss-medium-scale}
Recall the statement of \cref{l-medium-scale-focusing}: for any aperture function $\Psi$, if $A\lll A',R,R'$ and $R\leq R'$, then $\mathcal S_\Psi(R,A)\prec\mathcal S_\Psi(R',A')$.
\begin{proof}[Proof of \cref{l-medium-scale-focusing}]
For $R=R'$ this is the inclusion $\mathcal S_\Psi(R,A)\subset\mathcal S_\Psi(R,A')$. For $R<R'$, it follows from $\mathcal S_\Psi(R,0)\prec\mathcal S_\Psi(R',0)$ (the radial solutions remain radial, and $|\bm z|$ increases along them since $\mathcal f>0$ on $(0,\infty)$), along with \cref{l-stable-focusing} with $A=A'=0$.
\end{proof}

\subsection{Large scale (\texorpdfstring{\cref{l-large-scale-focusing}}{Lemma \ref{l-large-scale-focusing}})}\label{ss-large-scale}
At large scales and inside a small aperture, $\mathcal f'$ is negligible, so the solutions are almost straight lines traversed at constant speed; in other words, we are comparing with the equation for the zero function. For a straight line, the direction $\arg\bm w=\arg\bm z+\bm\varphi$ is constant, so that $\bm\theta+n\bm\varphi=n\arg\bm w$ (mod $2\pi$) is constant, while $|\bm\varphi|$ decreases. This explains the choice of the aperture function $\PsiL$ below: $|\bm\theta|+n|\bm\varphi|$ does not increase along straight lines.
We define
    \[\label{def-Psi-3}
        \PsiL(\theta,\varphi)=(|\theta|+n|\varphi|)^2
        \]
Let $(\bm z(t),\bm w(t))$ denote any solution to \cref{z-w-ode} with $(\bm z(t_0),\bm w(t_0))\in\mathcal S_\PsiL(R,A)$. Let $(\bm r,{\bm\theta},{\bm\varphi})$ be as defined in \cref{l-cov}. As in \cref{ss-medium-small-scale}, the hypotheses \cref{e-hyp} hold at $t_0$ and persist as long as $|\bm\theta|,|\bm\varphi|<\frac\pi2$, $\bm c\equiv0$, and $(\bm\theta,\bm\varphi)$ satisfy \cref{theta-varphi-ode-u}, i.e.
    \begin{align}\label{theta-varphi-ode-large}
        \bm\theta'&=\tan\bm\varphi\\
        \bm\varphi'&=-\tfrac1n\tan\bm\varphi-\tfrac12\p{\frac{\Im\p{\bm G}}{\Re\p{\bm f}}+\frac{\Re\p{\bm G}}{\Re\p{\bm f}}\tan\bm\varphi}\nonumber
        \end{align}
with $\bm G=\bm\omega e^{-\bm\omega}$ and $\Re\p{\bm f}=1-\Re\p{e^{-\bm\omega}}$, where $\bm\omega=\bm z^n=e^{\bm r+i\bm\theta}$. If $r_0=\bm r(t_0)=n\log(R)$, then we have
    \[\PsiL({\bm\theta}(r_0),{\bm\varphi}(r_0))\leq A^2\]
It suffices to show that $(\bm\theta,\bm\varphi)$ can be extended to $r=r'=n\log(R')$ and satisfies
    \[
        \PsiL({\bm\theta}(r'),\bm\varphi(r'))\leq {A'}^2
        \]
Note that $r\geq r_0\gg0$. If $\PsiL(\bm\theta,\bm\varphi)\ll1$, i.e. $|\bm\theta|,|\bm\varphi|\ll1$, then $\Re\p{\bm\omega}=e^{r}\cos\bm\theta\geq\tfrac12e^r$, so $|e^{-\bm\omega}|\leq e^{-e^r/2}$, and hence
    \[\label{e-large-scale-coefficients}
        \abs{\frac{\Im\p{\bm G}}{\Re\p{\bm f}}}+\abs{\frac{\Re\p{\bm G}}{\Re\p{\bm f}}}\lesssim\frac{|\bm\omega e^{-\bm\omega}|}{1-|e^{-\bm\omega}|}\lesssim e^{r-e^r/2}\leq e^{-r}
        \]
By \cref{def-Psi-3} and \cref{theta-varphi-ode-large}, we find that (using one-sided derivatives where $\bm\theta=0$ or $\bm\varphi=0$)
    \[
        \frac{d}{dr}\Big[\sqrt{\PsiL({\bm\theta}(r),{\bm\varphi}(r))}\Big]=\frac{d}{dr}\Big[|\bm\theta(r)|+n|\bm\varphi(r)|\Big]\leq\operatorname{sgn}(\bm\theta)\tan\bm\varphi-\operatorname{sgn}(\bm\varphi)\tan\bm\varphi+O(e^{-r})\lesssim e^{-r}
        \]
since $\operatorname{sgn}(\bm\theta)\tan\bm\varphi\leq|\tan\bm\varphi|=\operatorname{sgn}(\bm\varphi)\tan\bm\varphi$. Integrating from $r_0$, we conclude that
    \[\label{e-large-scale-conclusion}
        \sqrt{\PsiL({\bm\theta}(r),{\bm\varphi}(r))}\lesssim A+e^{-r_0}\lll A'\qquad\text{for all }r\geq r_0,
        \]
which is in particular true at $r=r'$; the hypothesis $\PsiL\ll1$ used above is justified by a continuity argument, since $A+e^{-r_0}\ll1$. Note that \cref{e-large-scale-conclusion} holds for \textit{every} $r\geq r_0$ for which the solution exists, not just at $r=r'$: the hypotheses \cref{e-hyp} persist on the whole forward interval of existence, so $|\bm z|$ is strictly increasing there and the trajectory stays inside the aperture $\mathcal A(\PsiL,A')$ for all later times; in particular $|\bm\theta|$ stays small. This is the last assertion of \cref{l-large-scale-focusing}. To show that $|\bm z(t_1)|=R'$ for some time $t_1\leq t_0+T$, where $T$ is some constant, we may apply the same argument as the one at the end of \cref{ss-medium-small-scale}: now $\frac d{dt}|\bm z|=\sqrt2|\bm w|\cos\bm\varphi\sim|\bm w|$ and $|\bm w|^2=\Re\p{\mathcal f(\bm z)}\sim1$, so we may take $T\sim R'$.

To see why the same is true for solutions to \cref{z-w-tilde-f-ode} with $\tilde{\mathcal f}\in F_\eps$ and with data in $\tilde{\mathcal S}_\PsiL(R,A,\kappa)$ at some time $t_0$, note that \cref{l-cov} applies verbatim to $\tilde{\mathcal f}$, which is of class $C^1$ (with $\tilde{\bm G}=\tfrac1n\tilde{\bm z}\tilde{\mathcal f}'(t;\tilde{\bm z})$; note that \cref{e-theta-prime,e-varphi-prime} do not involve $\bm f_t$), and that the only properties of ${\mathcal f}$ that we have used are as follows:
    \begin{enumerate}
        \item $\Re\p{\mathcal f(z)}\sim1$ when $|\arg(z^n)|\leq\frac1{10}$ and $|z|\gg0$.
        \item $|\tfrac1nz\mathcal f'(z)|\lesssim e^{-|z|^n/3}$ (and hence $\lesssim|z|^{-n}=e^{-r}$, which is what was used in \cref{e-large-scale-coefficients}) when $|\arg(z^n)|\leq\frac1{10}$ and $|z|\gg0$.
        \end{enumerate}
Both of these properties are also met by $\tilde {\mathcal f}(t;z)=1-e^{-z^n+\mathcal p(t;z)}$, uniformly in $t$, since $\Re(z^n-\mathcal p(t;z))\geq\tfrac12|z|^n$ when $|\arg(z^n)|\leq\frac1{10}$ and $|z|\gg0$ (recall that $\deg\mathcal p<n$ and the coefficients of $\mathcal p$ have modulus $\leq\eps\leq1$; see \cref{e-main-lemma-decay}), so $|e^{-z^n+\mathcal p(t;z)}|\leq e^{-|z|^n/2}$ and $\tfrac1nz\tilde{\mathcal f}'(t;z)=(z^n-\tfrac1nz\mathcal p'(t;z))e^{-z^n+\mathcal p(t;z)}$ has modulus $\lesssim|z|^ne^{-|z|^n/2}\leq e^{-|z|^n/3}$. In the same region, we also have
    \begin{enumerate}
        \item[(3)] $|\tilde{\mathcal f}(t;z)-\mathcal f(z)|\leq2e^{-|z|^n/2}$ and $|\tilde{\mathcal f}'(t;z)-\mathcal f'(z)|\lesssim e^{-|z|^n/3}$,
        \end{enumerate}
where the second bound follows from (2) for $\mathcal f$ and for $\tilde{\mathcal f}$, since $|z|\geq1$. The hypotheses \cref{e-hyp} hold at $t_0$ for the same reasons as before, since data in $\tilde{\mathcal S}_\PsiL(R,A,\kappa)$ with $A\ll1$ have $\tilde{\bm w}\neq0$ and $|\tilde{\bm\varphi}|<\frac\pi2$, and $\Re\p{\tilde{\mathcal f}(t;\tilde{\bm z})}\sim1$ by (1).

The energy $\bm\kappa(t)=|\tilde{\bm w}|^2-\Re\p{\tilde{\mathcal f}(t;\tilde{\bm z})}$ is no longer constant, but it stays small. To see this, we use the energy with respect to $\mathcal f$, namely $\bm\kappa_{\mathcal f}(t):=|\tilde{\bm w}|^2-\Re\p{\mathcal f(\tilde{\bm z})}$; by (3), $|\bm\kappa-\bm\kappa_{\mathcal f}|\leq2e^{-|\tilde{\bm z}|^n/2}$, and $|\bm\kappa_{\mathcal f}(t_0)|\leq\kappa$ by condition (3) of \cref{d-S-region}. Suppose that, on $[t_0,t]$, we have $|\tilde{\bm z}|\geq R$, $|\tilde{\bm\theta}|\leq\frac1{10}$, $\tilde{\bm w}\neq0$ and $|\tilde{\bm\varphi}|\leq1$. Then $\frac d{dt}|\tilde{\bm z}|=|\dot{\tilde{\bm z}}|\cos\tilde{\bm\varphi}\geq\frac12|\dot{\tilde{\bm z}}|>0$ by \cref{e-zbarw}, so we may use $\varrho=|\tilde{\bm z}|$ as the variable of integration in \cref{e-energy-drift-f}, and (3) gives
    \[|\bm\kappa_{\mathcal f}(t)-\bm\kappa_{\mathcal f}(t_0)|\leq\int_{t_0}^t|\dot{\tilde{\bm z}}|\,\abs{\tilde{\mathcal f}'(t';\tilde{\bm z})-\mathcal f'(\tilde{\bm z})}dt'\lesssim\int_R^\infty e^{-\varrho^n/3}d\varrho\leq3e^{-R^n/3},\]
because $\varrho^n\geq R^{n-1}\varrho\geq\varrho$ for $\varrho\geq R\geq1$. (Here it is essential that the trajectory moves outwards: the bound is in terms of the length of the trajectory, not of the elapsed time.) Therefore
    \[\label{e-large-scale-energy}
        |\bm\kappa_{\mathcal f}(t)|\leq\kappa+O(e^{-R^n/3})\qquad\text{and}\qquad|\bm\kappa(t)|\leq\kappa+O(e^{-R^n/3}).
        \]
Hence the additional factor $\frac1{1+\bm c}$ in \cref{e-varphi-prime} satisfies $\bm c=\bm\kappa/\Re\p{\tilde{\bm f}}=O(\kappa+e^{-R^n/3})$, which is $\ll1$ if $\kappa_0\ll1$ and $R\gg0$ (this is where the hypothesis $\kappa\leq\kappa_0$ of \cref{l-large-scale-focusing} is used; the hypothesis $\eps\leq\kappa_0\leq1$ was used for (1)--(3); they also give $|\tilde{\bm w}|^2=(1+\bm c)\Re\p{\tilde{\bm f}}\sim1$). This factor only multiplies the terms bounded in \cref{e-large-scale-coefficients}, so the argument above goes through unchanged: by the same continuity argument, $|\tilde{\bm z}|$ is strictly increasing, the bound \cref{e-large-scale-conclusion} holds (so that, in particular, the solution stays in the region $|\tilde{\bm z}|\geq R$, $|\tilde{\bm\theta}|\leq\frac1{10}$, $\tilde{\bm w}\neq0$, $|\tilde{\bm\varphi}|\leq1$ in which (1)--(3) and \cref{e-large-scale-energy} are valid) for as long as the solution exists, and $|\tilde{\bm z}|$ reaches $R'$ at some time $t_1\leq t_0+T$ with $T\sim R'$. Finally, at the time $t_1$, by \cref{e-large-scale-energy} and (3),
    \[\abs{|\tilde{\bm w}(t_1)|^2-\Re\p{\mathcal f(\tilde{\bm z}(t_1))}}=|\bm\kappa_{\mathcal f}(t_1)|\leq\kappa+O(e^{-R^n/3})\leq1\]
(if $\kappa_0\leq\frac14$ and $R\gg0$), which is condition (3) of \cref{d-S-region} for $\tilde{\mathcal S}_\PsiL(R',A',1)$. This completes the proof of \cref{l-large-scale-focusing}.

\section{A proof of the polynomial case of the conjecture}\label{s-polynomial-conjecture}
In this section, we recover the primary result of \cite{Flores_2020}.%
    \footnote{%
        The result in \cite{Flores_2020} is stated for $C^1$ harmonic polynomials (\cite[Theorem 1.1]{Flores_2020}); our result is for continuous harmonic polynomials. The specific incomplete trajectories constructed in \cite{Flores_2020} also differ from ours.
        }%
That is, we show (\cref{t-polynomial}) that, if $\mathcal f(t;z)$ is a polynomial in $z$ of degree $n\geq3$ whose coefficients are continuous functions of $t$, then there is at least one incomplete solution to $\ddot{\bm z}(t)=\overline{\mathcal f'(t;\bm z(t))}$; in terms of the metric, $H(u,\cdot)=-2\Re\p{\mathcal f(u;\cdot)}$ is a harmonic polynomial of degree $\geq3$ (see \cref{ss-reduction}). This does not require any substantial new insights. The point is that the argument becomes short with the tools of the preceding sections.

\begin{theorem}\label{t-polynomial}
    Let $n\geq3$ and let $\mathcal f:(t_-,t_+)\times\C\to\C$ be a function of the form $\mathcal f(t;z)=\sum_{j=0}^nc_j(t)z^j$, where each $c_j:(t_-,t_+)\to\C$ is continuous, and let $t_*$ be so that $c_n(t_*)\neq0$. Then there exists $\delta>0$ so that for $|t_0-t_*|\ll1$ and $R\gg1$, the maximal forward solution $(\bm z(t),\bm w(t))$ to \cref{e-ode} whose initial data $(\bm z(t_0),\bm w(t_0))$ lies in $\mathcal K_R(\delta)$ ceases to exist within time $\lesssim R\ief{n-2}2$, before $t_+$. Here, $\mathcal K_R(\delta)$ denotes the set of all $(z,w)\in\C^2$ which satisfy the following:
        \begin{enumerate}
            \item $|z|=R$
            \item $\Re(\bar zw)\geq0$
            \item $\abs{|w|^2-\Re\p{c_n(t_*)z^n}}\leq\delta|z|^n$.
            \end{enumerate}
\end{theorem}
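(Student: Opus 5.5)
We first normalize, then rescale, and finally compare with the model solution from \cref{c-model}.

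\emph{Normalization.} Choose $\mu\in\C$ with $\mu^n=c_n(t_*)$, and pass to the variables $\mu\bm z$. This changes the equation by a rotation and a dilation, which are symmetries of \cref{e-ode} of the kind used in \cref{l-rescaling-locally-complete}. Combining it with a time rescaling by $|\mu|$, we may therefore assume $c_n(t_*)=1$.

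\emph{Rescaling.} Use the scaling of \cref{c-model}(4): put $\bm z_R(\tau)=R\inv\bm z(t_0+R\ief{n-2}2\tau)$ and $\bm w_R(\tau)=R\ief n2\bm w(t_0+R\ief{n-2}2\tau)$. Then $(\bm z_R,\bm w_R)$ solves \cref{e-ode} with
\[\mathcal f_R(\tau;z)=R^{-n}\mathcal f\p{t_0+R\ief{n-2}2\tau;Rz}=\sum_jR^{j-n}c_j\p{t_0+R\ief{n-2}2\tau}z^j.\]
On the region $|\tau|\leq T$ (suitably extended outside $(t_-,t_+)$, e.g.\ by freezing the coefficients), $\mathcal f_R$ converges to $c_n(t_*)z^n=z^n$ in $\Hsp$ as $R\to\infty$ and $t_0\to t_*$. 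This uses only the continuity of $c_n$ at $t_*$ and the factors $R^{j-n}\to0$ for $j<n$, together with the boundedness of $c_j$ near $t_*$. The initial data become $(\bm z_R(0),\bm w_R(0))\in\mathcal K_1(\delta)$ with $|z|=1$. Under this scaling the claimed time $\lesssim R\ief{n-2}2$ becomes a time $\leq T$ for the rescaled solution. That time is absolute, so it also lies before $t_+$ once $t_0$ is close to $t_*$ and $R$ is large.

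\emph{Blow-up for the model is stable.} For $\mathcal f_0=z^n$ and data in $\mathcal K_1(0)$, \cref{c-model}(1) and (4) apply: the solution is outward-moving, $|\bm z|$ doubles within a time bounded uniformly on the compact set $\mathcal K_1(0)$, and by scaling the successive doublings take geometrically decreasing times, so the total time to blow-up is finite. To transfer this, I will not perturb up to the blow-up time, which is not an open condition. Instead I use a self-similar iteration. Let $M=\{|z|=2\}$. By \cref{c-model}(1), every model solution with data in $\mathcal K_1(0)$ crosses $M$ transversally, and at the crossing it lands in the interior-type set $\{\Re(\bar zw)>0,\ |w|^2=\Re(z^n)\}$. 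By \cref{l-continuity}(c) and compactness, as in the proof of \cref{l-stable-focusing}, there are a neighborhood $\mathcal N$ of $z^n$ and some $\delta>0$ with the following property: every solution for $\tilde{\mathcal f}\in\mathcal N$ with data in $\mathcal K_1(\delta)$ reaches $|z|=2$ within time $T_0$, at a point whose rescaling by $2$ lies again in $\mathcal K_1(\delta)$. Here one has to check that the energy defect $\abs{|w|^2-\Re(z^n)}/|z|^n$ stays $\leq\delta$. This uses continuity of the crossing point, and it is where $\delta$ and $\mathcal N$ are chosen, since the defect is $0$ for the model.

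\emph{Iteration and main obstacle.} Apply the previous step at radii $R,2R,4R,\dots$. At step $k$, the function seen after rescaling by $2^kR$ about the current time is $\mathcal f_{2^kR}$ centered at a later time $t_k$. The crux is to ensure $\mathcal f_{2^kR}\in\mathcal N$ uniformly in $k$. This needs $t_k$ to stay close to $t_*$. Since $t_k-t_0\leq T_0\sum_i(2^iR)\ief{n-2}2\lesssim R\ief{n-2}2$, the times stay close for $R\gg1$. It also needs the lower coefficients to be suppressed by $(2^kR)^{j-n}$, which only improves with $k$. Summing the geometric series shows that $|\bm z|\to\infty$ within total time $\lesssim R\ief{n-2}2$, which is less than $t_+-t_0$. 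This gives the claimed incompleteness.
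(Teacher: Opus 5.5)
Your architecture matches the paper's proof. You normalize to $c_n(t_*)=1$, rescale so that $\mathcal f_{R,t_0}\to z^n$ in $\Hsp$, and use \cref{c-model}(1) with \cref{l-continuity}(c) to get a uniform transversal first crossing of $\{|z|=2\}$. You then iterate the doubling with geometrically summable times. Your explicit check that the centers $t_k$ stay near $t_*$ is, if anything, more careful than the paper's.

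The gap is at the step that closes the iteration: showing that the rescaled crossing point lies in $\mathcal K_1(\delta)$ with the \emph{same} $\delta$. Your justification is continuity of the crossing point, ``since the defect is $0$ for the model''. That uses only the zero-energy data $\mathcal K_1(0)$. From it, continuity gives only this: for every output tolerance $\delta_{\mathrm{out}}$ there is some input tolerance $\delta_{\mathrm{in}}$ (and a neighborhood $\mathcal N$) such that data in $\mathcal K_1(\delta_{\mathrm{in}})$ produce relative defect $<\delta_{\mathrm{out}}$. Nothing forces $\delta_{\mathrm{in}}\geq\delta_{\mathrm{out}}$. Consider a crossing map that fixes the zero-defect set but doubles the relative defect elsewhere. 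It is consistent with everything you invoke, yet it admits no such $\delta$, even for $\tilde{\mathcal f}=z^n$ itself. So you need quantitative information about the model on data of nonzero energy.

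That information is energy conservation for the autonomous model, which the paper invokes at exactly this point. For $\mathcal f=z^n$, the absolute defect $|w|^2-\Re(z^n)$ is constant along solutions. Data in $\mathcal K_1(\delta)$ have absolute defect $\leq\delta$ at $|z|=1$, so they reach $|z|=2$ with absolute defect $\leq\delta$. After the rescaling $(z,w)\mapsto(z/2,2^{-n/2}w)$, the relative defect is therefore $\leq2^{-n}\delta$.

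The correct order of choices is:
\begin{enumerate}
\item First fix $\delta>0$ so small that every model solution with data in the compact set $\mathcal K_1(\delta)$ crosses $\{|z|=2\}$ transversally. This follows by continuity from $\mathcal K_1(0)$, since these compact sets decrease to $\mathcal K_1(0)$.
\item By energy conservation, the model then maps $\mathcal K_1(\delta)$ into the relatively open set $\{|z|=2,\ \Re(\bar zw)>0,\ \abs{|w|^2-\Re(z^n)}<2^n\delta\}\subset\mathcal K_2(\delta)$, with margin $(1-2^{-n})\delta$ in the relative defect.
\item Only then do continuity and compactness produce $\mathcal N$, and hence the thresholds $R\gg1$ and $|t_0-t_*|\ll1$, on which the self-map property persists.
\end{enumerate}
With this inserted, the rest of your argument goes through as written.
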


\begin{proof}
Assume without loss of generality that $t_*=0$ and $c_n(0)=1$ and let $M=\{(z,w)\in\C^2:|z|=2\}$. We claim that for some $\delta>0$, $R\gg0$, and $T\ll1$, the following is true. Let $(\bm z,\bm w)$ be any solution to \cref{e-ode} with $(\bm z(t_0),\bm w(t_0))\in\mathcal K_R(\delta)$, for some $|t_0|\leq T$. Then, for some time $t_1$ with $t_1-t_0\lesssim R\ief{n-2}2$, we have $(\bm z(t_1),\bm w(t_1))\in\mathcal K_{2R}(\delta)$. This suffices to prove \cref{t-polynomial}; if the doubling times decay exponentially, they are summable and the solution cannot be extended to all time. To see why this is true, we will invoke a symmetry of the flow $\mathcal F$:
    \[\mathcal F(\mathcal f_{1,t_0},z,w)=\sigma_R\circ\tau_R\circ\mathcal F(\mathcal f_{R,t_0},\sigma_R\inv(z,w))\]
Here, $\mathcal f_{\lambda,t_0}(t;z):=\lambda\ie n\mathcal f(\lambda\ief{n-2}2t+t_0;\lambda z)$, $\tau_\lambda(\bm z,\bm w)(t)=(\bm z(\lambda\ef{n-2}2t),\bm w(\lambda\ef{n-2}2t))$, and $\sigma_\lambda(\bm z,\bm w)=(\lambda\bm z,\lambda\ef n2\bm w)$. Since $\sigma_R(\mathcal K_1(\delta))=\mathcal K_R(\delta)$, it suffices to show that $(t_M,p_M)\circ\mathcal F(\mathcal f_{R,t_0},\mathcal K_1(\delta))$ is well-defined and lies in a compact subset of $\R\times\mathcal K_2(\delta)$ (independent of $R,t_0$). Note that $\lim_{R\to\infty,t_0\to 0}\mathcal f_{R,t_0}=\mathcal f_\infty$, where $\mathcal f_\infty(t;z):=z^n$. By \cref{c-model}, $p_M\circ\mathcal F(\mathcal f_\infty,\mathcal K_1(0))$ is well-defined and lies in an open set $U\subset\mathcal K_2(\infty)$. Therefore, $(p_M\circ\mathcal F)\inv(U)=:V$ is an open set containing $\{\mathcal f_\infty\}\times\mathcal K_1(0)$. Let $\delta>0$ be small enough so that $\{\mathcal f_\infty\}\times\mathcal K_1(\delta)\subset V$, and note that $p_M\circ\mathcal F(\mathcal f_\infty,\mathcal K_1(\delta))$ is well defined and lies in an open subset $U'\subset\mathcal K_2(\delta)$, by conservation of energy. Finally, note that $(p_M\circ\mathcal F)\inv(U')$ is an open set containing $\{\mathcal f_\infty\}\times\mathcal K_1(\delta)$; since $\lim_{R\to\infty,t_0\to0}\mathcal f_{R,t_0}=\mathcal f_\infty$, it follows that for sufficiently large $R$ and small $|t_0|$ we have $\{\mathcal f_{R,t_0}\}\times\mathcal K_1(\delta)\subset K\subset(p_M\circ\mathcal F)\inv(U')$ for some compact $K$. This completes the proof.
\end{proof}

\printbibliography
\end{document}